\newcommand{\AI}{$\mathsf{SynchPack}\text{-}3$}
\newcommand{\AII}{$\mathsf{SynchPack}\text{-}1$}
\newcommand{\AIII}{$\mathsf{SynchPack}\text{-}2$}
\newcommand{\Tetp}{$\mathsf{Tetris}\text{-}\mathsf{p}$}
\newcommand{\Tetnp}{$\mathsf{Tetris}\text{-}\mathsf{np}$}
\newcommand{\PSRS}{$\mathsf{PSRS}$}
\newcommand{\JSQMW}{$\mathsf{JSQ}\text{-}\mathsf{MW}$}
\DeclareMathOperator*{\argmin}{arg\,min}
\newcommand{\be}{\begin{eqnarray}}
	\newcommand{\ee}{\end{eqnarray}}
\newcommand{\ben}{\begin{eqnarray*}}
	\newcommand{\een}{\end{eqnarray*}}
\newcommand{\bfl}{\begin{flalign*}}
	\newcommand{\efl}{\end{flalign*}}
\newcommand{\dref}[1]{(\ref{#1})}
\newcommand{\expect}[1]{{\mathbb E} \Bigl[ #1\Bigr]}
\newcommand{\calG}{{\mathcal G}}
\newcommand{\calJ}{{\mathcal J}}
\newcommand{\calK}{{\mathcal K}}
\newcommand{\calS}{{\mathcal S}}
\newcommand{\calM}{{\mathcal M}}
\newcommand{\calT}{{\mathcal T}}
\newcommand{\calV}{{\mathcal V}}
\theoremstyle{remark}
\newtheorem{remark}{Remark}
\newcounter{parentalgorithm}
\begin{document}
\title[Scheduling Parallel-Task Jobs Subject to Packing and Placement Constraints]{Scheduling Parallel-Task Jobs Subject to\\ Packing and Placement Constraints}           
                 
\author{Mehrnoosh~Shafiee and~Javad~Ghaderi}
\affiliation{%
  \institution{Department of Electrical Engineering\\
  	Columbia University}
}
\keywords{Scheduling Algorithms, Approximation Algorithms, Data-Parallel Computing, Datacenters}	
\begin{abstract}
\label{abstract} 
Motivated by modern parallel computing applications, we consider the problem of scheduling parallel-task jobs with heterogeneous resource requirements in a cluster of machines. 
Each job consists of a set of tasks that can be processed in parallel, however, the job is considered completed only when all its tasks finish their processing, which we refer to as ``\textit{synchronization}'' constraint. Further, assignment of tasks to machines is subject to ``\textit{placement}'' constraints, i.e., each task can be processed only on a subset of machines, and processing times can also be machine dependent.
Once a task is scheduled on a machine, it  requires a certain amount of resource from that machine for the duration of its processing. 
A machine can process (``\textit{pack}'') multiple tasks at the same time, however the cumulative resource requirement of the tasks should not exceed the machine's capacity.

Our objective is to minimize the weighted average of the jobs' completion times. The problem, subject to synchronization, packing and placement constraints,  is NP-hard, and prior theoretical results only concern much simpler models. For the case that migration of tasks among the placement-feasible machines is allowed, we propose a preemptive algorithm with an approximation ratio of $(6+\epsilon)$. In the special case that only one machine can process each task, we design an algorithm with improved approximation ratio of $4$. Finally, in the case that migrations (and preemptions) are not allowed, we design an algorithm with an approximation ratio of $24$. 
Our algorithms use a combination of linear program relaxation and greedy packing techniques. We present extensive simulation results, using a real traffic trace, that demonstrate that our algorithms yield significant gains over the prior approaches.
\end{abstract}	

\maketitle

\section{Introduction}
\label{introduction}
Modern parallel computing frameworks (e.g. Hadoop~\cite{Hadoop}, Spark~\cite{ApacheSpark}) have enabled large-scale data processing in computing clusters. In such frameworks, the data is typically distributed across a cluster of machines and is processed in multiple stages. In each stage, a set of tasks are executed on the machines, and once all the tasks in the stage finish their processing, the job is finished or moved to the next stage.  
For example, in MapReduce~\cite{dean2008mapreduce}, in the map stage, each map task performs local computation on a data block in a machine and writes the intermediate data to the
disk. In the reduce stage,  each reduce task pulls intermediate data from different maps, merges them, and computes
its output. While the reduce tasks can start pulling data as map tasks finish, the actual computation by the reduce tasks can only start once all the map tasks are done and their data pieces are received. Further, the job is not completed unless all the reduce tasks finish. Similarly, in Spark~\cite{zaharia2016apache,ApacheSpark}, the computation is done in multiple stages. The tasks in a stage can run in parallel, however, the next stage cannot start unless the tasks in the preceding stage(s) are all completed. 

We refer to such constraints as \textit{synchronization} constraints, i.e., a stage is considered completed only when all its tasks finish their processing. 
Such synchronizations could have a significant impact on the jobs' latency in parallel computing clusters~\cite{ananthanarayanan2010reining,zaharia2008improving,kambatla2010asynchronous,zaharia2016apache,cheatham1996bulk}. Intuitively, an efficient scheduler should complete all the (inhomogeneous) tasks of a stage more or less around the same time, while prioritizing the stages of different jobs in an order that minimizes the overall latency in the system.

Another main feature of parallel computing clusters is that jobs can have diverse tasks and processing requirements. This has been further amplified by the increasing
complexity of workloads, i.e., from traditional batch jobs, to
queries, graph processing, streaming, and machine learning jobs, that all need to share the same cluster. The cluster manager (\textit{scheduler}) serves  the tasks of various jobs by reserving their requested resources (e.g. CPU,
memory, etc.). For example, in
Hadoop~\cite{HadoopYarn}, the resource manager reserves the tasks' resource requirements by launching ``\textit{containers}'' in machines. Each container reserves required resources for processing of a task. To improve the overall latency, we therefore need a scheduler that \textit{packs} as many tasks as possible in the machines, while retaining their resource requirements. 

In practice, there are further placement constraints for processing tasks on machines. For example, each task is preferred to be scheduled on one of the machines that has its required data block~\cite{dean2008mapreduce,ananthanarayanan2011disk} (a.k.a. data locality), otherwise processing can slow down due to data transfer. The data block might be stored in multiple machines for robustness and failure considerations. However, if all these machines are highly loaded, the scheduler might actually need to schedule the task in a less loaded machine that does not contain the data. 

Despite the vast literature on scheduling algorithms, the theoretical results are mainly based on simple models where each machine processes one task at a time, each job is a single task, or tasks can be processed on any machine arbitrarily (see Related Work in Section~\ref{relatedwork}). The objective of this paper is to design scheduling algorithms, with \textit{theoretical guarantees}, under the modern features of data-parallel computing clusters, namely,
\begin{itemize}[leftmargin=*]
	\item \textit{packing constraint}: each machine is capable of processing multiple tasks at a time subject to its capacity.
	\item \textit{synchronization constraint}: tasks that belong to the same stage (job) have a collective completion time which is determined by the slowest task in the collection.   
	\item \textit{placement constraint}: task's processing time is machine-dependent and task is typically preferred to be processed on a subset of machines (e.g. where its input data block is located). 
\end{itemize}

For simplicity, we consider one dimension for task resource requirement (e.g. memory). While task resource requirements are in general multi-dimensional (CPU, memory, etc.), it has been
observed that memory is typically the bottleneck resource~\cite{HadoopYarn,nitu2018working}.

Note that the scheduler can only make scheduling decisions for the stages that have been released from various jobs up to that point (i.e., those that their preceding stages have been completed). In our model, we use the terms stage and job interchangeability. Our objective is to minimize the weighted sum of completion times of existing jobs (stages) in the system, where weights can encode different priorities for the jobs (stages). Clearly minimization of the average completion time is a special case of this problem with equal weights. We consider both preemptive and non-preemptive scheduling. In a \textit{non-preemptive} schedule, a task cannot be preempted (and hence cannot be migrated among machines) once it starts processing on a machine until it is completed. In a \textit{preemptive} schedule, a task may be preempted and resumed later in the schedule, and we further consider two cases depending on whether migration of a task among machines is allowed or not. 
\subsection{Related Work}
\label{relatedwork}


Default cluster schedulers in Hadoop~\cite{HadoopCapacity, HadoopFair,zaharia2010delay} focus primarily
on fairness and data locality. 
Such schedulers can make poor scheduling decisions by not packing tasks well together, or having a task running long without enough parallelism with other tasks in the same job.
Several cluster schedulers have been proposed to improve job completion times, e.g.~\cite{grandl2016graphene,rasley2016efficient,schwarzkopf2013omega,jin2011bar,grandl2015multi,verma2015large,liu2016dependency,wang2016maptask,yekkehkhany2018gb}. However, they either do not consider all aspects of packing, synchronization, and data locality, or use heuristics which are not necessarily efficient. 

We highlight four relevant papers~\cite{grandl2015multi,verma2015large,wang2016maptask,yekkehkhany2018gb} here.
Tetris~\cite{grandl2015multi} is a scheduler that assigns scores to tasks based on Best-Fit bin packing and Shortest-Remaining-Time-First (SRPF) heuristic, and gives priority to tasks with higher scores. The data locality is encoded in scores by imposing a remote penalty to penalize use of remote resources. 
Borg~\cite{verma2015large} packs multiple tasks of jobs in machines from high to low priority, modulated by a round-robin scheme within a priority to ensure fairness across jobs. The scheduler considers data locality by assigning tasks to machines that already have the necessary data stored.
The papers~\cite{wang2016maptask} and~\cite{yekkehkhany2018gb} focus on single-task jobs and study the mean delay of tasks under a stochastic model where if a task is scheduled on one of the remote servers that do not have the input data, its average processing time will be larger, by a multiplicative factor, compared to the case that it is processed on a local server that contains the data. They propose algorithms based on Join-the-Shortest-Queue and Max-Weight (JSQ-MW) to incorporate data locality in load balancing. 
This model is generalized in~\cite{yekkehkhany2018gb} to more levels of data locality.
	  However, these models do not consider any task packing in servers or synchronization issue among multiple tasks of the same job.

From a theoretical perspective, our problem of scheduling parallel-task jobs with synchronization, packing, and placement constraints, can be seen as a generalization of the concurrent open shop (COS) problem~\cite{ahmadi2005coordinated}. 
Unlike COS, where each machine processes one task at a time and each task can be processed on a specific machine, in our model a machine can process (pack) multiple tasks simultaneously subject to its capacity, and there are further task placement constraints for assigning tasks to machines. 
Minimizing the weighted sum of completion times in COS, is known to be APX-hard~\cite{garg2007order}, with several $2$-approximation algorithms in~\cite{chen2007supply, garg2007order, leung2007scheduling,mastrolilli2010minimizing, sachdeva2013optimal,bansal2010inapproximability}.
There is also a line of research on the parallel tasks scheduling (PTS) problem~\cite{garey1975bounds}. In PTS, each job is only a single task that requires a certain amount of resource for its processing time, and can be served by any machine subject to its capacity. This differs from our model where each job has multiple tasks, each task can be served by a set of machines, and the job's completion time is determined by its last task. Minimizing the weighted sum of completion times in the PTS is also NP-complete in the strong sense~\cite{blazewicz1983scheduling}. 
In the case of a single machine, the non-preemptive algorithm in~\cite{schwiegelshohn2004preemptive} can achieve approximation ratio of $7.11$, and the preemptive algorithm in~\cite{schwiegelshohn2004preemptive}, called \textit{PSRS}, can achieve approximation ratio of $2.37$. 
 In the case of multiple machines, there is only one result in the literature which is a $14.85$-approximation non-preemptive algorithm~\cite{remy2004resource}. 

%

We emphasize that our setting of parallel-task jobs, subject to synchronization, packing, and placement constraints, is significantly more challenging than the COS and PTS problems, and 
algorithms from these problems \textit{cannot} be applied to our setting. To the best of our knowledge, this is the first paper that provides constant-approximation algorithms for this problem subject to synchronization, packing, and placement constraints,
\subsection{Main Contributions}
\label{technique}
We briefly summarize our main results and describe our techniques below. 
We propose scheduling algorithms for three cases:
\begin{itemize}[leftmargin=*]
\item \textbf{Task Migration Allowed.}
When migration is allowed, a task might be preempted several times and resume possibly on a different machine within its placement-feasible set. Our algorithm in this case is based on greedy scheduling of task fractions (fraction of processing time of each task) on each machine, subject to capacity and placement constraints. The task fractions are found by solving a relaxed linear program (LP), which divides the time horizon into geometrically-increasing time intervals, and uses \textit{interval-indexed variables} to indicate  what fraction of each task is served at which interval on each machine. 
We show that our scheduling algorithm has an approximation ratio better than $(6+\epsilon)$, for any $\epsilon>0$.

\item \textbf{Task Migration Not Allowed.}
When migration is not allowed, the schedule can be non-preemptive, or preemptive while all preemptions occur on the same machine. In this case, our algorithm is based on mapping tasks to proper time intervals on the machines. We utilize the interval-indexed variables to form a relaxed LP.
We then utilize the LP's optimal solution to construct a \textit{weighted bipartite graph} representing tasks on one side and machine-intervals on the other side, and fractions of tasks completed in machine-intervals as weighted edges. We then use an integral matching in this graph to construct a mapping of tasks to machine-intervals.
Finally, the tasks mapped to intervals of the same machine are packed in order and non-preemptively by using a greedy policy. We prove that this non-preemptive algorithm has an approximation ratio better than $24$. Further, we show that the algorithm's solution is also a 24-approximation for the case that  preemption on the same machine is allowed.

\item  \textbf{Preemption and Single-Machine Placement Set.} 
When preemption is allowed, and there is a specific machine for each task, we propose an algorithm with an improved approximation ratio of $4$.
The algorithm first finds a proper ordering of jobs, by solving a relaxed LP of our scheduling problem.
Then, for each machine, it lists its tasks, with respect to the obtained ordering of jobs, and apply a simple greedy policy to pack tasks in the machine subject to its capacity. The methods of LP relaxation and list scheduling have been used in scheduling literature; however, the application and analysis of such techniques in presence of packing, placement, and synchronization is very different.

\item \textbf{Empirical Evaluations.}  We evaluate the performance of our preemptive and non-preemptive algorithms compared with the prior approaches using a Google traffic trace~\cite{clusterdata:Wilkes2011}. We also present online versions of our algorithms that are suitable for handling dynamic job arrivals. Our $4-$approximation preemptive algorithm outperforms PSRS~\cite{schwiegelshohn2004preemptive} and Tetris~\cite{grandl2015multi} by up to $69\%$ and $79\%$, respectively, when jobs' weights are determined using their priority information in the data set. 
Further, our non-preemptive algorithm outperforms JSQ-MW~\cite{wang2016maptask} and Tetris~\cite{grandl2015multi} by up to $81\%$ and $175\%$, respectively, under the same placement constraints. 
\end{itemize}
\section{Formal Problem Statement}
\label{ProbState}
\textbf{Cluster and Job Model.} Consider a collection of machines $\calM = \{1, . . . , M\}$, where machine $i$ has capacity $m_i > 0$ on its available resource. We use $\calJ =\{1, . . . , N\}$ to denote the set of existing jobs (stages) in the system that need to be served by the machines.
Each job $j \in \calJ$ consists of a set of tasks $\calK_j$, where we use $(k,j)$ to denote task $k$ of job $j$, $k \in \calK_j$. Task $(k,j)$ requires a specific amount $a_{kj}$ of resource for the duration of its processing.
 
\textbf{Task Processing and Placement Constraint.} Each task $(k,j)$ can be processed on a machine from a specific set of machines $\calM_{kj} \subseteq \calM$. We refer to $\calM_{kj}$ as the \textit{placement set} of task $(k,j)$. For generality, we let $p_{kj}^i$  denote the processing time of task $(k,j)$ on machine $i \in \calM_{kj}$. 
Such placement constraints can model  data locality. For example, we can set $\calM_{kj}$ to be the set of machines that have task $(k,j)$'s data, and $p_{kj}^i=p_{kj}$, $i \in \calM_{kj}$. 
Or, we can consider $\calM_{kj}$ to be as large as $\calM$, and incorporate the data transfer cost as a penalty in the processing time on machines that do not have the task's data.   

Throughout the paper, we refer to $a_{kj}$ as size or resource requirement of task $(k,j)$, and to $p_{kj}^i$ as its length, duration, or processing time on machine $i$. We also define the volume of task $(k,j)$ on machine $i$ as $v_{kj}^i=a_{kj} p_{kj}^i$.
Without loss of generality, we assume processing times are nonnegative integers and duration of the smallest task is at least one. This can be done by defining a proper time unit (slot) and representing the task durations using integer multiples of this unit. 

\textbf{Synchronization Constraint.} Tasks can be processed in parallel on their corresponding machines; however, a job is considered completed only when all of its tasks finish. Hence the completion time of job $j$, denoted by $C_j$, satisfies 
\be \label{eq:jobcompletiontime}
C_j=\max_{k \in \calK_j} C_{kj},
\ee where $C_{kj}$ is the completion time of its task $(k,j)$. 

\textbf{Packing Constraint.} The sum of resource requirements of tasks running in machine $i$ should not exceed its capacity.

Let $\mathds{1}(i \in \calM_{kj})$ be $1$ if $i \in \calM_{kj}$, and $0$ otherwise. Define the constant
\begin{eqnarray}
\label{tupb}
T=\max_{i \in \calM} \sum_{j \in \calJ} \sum_{k \in \calK_j} p_{k j}^i \mathds{1}(i \in \calM_{kj}),
\end{eqnarray}
which is clearly an upper bound on the time required for processing all the jobs.
We define 0-1 variables $X_{kj}^i(t)$, $i \in \calM$, $j \in \calJ$, $k \in \calK_j$, $t\leq T$, where $X_{kj}^i(t)=1$ if task $(k,j)$ is served at time slot $t$ on machine $i$, and $0$ otherwise. 
We also make the following definition.
\begin{definition}[Height of Machine $i$ at time $t$]\label{machineheight}
The height of machine $i$ at time $t$, denoted by $h_i(t)$, is the sum of resource requirements of the tasks running at time $t$ in machine $i$, i.e.,
	\begin{eqnarray}
	h_i(t)= \sum_{j \in \calJ, k \in \calK_j} a_{kj} X_{kj}^i(t).
	\end{eqnarray}
\end{definition}
Given these definitions, a valid schedule $X_{kj}^i(t)\in \{0,1\}$, $i \in \calM$, $j \in \calJ$, $k \in \calK_j$, $0 < t \leq T$, must satisfy the following three constraints:
\begin{itemize}[leftmargin=*]
\item[(i)] \textit{Packing}: the sum of resource requirements of the tasks running in machine $i$ at time $t$ (i.e., tasks with $X_{kj}^i(t)=1$) should not exceed machine $i$'s capacity, i.e.,  $h_i(t) \leq m_i$, $\forall t \leq T$, $\forall i\in \calM$.
\item[(ii)] \textit{Placement}: each task at each time can get processed on at most a single machine selected from its feasible placement set, i.e., $\sum_{i \in \calM_{kj}}X_{kj}^i(t) \leq 1$, and $X_{kj}^i(t)=0$ if $i \notin \calM_{kj}$. 
\item[(iii)] \textit{Processing}: each task must be processed completely. Noting that ${X_{kj}^i(t)}/{p_{kj}^i}$ is the fraction of task $(k,j)$ completed on machine $i$ in time slot $t$, we need $\sum_{i \in \calM_{kj}}\sum_{t=1}^T X_{kj}^i(t)/ {p_{kj}^i}= 1$.
\end{itemize}

\textbf{Preemption and Migration.} We consider three classes of scheduling policies. In a non-preemptive policy, a task cannot be preempted (and hence cannot be migrated among machines) once it starts processing on its corresponding machine until it is completed. In a preemptive policy, a task may be preempted and resumed several times in the schedule, and we can further consider two subcases depending on whether migration of a task among machines is allowed or not. Note that when migration is not allowed, the scheduler must assign each task $(k,j)$ to one machine $i \in \calM_{kj}$ on which the task is (preemptively or non-preemptively) processed until completion.

\textbf{Main Objective.} Given positive weights $w_j$, $j \in \calJ$, our goal is to find valid non-preemptive and preemptive (under with and without migrations) schedules of jobs (their tasks) in machines, so as to minimize the sum of weighted completion times of jobs, i.e., 
\be
\mbox{minimize} \sum_{j \in \calJ} w_j C_j.
\ee
The weights can capture different priorities for jobs. Clearly the case of equal weights reduces the problem to minimization of the average completion time. 
\section{Scheduling When Migration is Allowed}
\label{preemptivemigration}
We first consider the case that migration of tasks among machines is allowed. In this case, we propose a preemptive algorithm, called {\AII}, with approximation ratio $(6+\epsilon)$ for any $\epsilon > 0$. We will use the construction ideas and analysis arguments for this algorithm to construct our preemptive and non-preemptive algorithms when migration is prohibited in Section~\ref{nomigration}.

In order to describe {\AII}, we first present a relaxed linear program. 
We will utilize the optimal solution to this LP to schedule tasks in a preemptive fashion. 
\subsection{\textbf{Relaxed Linear Program (LP1)}} Recall that without loss of generality, the processing times of tasks are assumed to be integers (multiples of a time unit) and therefore $C_j \geq p_{kj}^i \geq 1$ for all $j \in \calJ$, $k \in \calK_j$, and $i \in \calM_{kj}$. We use interval indexed variables using geometrically increasing intervals (see, e.g.,~\cite{qiu2015minimizing,queyranne20022+}) to formulate a linear program for our  problem.

Let $\epsilon>0$ be a constant. We choose $L$ to be the smallest integer such that $(1+\epsilon)^{L} \geq T$ (recall $T$ in~\dref{tupb}). Subsequently define
\be
\label{partition}
d_l= (1+\epsilon)^{l}, \mbox{ for } l=0,1,\cdots,L,
\ee
and define $d_{-1}=0$. We partition the time horizon into time intervals $(d_{l-1},d_l]$, $l=0,...,L$. 
Note that the length of the $l$-th interval, denoted by $\Delta_l$, is 
\be
\label{deltal}
\begin{aligned}
\Delta_0=1,\ \ \Delta_l=\epsilon (1+\epsilon)^{l-1} \ \ \ \forall l \geq 1.
\end{aligned}
\ee

We define $z_{kj}^{il}$ to be the fraction of task $(k,j)$ (\textit{fraction of its required processing time}) that is processed in interval $l$ on machine $i \in \calM_{kj}$. To measure completion time of job $j$, we define a variable $x_{jl}$ for each interval $l$ and job $j$ such that, $\forall j \in \calJ$:
\begin{subequations}
\begin{align}
\label{firstrel}
&\sum_{l^\prime=0}^l x_{jl^\prime} \leq {\sum_{l^\prime=0}^l \sum_{i\in \calM_{kj}} z_{kj}^{il^\prime}}, \ k \in \calK_j, \ l=0, \dots, L\\
\label{secrel}
&\sum_{l=0}^L x_{jl}=1,\ \ 
x_{jl} \in \{0,1\},  \ \  l=0, \cdots, L.
\end{align}
\end{subequations}

Note that \dref{secrel} 
implies that only one of the variables $\{x_{jl}\}_{l=0}^L$ can be nonzero (equal to $1$).  \dref{firstrel} implies that $x_{jl}$ can be $1$ only for one of the intervals  $l \geq l^\star$ where $l^\star$ is the interval in which the last task of job $j$ finishes its processing. Define, 
\be \label{eq:lowcj}
&C_j=\sum_{l=0}^L d_{l-1} x_{jl} \ \ j \in \calJ.
\ee
If we can guarantee that $x_{jl^\star}=1$ for $l^\star$ as defined above, then $C_j$ will be equal to the starting point $d_{l^\star-1}$ of that interval, and the actual completion time of job $j$ will be bounded above by $d_{l^\star}=(1+\epsilon)C_j$, thus implying that $C_j$ is a reasonable approximation for the actual completion time of job $j$. This can be done by minimizing the objective function in the following linear program (LP1):
 \begin{subequations}
	\label{RLP2}
	\begin{align}
	\label{LPobj2}
	 \min \ \ &\sum_{j \in \calJ} w_j C_j  \ \ \ \ \ \mathbf{(LP1)}\\
	\label{taskdemand}
	& \sum_{l=0}^L \sum_{i \in \calM_{kj}} z_{kj}^{il}=1, \ \ k \in \calK_j, \ j \in \calJ \\
	\label{frac}
	& \sum_{l^\prime=0}^{l}\sum_{i \in \calM_{kj}} z_{kj}^{il^\prime}p_{kj}^i \leq d_l, \ \  k \in \calK_j, \ j \in \calJ, \ l=0, \dots, L\\
	\label{intcap}
	& \sum_{l^\prime=0}^{l}\sum_{\substack{(k,j): i \in \calM_{kj}}} z_{kj}^{il^\prime} p_{kj}^i a_{kj} \leq m_i d_l, \ \ i \in \calM, \ l=0, \dots, L\\
	\label{pos}
	& z_{kj}^{il} \geq 0, \ \  k \in \calK_j, \ j \in \calJ, \ i \in \calM_{kj}, \ l=0, \dots, L\\
	\label{xdef}
	& \sum_{l^\prime=0}^l x_{jl^\prime} \leq {\sum_{l^\prime=0}^l \sum_{i\in \calM_{kj}} z_{kj}^{il^\prime}}, \ k \in \calK_j, \ j \in \calJ, \ l=0, \dots, L\\
		\label{cdef}
	& C_j=\sum_{l=0}^L d_{l-1} x_{jl}, \ \ j \in \calJ \\
	\label{jobdemand}
	& \sum_{l=0}^L x_{jl}=1,\ \ x_{jl} \geq 0,  \ \  l=0, \dots, L, \ j \in \calJ
	\end{align}
\end{subequations}

Constraint~\dref{taskdemand} means that each task must be processed completely. 
\dref{frac} is because during the first $l$ intervals, a task cannot be processed for more than $d_l$, the end point of interval $l$, which itself is due to requirement (ii) of Section~\ref{ProbState}. \dref{intcap} bounds the total volume of the tasks processed by any machine $i$ in the first $l$ intervals by $d_l \times m_i$. \dref{pos} indicates that $z$ variables have to be nonnegative. 

Constraints \dref{xdef}, \dref{jobdemand}, \dref{cdef} are the relaxed version of \dref{firstrel}, \dref{secrel}, \dref{eq:lowcj}, 
respectively, 
where the integral constraint in \dref{secrel} has been relaxed to \dref{jobdemand}. 
To give more insight, note that \dref{xdef} has the interpretation of keeping track of the fraction of the job processed by the end of each time interval, which is bounded from above by the fraction of any of its tasks processed by the end of that time interval. We should finish processing of all jobs as indicated by \dref{jobdemand}. Also \dref{cdef} computes a relaxation of the job completion time $C_j$, as a convex combination of the intervals' left points, with coefficients $x_{jl}$. Note that \dref{xdef} along with  \dref{jobdemand} implies the fact that each task is processed completely.

\subsection{\textbf{Scheduling Algorithm: \boldmath{\AII}}} \label{sec:secondpremeptive}

In the following, a \textit{task fraction} $(k,j,i,l)$ of task $(k,j)$ corresponding to interval $l$, is a task with size $a_{kj}$ and duration $z_{kj}^{il} p_{kj}^i$ that needs to be processed on machine $i$.

The {\AII} (\textit{Synchronized Packing}-$1$) algorithm has three main steps:

{\bf Step 1: Solve (LP1).} We first solve (LP1) and obtain the optimal solution of  $\{{z}_{kj}^{il}\}$ which we denote by $\{\tilde{z}_{kj}^{il}\}$.

{\bf Step 2: Pack task fractions greedily and construct schedule $\calS$.} To schedule task fractions, we use a greedy list scheduling policy as follows:

Consider an ordered list of the task fractions such that task fractions corresponding to interval $l$ appear before the task fractions corresponding to interval $l^\prime$, if $l < l^\prime$. Task fractions within each interval and corresponding to different machines are ordered arbitrarily. The algorithm scans the list starting from the first task fraction, and schedules  task fraction $(k,j,i,l)$ on machine $i$, if some fraction of task $(k,j)$ is not already scheduled on some other machine at that time, and machine $i$ has sufficient capacity, i.e., $h_i(t)+a_{kj} \leq m_i$ (recall $h_i(t)$ in Definition~\ref{machineheight}). It then moves to the next task fraction in the list, repeats the same procedure, and so on. Upon completion of a task fraction, it preempts the task fractions corresponding to higher indexed intervals on all the machines if there is some unscheduled task fraction of a lower-indexed interval in the list. It then removes the completed task fraction from the list, updates the remaining processing times of the task fractions in the list, and start scheduling the updated list. This greedy list scheduling algorithm schedules task fractions in a preemptive fashion. 

We refer to the constructed schedule as ${\calS}$.

{\bf Step 3: Apply Slow Motion and construct schedule $\bar{\calS}$.} Unfortunately, we cannot bound the value of objective function~\dref{LPobj2} for schedule $\calS$ since completion times of some jobs in ${\calS}$ can be very long compared to the completion times returned by (LP1). Therefore, we construct a new feasible schedule $\bar{\calS}$, by stretching $\calS$, for which we can bound the value of its objective function. This method is referred to as \textit{Slow-Motion} technique~\cite{schulz1997random}.
Let $\tilde{Z}_{kj}^{i}=\sum_{l=0}^{L} \tilde{z}_{kj}^{il}$ denote the total fraction of task $(k,j)$ that is scheduled in machine $i$ according to the optimal solution to (LP1). We refer to $\tilde{Z}_{kj}^{i}$ as the \emph{total task fraction} of task $(k,j)$ on machine $i$.
The Slow Motion works by choosing a parameter $\lambda \in (0,1]$ randomly drawn according to the probability density function $f(\lambda) = 2\lambda$. It then stretches schedule $\calS$ by a factor $1/\lambda$. If a task is scheduled in $\calS$ during an interval $[\tau_1, \tau_2)$, the same task is scheduled in $\bar{\calS}$ during $[\tau_1/\lambda, \tau_2/\lambda)$ and \emph{the machine is left idle if it has already processed its total task fraction $\tilde{Z}_{kj}^{i}$ completely}.
We may also shift back future tasks' schedules as far as the machine capacity allows and placement constraint is respecred.

 A pseudocode for {\AII} can be found in Appendix~\ref{pseudocode1}. The obtained algorithm is a randomized algorithm; however, we will show in Appendix~\ref{derand} how we can de-randomize it to get a deterministic algorithm.
\subsection{\textbf{Performance Guarantee}} \label{pg} 
We now analyze the performance of \AII. The result is stated by the following proposition.
\begin{theorem}
	\label{prop1}
	For any $\epsilon >0$, the sum of weighted completion times of jobs, for the problem of parallel-task job scheduling with packing and placement constraints, under \AII, is at most $(6+\epsilon) \times \textit{OPT}$.
\end{theorem}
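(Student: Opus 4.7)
The plan is to establish
\[\EE\Bigl[\sum_{j\in\calJ} w_j \bar C_j\Bigr]\le (6+\epsilon)\,\textit{OPT}_{LP}\le (6+\epsilon)\,\textit{OPT},\]
where $\textit{OPT}_{LP}$ denotes the optimum of (LP1). The second inequality follows from the standard observation that any integral schedule with completion times $C_j^\star$ induces a feasible (LP1) solution: set $z_{kj}^{il}$ to the fraction of task $(k,j)$ processed on machine $i$ in interval $(d_{l-1},d_l]$ and $x_{jl_j^\star}=1$ for the interval $l_j^\star$ containing $C_j^\star$, which yields $\sum_l d_{l-1}x_{jl}=d_{l_j^\star-1}\le C_j^\star$. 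The target constant will decompose as $6(1+\epsilon)=2\cdot 3\cdot(1+\epsilon)$, and the LP's $\epsilon$ will be rescaled at the end.

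The technical workhorse is a greedy-packing lemma for $\calS$: for every task fraction $(k,j,i,l)$ with $\tilde z_{kj}^{il}>0$, its completion time $\tau_{kj}^{il}$ in $\calS$ satisfies $\tau_{kj}^{il}\le 3\,d_l$. I would prove this by partitioning the time $[0,\tau_{kj}^{il}]$ on machine $i$ into three disjoint categories: (i) slots where $h_i(t)\ge m_i/2$, whose total length is at most $2d_l$ because constraint \eqref{intcap} caps the volume of task fractions in intervals $0,\ldots,l$ routed to machine $i$ by $m_id_l$; (ii) slots where $h_i(t)<m_i/2$ but every list-ready task fraction fails to fit due to its size, which are absorbed into the length of a single blocking task bounded by $d_l$ via \eqref{frac}; and (iii) slots where $(k,j,i,l)$ itself is held back because $(k,j)$ is being processed on some other machine $i'\in\calM_{kj}$, which are charged to the cross-machine processing of $(k,j)$ through interval $l$, again at most $d_l$ via \eqref{frac}. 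Summing the three categories delivers the bound.

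The slow-motion step converts this per-interval bound into a per-job expected bound anchored on $C_j^{LP}$. After stretching $\calS$ by $1/\lambda$ and applying the shift-back compaction, the completion time of task $(k,j)$ in $\bar\calS$ is at most $3\,d_{l^\star(\lambda)}/\lambda$, where $l^\star(\lambda):=\min\{l:\sum_{l'\le l}x_{jl'}^{LP}\ge\lambda\}$. Integrating against the density $f(\lambda)=2\lambda$ over the pieces $\lambda\in(c_{l-1},c_l]$ with $c_l:=\sum_{l'\le l}x_{jl'}^{LP}$ gives
\[\EE_\lambda\bigl[d_{l^\star(\lambda)}/\lambda\bigr] = \sum_l \int_{c_{l-1}}^{c_l} 2\,d_l\,d\lambda = \sum_l 2\,d_l\,x_{jl}^{LP} = 2(1+\epsilon)\sum_l d_{l-1}\,x_{jl}^{LP} = 2(1+\epsilon)\,C_j^{LP},\]
so $\EE[\bar C_j]\le 6(1+\epsilon)\,C_j^{LP}$. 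Summing with the $w_j$'s and rescaling $\epsilon$ yields the claimed $(6+\epsilon)$ bound; the randomization is removable via the derandomization of Appendix~\ref{derand}.

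The main obstacle is the greedy-packing lemma under the combined migration and placement constraints. Unlike the single-machine analog, a machine can idle for two genuinely distinct reasons—bin-packing waste that is chargeable only to \eqref{intcap} and task-conflict blocking that is chargeable only to \eqref{frac}—and the accounting must carefully avoid double-counting across machines or across task fractions of the same job. A secondary subtlety is justifying the upper bound $3\,d_{l^\star(\lambda)}/\lambda$ in the slow-motion step: a naive uniform stretch gives only the weaker $3\,d_{L_{kj}}/\lambda$ where $L_{kj}:=\max\{l:\exists\,i,\tilde z_{kj}^{il}>0\}$, and it is the shift-back compaction combined with the lower-interval priority of $\calS$ that permits the sharper, $\lambda$-dependent bound needed for the averaging against $f(\lambda)$ to produce $C_j^{LP}$ rather than $d_{L_{kj}}$.
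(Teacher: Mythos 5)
Your overall architecture is the same as the paper's: bound OPT below by (LP1), show that the greedy schedule $\calS$ completes everything charged to intervals $0,\dots,l$ by time $3d_l$, and then run Slow-Motion against the density $2\lambda$ to convert $3d_{l^\star(\lambda)}/\lambda$ into $6(1+\epsilon)\tilde C_j$. Your LP lower-bound argument and your integration $\sum_l 2d_l x_{jl} = 2(1+\epsilon)\sum_l d_{l-1}x_{jl}$ are both correct and match Lemmas~\ref{lp3bound}, \ref{intcjalpha} and \ref{smbound}. (One small point: the inequality $\bar C_j^\lambda \le C_j(\lambda)/\lambda$ needs only the stretching itself, not the shift-back compaction; what you do need, and correctly use, is the lower-interval priority of $\calS$ so that $C_j(\lambda)\le \tau_{l^\star(\lambda)}$.)

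The genuine gap is in your greedy-packing lemma. Your three disjoint categories are bounded by $2d_l$ (high-height slots, via \dref{intcap}), $d_l$ (low-height blocking by a single large task, via \dref{frac}), and $d_l$ (cross-machine processing of $(k,j)$, via \dref{frac}); summing them gives $4d_l$, not the $3d_l$ you assert, so the argument as written only delivers an $(8+\epsilon)$-approximation. The additive decomposition is lossy precisely because it pays separately for the volume-limited slots and for the conflict/blocking slots. The paper's Lemma~\ref{tripledinterval} avoids this by a pairing argument rather than a partition: it builds a bipartite graph whose nodes are the time slots in $(T_l-D_l,T_l]$ and $(0,T_l-D_l]$, with an edge when the two heights sum to at least $m_{i^\star}$, and observes that the \emph{absence} of an edge between $s$ and an earlier $t$ forces every task running at $s$ (on any machine) to also be running at $t$ — so either Hall's condition fails and a single task of length at most $D_l$ (by \dref{frac}) spans the deficiency, giving $T_l<3D_l$, or a perfect matching exists and the paired heights exhaust the volume budget $m_iD_l$ from \dref{intcap}, forcing $T_l=2D_l$. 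In other words, the volume budget is made to pay for the low-height slots and the high-height slots simultaneously through the matching, which is exactly the $d_l$ your accounting loses. To recover the stated $(6+\epsilon)$ bound you would need to replace your additive charging with such a pairing (or otherwise show that your categories (i) and (ii) jointly cost at most $2d_l$), which is the one nontrivial idea in the paper's proof that your sketch does not contain.
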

The rest of the section is devoted to the proof of Theorem~\ref{prop1}. We use $\tilde{C}_j$ to denote the optimal solution to (LP1) for completion time of job $j \in \calJ$. The optimal objective value of (LP1) is a lower bound on the optimal value of our scheduling problem as stated in the following lemma whose proof is provided in Appendix~\ref{Proof2-1}.
\begin{lemma}
	\label{lp3bound}
	$\sum_{j=1}^N w_j \tilde{C}_j \leq \sum_{j=1}^N w_j C^\star_j = \text{OPT}$.
\end{lemma}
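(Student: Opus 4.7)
The plan is to exhibit an explicit feasible solution to (LP1) whose objective value is at most $\sum_{j} w_j C^\star_j$; since $\sum_j w_j \tilde{C}_j$ is by definition the LP optimum, the lemma will follow immediately.

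First I would fix an optimal schedule of the original problem attaining $\sum_j w_j C^\star_j$ (allowing the most general form, i.e.\ preemptive with migration). For each task $(k,j)$, machine $i \in \calM_{kj}$, and interval index $l \in \{0,\dots,L\}$, I would define
\[
\hat{z}_{kj}^{il} \;=\; \frac{1}{p_{kj}^i}\cdot(\text{amount of time $(k,j)$ is processed on $i$ during $(d_{l-1},d_l]$ in the optimal schedule}).
\]
For each job $j$, I would let $l^\star_j$ be the unique index with $C^\star_j \in (d_{l^\star_j-1},d_{l^\star_j}]$, set $\hat{x}_{jl^\star_j}=1$ and $\hat{x}_{jl}=0$ otherwise, and put $\hat{C}_j \;=\; \sum_l d_{l-1}\hat{x}_{jl} \;=\; d_{l^\star_j-1}$.

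Next I would verify each LP constraint on this construction. Constraints \eqref{taskdemand} and \eqref{pos} are immediate from the fact that the optimal schedule completes every task. Constraint \eqref{frac} holds because the cumulative processing time a task receives by the end of interval $l$ cannot exceed the elapsed real time $d_l$. Constraint \eqref{intcap} follows by integrating the pointwise packing constraint $h_i(t)\le m_i$ over $t\in(0,d_l]$, giving total volume processed on machine $i$ at most $m_i d_l$. Constraint \eqref{jobdemand} is immediate from the single-indicator construction. The only nontrivial check is \eqref{xdef}: for $l<l^\star_j$ the left-hand side is $0$ and the inequality is trivial; for $l\ge l^\star_j$ every task of $j$ is fully processed by time $d_l\ge C^\star_j$ (this is where the synchronization constraint \eqref{eq:jobcompletiontime} enters), so the right-hand side equals $1$ and the inequality holds.

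Finally, since $d_{l^\star_j-1}<C^\star_j$ for each $j$, the constructed feasible solution satisfies $\sum_j w_j \hat{C}_j \le \sum_j w_j C^\star_j = \text{OPT}$, and therefore the LP optimum $\sum_j w_j \tilde{C}_j$ is no larger. This is a standard LP-relaxation lower-bound argument; the main thing requiring care is the verification of \eqref{xdef} (which is precisely where synchronization enters the bound), together with the harmless boundary case $l^\star_j=0$ where $d_{-1}=0$ gives $\hat{C}_j=0 \le C^\star_j$, consistent with $C^\star_j\ge 1=d_0$ from the integrality assumption on processing times.
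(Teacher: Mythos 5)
Your proposal is correct and follows essentially the same route as the paper's proof: construct a feasible (LP1) solution from an optimal schedule by reading off the processed fractions $z_{kj}^{il}$ per interval and placing the single indicator $x_{jl}$ at the interval containing $C^\star_j$, so that $\hat{C}_j = d_{l^\star_j-1} \le C^\star_j$. The only difference is that you explicitly verify each constraint (in particular \dref{xdef}, where synchronization enters), whereas the paper dismisses this as ``easy to see''; your added care is welcome but does not change the argument.
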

Note that Constraint~\dref{intcap} bounds the volume of all the task fractions corresponding to the first $l$ intervals by $d_l \times m_i$. However, the (LP1)'s solution does not directly provide a feasible schedule as task fractions of the same task on different machines might overlap during the same interval and machines' capacity constraints might be also violated. Next, we show under the greedy list scheduling policy (Step 2 in {\AII}), the completion time of task fraction $(k,j,i,l)$ is bounded from above by $3 \times d_l$, i.e., we need a factor $3$ to guarantee a feasible schedule.  
\begin{lemma}
	\label{tripledinterval}
Let $\tau_l$ denote the time that all the task fractions $(k,j,i,l^\prime)$, for $l^\prime \leq l$, are completed in schedule $\calS$.	Then, $\tau_l \leq 3d_l$.
\end{lemma}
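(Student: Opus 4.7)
The plan is to identify the last task fraction with index at most $l$ to finish in $\calS$, call it $(k^*, j^*, i^*, l^*)$, and analyze $\tau_l$ by decomposing $[0, \tau_l]$ according to the status of this fraction at each instant $t$. Namely, either (i) $(k^*, j^*, i^*, l^*)$ is being processed on $i^*$ (call this set $A$); or (ii) it is pending and $h_{i^*}(t) + a_{k^*j^*} > m_{i^*}$ (set $B$, ``machine full''); or (iii) it is pending, $i^*$ has room, but another fraction of task $(k^*, j^*)$ is running on a different machine (set $C$, ``task blocked elsewhere''). The coverage $[0, \tau_l] \subseteq A \cup B \cup C$ is enforced by the greedy rule: were none of (ii)--(iii) to hold, $(k^*, j^*, i^*, l^*)$ would have been scheduled at the last completion event.

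The first step is to bound $|A| + |C| \leq d_l$. The duration of $A$ is $|A| = \tilde{z}_{k^*j^*}^{i^*l^*} p_{k^*j^*}^{i^*}$. For $|C|$, I plan to exploit the fact that the greedy scan considers fractions in non-decreasing interval index and, upon every completion event, preempts any higher-indexed running fraction whenever a lower-indexed one is pending. Consequently, any fraction of $(k^*, j^*)$ contributing to $|C|$ must have index at most $l^*$: a higher-indexed fraction of $(k^*, j^*)$ running on another machine while $i^*$ has room for $(k^*, j^*, i^*, l^*)$ would have been preempted in favor of $(k^*, j^*, i^*, l^*)$. Applying constraint~\dref{frac} at level $l^*$ to task $(k^*, j^*)$ bounds the sum of the processing times of all of its index-$\leq l^*$ fractions by $d_{l^*}$, and hence $|A| + |C| \leq d_{l^*} \leq d_l$.

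The main obstacle is the bound $|B| \leq 2 d_l$. The crux is the refined pointwise claim that during $B$ the portion of $h_{i^*}(t)$ coming from index-$\leq l^*$ fractions \emph{alone} already exceeds $m_{i^*} - a_{k^*j^*}$, because those are exactly the fractions the scan considers before $(k^*, j^*, i^*, l^*)$, and their joint occupancy is what forced $(k^*, j^*, i^*, l^*)$ to be skipped at the latest re-scan. Integrating this pointwise lower bound over $B$, adding the contribution $a_{k^*j^*}|A|$ of $(k^*, j^*, i^*, l^*)$ itself to $h^{\leq l^*}_{i^*}$ during $A$, and invoking~\dref{intcap} at level $l^*$ to bound $\int_0^{\tau_l} h^{\leq l^*}_{i^*}(t)\,dt \leq m_{i^*} d_{l^*}$ yields the mixed inequality
\[
a_{k^*j^*}\,|A| + (m_{i^*} - a_{k^*j^*})\,|B| \;\leq\; m_{i^*}\, d_l.
\]
A short case analysis on the size ratio $a_{k^*j^*}/m_{i^*}$, together with $|A| \leq d_l$, then extracts $|B| \leq 2 d_l$. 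Combining everything gives $\tau_l \leq (|A| + |C|) + |B| \leq d_l + 2 d_l = 3 d_l$.

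The hardest part will be turning the algorithm's combinatorial scan/preemption rule into the pointwise capacity statement that underlies the bound on $|B|$, and then cleanly handling the regime where $a_{k^*j^*}$ is close to $m_{i^*}$ in the final inequality; once these are secure, the rest of the argument reduces to routine applications of~\dref{frac} and~\dref{intcap}.
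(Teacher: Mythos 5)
Your decomposition of $[0,\tau_l]$ into $A$ (the critical fraction running), $B$ (blocked by capacity on $i^*$) and $C$ (blocked because another fraction of the same task runs elsewhere) is a legitimate alternative skeleton, and the bound $|A|+|C|\leq d_l$ via the preemption rule and Constraint~\dref{frac} is sound. The gap is in the step $|B|\leq 2d_l$. Your mixed inequality $a_{k^*j^*}|A|+(m_{i^*}-a_{k^*j^*})|B|\leq m_{i^*}d_l$ only yields $|B|\leq \frac{m_{i^*}}{m_{i^*}-a_{k^*j^*}}\,d_l$, which is $\leq 2d_l$ precisely when $a_{k^*j^*}\leq m_{i^*}/2$ and is unbounded as $a_{k^*j^*}\to m_{i^*}$. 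In the heavy regime the pointwise statement ``$h_{i^*}(t)>m_{i^*}-a_{k^*j^*}$ during $B$'' is too weak: a collection of small fractions of total height barely exceeding $m_{i^*}-a_{k^*j^*}$ satisfies it while consuming very little volume, so no volume argument of the form you propose (integrating a pointwise height lower bound against \dref{intcap}) can close this case. The ``short case analysis on the size ratio'' you defer to does not exist as stated.

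What rescues the heavy case is not volume alone but the temporal structure enforced by the greedy scan, and this is exactly what the paper's proof supplies. It pairs time slots $s$ in the last window of length $d_l$ with slots $t$ in the earlier part of the schedule, observes that if $h_{i^\star l}(s)+h_{i^\star l}(t)<m_{i^\star}$ then every fraction running at $s$ must also be running at $t$ (else greedy would have moved it), and then splits by Hall's condition: either a perfect matching of slot pairs exists, so the paired heights sum to at least $m_{i^\star}d_l$ and the volume bound \dref{intcap} forces $T_l\leq 2D_l$; or Hall's condition fails, in which case some single fraction runs in too many slots and its duration would violate \dref{frac}. To repair your proof you would need to import this two-slot pairing (or an equivalent ``long-running task'' dichotomy) to handle $a_{k^*j^*}>m_{i^*}/2$; that is a genuinely missing idea, not a routine computation.
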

\begin{proof}
\label{Proofvb}
Consider the non-zero task fractions $(k,j,i,l^\prime)$, $i \in \calM$, $l^\prime \leq l$ (according to an optimal solution to (LP1)). 
Without loss of generality, we normalize the processing times of task fractions to be positive integers, by defining a proper time unit  and representing the task durations using integer multiples of this unit. Let $D_l$ and $T_l$ be the value of $d_l$ and $\tau_l$ using the new unit. Let $i^\star$ denote the machine that schedules the last task fraction among the non-zero task fractions of the first $l$ intervals. Note that $T_l$ is the time that this task fraction completes. If $T_l \leq D_l$, then $T_l \leq 3 D_l$ and the lemma is proved. Hence consider the case that $T_l > D_l$.

Define $h_{il}(t)$ to be the height of machine $i$ at time $t$ in schedule $\calS$ considering only the task fractions of the first $l$ intervals. First we note that,
\begin{equation}
\label{workload}
\sum_{l^\prime=0}^{l}\sum_{{(k,j): i \in \calM_{kj}}} z_{kj}^{il^\prime} p_{kj}^i a_{kj} \stackrel{(a)}= \sum_{t=1}^{T_l} h_{il}(t) \stackrel{(b)}\leq m_i D_l,  \ \ \forall i \in \calM
 \end{equation}
Using the definition of $h_{il}(t)$, the right-hand side of Equality (a) is the total volume of task fractions corresponding to the first $l$ intervals that are processed during the interval $(0,T_l]$ on machine $i$, which is the left-hand side. Further, Inequality (b) is by Constraint~\dref{intcap}.

 Let $S_{il}(\theta)$ denote the set of task fractions, running at time $\theta$ on machine $i$. Consider machine $i^\star$. We construct a bipartite graph $G=(U \cup V, E)$\footnote{$G=(U \cup V, E)$ is a bipartite graph iff for any edge $e=(u,v) \in E$, we have $u \in U$ and $v \in V$.} as follows. For each time slot $\theta \in \{1,\dots,T_l\}$, we consider a node $z_\theta$, and define $V= \{z_\theta | 1 \leq \theta \leq T_l-D_l\}$, and $U=\{z_\theta | T_l-D_l+1 \leq \theta \leq T_l\}$. For any $z_{s} \in U $ and $z_{t} \in V$, we add an edge $(z_s,z_t)$ if $h_{i^\star l}(s)+h_{i^\star l}(t)\geq m_{i^\star}$. Note that $h_{i^\star l}(s)+h_{i^\star l}(t) < m_{i^\star}$ means, by definition, that there is no edge between $z_s$ and $z_t$. Therefore, in this case,
\begin{equation}
\label{eq:conflict}
 \big(\cup_{i \in \calM} S_{il}(s) \big) \setminus \big(\cup_{i \in \calM}S_{il}(t)\big) = \varnothing. 
\end{equation}
This is because otherwise {\AII} would have scheduled the task(s) in $S_{i^\star l}(s)$ at time $t$ (note that $t<s$).

For any set of nodes $\tilde{U} \subseteq U$, we define set of its neighbor nodes as $N_{\tilde{U}}=\{z_t \in V | \exists \ z_s \in \tilde{U}: (z_s,z_t) \in E\}$. Note that, there are $T_l-D_l-|N_{\tilde{U}}|$ nodes in $V$ which do not have any edge to some node in $\tilde{U}$. 
Let $|\cdot|$ denote set cardinality (size). We consider two cases:

\textbf{Case (i)}: There exists a set $\tilde{U}$ for which $|N_{\tilde{U}}| < |\tilde{U}|$. Consider a node $z_s \in \tilde{U}$ and a task with duration $p$ running at time slot $s$. Let $p_U$ denote the amount of time that this task is running on time slots of set $U$. Note that $p_U\geq1$. By Equation~\dref{eq:conflict}, a task that is running at time $s$ is also running at $T_l-D_l-|N_{\tilde{U}}|$ many other time slots whose  corresponding nodes are in $V$. 
\begin{equation*}
p=T_l-D_l-|N_{\tilde{U}}|+p_U \leq D_l,
\end{equation*}
where the inequality is by Constraint~\dref{frac}. Therefore
\begin{equation*}
T_l \leq 2D_l+|N_{\tilde{U}}|-p_U <2D_l+|\tilde{U}| \leq 3D_l.
\end{equation*}

\textbf{Case (ii)}: For any $\tilde{U} \subseteq U$,  $|\tilde{U}| \leq |N_{\tilde{U}}|$. Hence, $|V| \geq |U|$ which implies that $T_l\geq 2D_l$. Further, Hall's Theorem~\cite{hall1935representatives} states that a perfect matching of nodes in $U$ to nodes in $V$ always exists in $G$\footnote{A perfect matching in $G$ (with size $|U|$) is a subset of $E$ such that every node in set $U$ is matched to one and only one node in set $V$ by an edge in the subset.} in this case. The existence of such a matching then implies that any time slot $s \in (T_l-D_l,T_l]$ can be matched to a time slot $t_s \in (0,T_l-D_l]$ and $h_{i^\star l}(s)+h_{i^\star l}(t_s) \geq m_i$. This implies that 
\begin{equation}
\sum_{s \in U }(h_{i^\star l}(s)+h_{i^\star l}(t_s)) \geq m_{i^\star} D_l \stackrel{(c)}\geq \sum_{t=1}^{T_l}h_{i^\star l}(t),
\end{equation}
where Inequality (c) is by Equation~\dref{workload}. From this, one can conclude that no non-zero task fraction $(k,j,i^\star,l^\prime)$, $i^\star$, $l^\prime \leq l$ is processed at time slots $V^\prime=V \setminus \cup_{s \in U } \{t_s\}$. Hence, $V^\prime=\varnothing$, since otherwise {\AII} would have scheduled some of the tasks running at time slots of set $U$, at $V^\prime$. We then can conclude that $T_l=2D_l<3D_l$. This completes the proof.
\end{proof}

Next, we make the following definition regarding schedule $\calS$.
\begin{definition}
	\label{calpha}
	We define $C_j(\alpha)$, for $0 < \alpha \leq 1$, to be the time at which $\alpha$-fraction of job $j$ is completed in schedule $\calS$ (i.e., at least $\alpha$-fraction of each of its tasks has been completed.).
\end{definition}
The following lemma shows the relationship between $C_j(\alpha)$  and $\tilde{C}_j$, the optimal solution to (LP1) for completion time of job $j$. The proof is provided in Appendix~\ref{Proof2-5}.
\begin{lemma}
	\label{intcjalpha}
	$\int_{\alpha=0}^{1} C_j(\alpha)d\alpha \leq 3(1+\epsilon)\tilde{C}_j$
\end{lemma}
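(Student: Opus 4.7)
The plan is to use Lemma \ref{tripledinterval} to convert the LP's staged progress guarantees into pointwise upper bounds on $C_j(\alpha)$, then evaluate the integral by partitioning $(0,1]$ according to the mass $\tilde{x}_{jl}$ that the LP places on each interval $l$.

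First, I would set $\alpha_{-1}=0$ and $\alpha_l=\sum_{l'=0}^l \tilde{x}_{jl'}$ for $l=0,\ldots,L$, so that by constraint \dref{jobdemand} we have $\alpha_L=1$, and the subintervals $(\alpha_{l-1},\alpha_l]$ partition $(0,1]$ with lengths $\tilde{x}_{jl}$. The key structural fact is supplied by constraint \dref{xdef} of (LP1): for every task $k\in\calK_j$ the cumulative LP-fraction satisfies $\sum_{l'\le l}\sum_{i\in\calM_{kj}}\tilde{z}_{kj}^{il'}\ge\alpha_l$; in words, the LP processes at least $\alpha_l$ fraction of each task of job $j$ by the end of interval $l$. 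Combining this with Lemma \ref{tripledinterval}, which states that every task fraction labelled with interval index at most $l$ is completed in $\calS$ by time $\tau_l\le 3d_l$, I conclude that by time $3d_l$ at least $\alpha_l$ fraction of each task of job $j$ has been processed in $\calS$. By Definition \ref{calpha} this yields $C_j(\alpha_l)\le 3d_l$.

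Second, since $C_j(\alpha)$ is non-decreasing in $\alpha$, on each subinterval $(\alpha_{l-1},\alpha_l]$ we have $C_j(\alpha)\le C_j(\alpha_l)\le 3d_l$. Splitting the integral along this partition gives
\[
\int_0^1 C_j(\alpha)\,d\alpha \;=\; \sum_{l=0}^L\int_{\alpha_{l-1}}^{\alpha_l} C_j(\alpha)\,d\alpha \;\le\; 3\sum_{l=0}^L d_l\,\tilde{x}_{jl}.
\]
Finally, using the geometric relation $d_l=(1+\epsilon)d_{l-1}$ between consecutive intervals, I would write $\sum_{l} d_l\tilde{x}_{jl}\le(1+\epsilon)\sum_{l} d_{l-1}\tilde{x}_{jl}=(1+\epsilon)\tilde{C}_j$, where the last equality is the definition \dref{cdef} of $\tilde{C}_j$; combining this with the previous display yields the claimed bound $3(1+\epsilon)\tilde{C}_j$.

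The main obstacle is the first step: translating the LP's coordinate-wise cumulative task-fraction guarantee (which lives in a fractional, non-packed world and allows overlapping fractions of the same task across machines in a single interval) into a genuine progress bound in the integer schedule $\calS$ produced by greedy list scheduling. That translation is precisely the work done by Lemma \ref{tripledinterval}, and once it is invoked the remainder is clean interval-partitioning plus a one-line telescoping of the geometric growth of $\{d_l\}$; minor care is required at the boundary $l=0$ where $d_{-1}=0$, but this is absorbed by the standard convention on geometric interval endpoints used throughout the LP construction.
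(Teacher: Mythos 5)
Your proof is correct and follows essentially the same route as the paper's: partition $(0,1]$ by interval index, bound $C_j(\alpha)\le 3d_l$ on the $l$-th piece via Lemma~\ref{tripledinterval}, and conclude using $d_l=(1+\epsilon)d_{l-1}$ together with the definition~\dref{cdef} of $\tilde C_j$. The one (harmless) difference is that you cut $(0,1]$ at the LP cumulative masses $\sum_{l'\le l}\tilde x_{jl'}$ --- justified by combining~\dref{xdef} with Lemma~\ref{tripledinterval} --- so the piece lengths are exactly $\tilde x_{jl}$ and you bypass the paper's extra stochastic-dominance step comparing $\sum_l y_{jl}d_{l-1}$ with $\sum_l\tilde x_{jl}d_{l-1}$; the $l=0$ boundary convention you flag is handled identically (and equally informally) in the paper's own chain of (in)equalities.
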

Recall that schedule $\bar{\calS}$ is formed by stretching schedule $\calS$ by factor $1/\lambda$. Let $\bar{C}_j^\lambda$ denote the completion time of job $j$ in $\bar{\calS}$. Then we can show that the following lemma holds.
\begin{lemma}
	\label{smbound}
	$\expect{\bar{C}_j^\lambda} \leq 6(1+\epsilon) \tilde{C_j}$. 
\end{lemma}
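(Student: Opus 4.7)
The plan is to establish a pointwise bound
\[
\bar{C}_j^\lambda \;\le\; \frac{C_j(\lambda)}{\lambda}, \qquad \lambda\in(0,1],
\]
for every realization of $\lambda$, and then integrate against the Slow-Motion density $f(\lambda)=2\lambda$. The choice of this density is deliberate: it cancels the $1/\lambda$ in the pointwise bound and converts the integrand into $2C_j(\lambda)$, after which Lemma~\ref{intcjalpha} produces the factor $3(1+\epsilon)$, yielding $6(1+\epsilon)$ overall.

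The pointwise bound is the heart of the argument. The Slow-Motion step of {\AII} stretches $\calS$ by $1/\lambda$, so any task fraction $(k,j,i,l)$ whose processing interval $[\tau_1,\tau_2)$ in $\calS$ lies inside $[0,C_j(\lambda)]$ has its reservation $[\tau_1/\lambda,\tau_2/\lambda)$ in $\bar{\calS}$ contained in $[0,C_j(\lambda)/\lambda]$, and is therefore entirely processed there. By Definition~\ref{calpha}, the fractions that satisfy this condition already supply at least a $\lambda$-portion of every task of job $j$ by time $C_j(\lambda)/\lambda$ in $\bar{\calS}$. The $1/\lambda$ stretching also creates, on each machine $i$, an idle budget inside $[0,C_j(\lambda)/\lambda]$ equal to $(1-\lambda)/\lambda$ times $i$'s busy time in $\calS \cap [0,C_j(\lambda)]$; the ``shift back'' rule uses exactly this budget, respecting capacity~\dref{intcap} and placement, to pull the late fractions of $j$ (those originally scheduled after $C_j(\lambda)$ in $\calS$) into the window $[0,C_j(\lambda)/\lambda]$. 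Together, this gives full completion of job $j$ in $\bar{\calS}$ by $C_j(\lambda)/\lambda$.

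Granted the pointwise bound, the rest is immediate:
\[
\expect{\bar{C}_j^\lambda} \;=\; \int_0^1 \bar{C}_j^\lambda\cdot 2\lambda\,d\lambda \;\le\; \int_0^1 \frac{C_j(\lambda)}{\lambda}\cdot 2\lambda\,d\lambda \;=\; 2\int_0^1 C_j(\lambda)\,d\lambda \;\le\; 6(1+\epsilon)\,\tilde{C}_j,
\]
where the last step invokes Lemma~\ref{intcjalpha}.

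The hard part will be making the pointwise bound fully rigorous. One must argue that the shift-back absorbs \emph{all} of job $j$'s late fractions into the stretched idle budget on the appropriate machines, simultaneously with the late fractions of every other job, and without violating packing or placement feasibility at any instant in $[0,C_j(\lambda)/\lambda]$. The natural route is to argue prefix-by-prefix in the stretched time axis: show that the cumulative load placed on each machine $i$ by time $d_l/\lambda$ in $\bar{\calS}$ is at most $m_i d_l/\lambda$, which is a stretched version of~\dref{intcap}, and then reduce feasibility of the absorption to feasibility of the LP solution $\{\tilde{z}_{kj}^{il}\}$ itself. This mirrors the bookkeeping used for Lemma~\ref{tripledinterval}, and once it is in place the expectation computation and the application of Lemma~\ref{intcjalpha} complete the proof.
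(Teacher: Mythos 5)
Your overall skeleton is the paper's: the pointwise bound $\bar{C}_j^\lambda \le C_j(\lambda)/\lambda$, expectation against $f(\lambda)=2\lambda$, and Lemma~\ref{intcjalpha} giving $6(1+\epsilon)\tilde{C}_j$. The final integration is exactly right. But your justification of the pointwise bound rests on a misreading of the Slow-Motion construction, and it is precisely the part you flag as ``the hard part'' that is both unproven and unnecessary. You claim that the fractions of job $j$ processed in $\calS$ during $[0,C_j(\lambda)]$ supply only a $\lambda$-portion of each task by time $C_j(\lambda)/\lambda$ in $\bar{\calS}$, and that the remaining $(1-\lambda)$-portion must be absorbed by shifting late fractions back into stretched idle time. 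That absorption claim --- that \emph{all} late fractions of \emph{all} jobs can be simultaneously pulled into their windows while respecting packing and placement at every instant --- is a global rearrangement statement that you do not prove, and establishing it would be substantially harder than the lemma itself. The paper never needs it: the remark that ``we have the performance guarantee of Theorem~\ref{prop1} even without this shifting'' tells you the shift-back is a cosmetic optimization, not a load-bearing step.

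The correct mechanism is simpler. In $\bar{\calS}$ a task scheduled during $[\tau_1,\tau_2)$ in $\calS$ occupies $[\tau_1/\lambda,\tau_2/\lambda)$ and is \emph{actively processed throughout that stretched interval} until its total task fraction $\tilde{Z}_{kj}^{i}$ is met, after which the machine idles --- this is the only reading under which the idle clause in Step~3 of {\AII} makes sense. Hence by time $t/\lambda$ in $\bar{\calS}$ every task has accumulated $1/\lambda$ times the processing it had accumulated by time $t$ in $\calS$ (capped at its requirement). Since by Definition~\ref{calpha} each task of job $j$ is at least $\lambda$-complete at time $C_j(\lambda)$ in $\calS$, each is $\min\{1,\lambda\cdot\tfrac{1}{\lambda}\}=1$ complete at time $C_j(\lambda)/\lambda$ in $\bar{\calS}$, giving $\bar{C}_j^\lambda \le C_j(\lambda)/\lambda$ with no rearrangement argument at all. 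Replace your second and fourth paragraphs with this observation and the proof closes.
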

\begin{proof}
	The proof is based on Lemma~\ref{intcjalpha} and taking expectation with respect to probability density function of $\lambda$. The details can be found in Appendix~\ref{Proof2-6}.
\end{proof}
In constructing $\bar{\calS}$, we may shift scheduling time of some of the tasks on each machine to the left and construct a better schedule. Nevertheless, we have the performance guarantee of Theorem~\ref{prop1} even without this shifting.

\begin{proof}[Proof of Theorem~\ref{prop1}]
	Let $C_j$ denote the completion time of job $j$ under {\AII}. Then
	\begin{equation*}
	\begin{aligned}
	\expect{\sum_{j \in \calJ} w_j C_j} & \leq \expect{\sum_{j \in \calJ} w_j \bar{C}_j^\lambda} \stackrel{(a)}\leq 6(1+\epsilon)\sum_{j \in \calJ} w_j\tilde{C_j}  \\ &\stackrel{(b)}\leq 6(1+\epsilon) \sum_{j \in \calJ} w_j C_j^\star,
	\end{aligned}
	\end{equation*}
	where (a) is by Lemma~\ref{smbound}, and (b) is by Lemma~\ref{lp3bound}. In Appendix~\ref{derand}, we discuss how to de-randomize the random choice of $\lambda \in (0, 1]$, which is used to construct schedule $\bar{\calS}$ from schedule $\calS$. So the proof is complete.
\end{proof}

\section{Scheduling When Migration is not Allowed}
\label{nomigration}
The algorithm in Section~\ref{preemptivemigration} is preemptive, and tasks can be migrated across the machines in the same placement set. Implementing such an algorithm can be complex and costly in practice. In this section, we consider the case that migration of tasks among machines is not allowed. We propose a non-preemptive scheduling algorithm for this case. We also show that its solution provides a bounded solution for the case that preemption of tasks (in the same machine, without migration) is allowed. 

Our algorithm is based on a relaxed LP 
which is very similar to (LP1) of Section~\ref{preemptivemigration}, however a different constraint is used to ensure that each task is scheduled entirely by the end point of some time-interval of a machine. Next, we introduce this LP and describe how to generate a non-preemptive schedule based on its solution. 

\subsection{Relaxed Linear Program (LP2)} 
We partition the time horizon into intervals $(d_{l-1},d_l]$ for $l=0,...,L$, as defined in \dref{partition} by replacing $\epsilon$ by $1$. Define 0-1 variable $z_{kj}^{il}$ to indicate whether task $(k,j)$ is completed on machine $i$ by the end-point of interval $l$, i.e., by $d_l$. Note that the interpretation of variables $z_{kj}^{il}$ is slightly different from their counterparts in (LP1).
By relaxing integrality of $z$ variables, we formulate the following LP:
\begin{subequations}
	\label{RLP3}
	\begin{align}
	\label{LPobj3}
	 \min \ \ &\sum_{j \in \calJ} w_j C_j\ \ \quad \mathbf{(LP2)} \\
		 \label{extraconst}
		 & z_{kj}^{il}=0\  \text{ if  } p_{kj}^i > d_l,\ \  j \in \calJ, \ k \in \calK_j, \ i \in \calM_{kj}, \ l=0, \dots, L \\
		 & \text{Constraints~\dref{taskdemand}--\dref{jobdemand}}		 
    \end{align}
\end{subequations}
Note that Constraint~\dref{extraconst} allows $z_{kj}^{il}$ to be positive only if the end point of $l$-th interval is at least as long as task $(k,j)$'s processing time on machine $i \in \calM_{kj}$. We would like to emphasize that this is a valid constraint for both the preemptive and non-preemptive cases when migration is not allowed. We will see shortly how this constraint helps us construct our non-preemptive algorithm. We interpret \emph{fractional} values of $z_{kj}^{il}$ as the fraction of task $(k,j)$ that is processed in interval $l$ of machine $i$ (as in Section~\ref{preemptivemigration}).  

\subsection{Scheduling Algorithm: \boldmath{{\AIII}}}
\label{sec:nonpremeptive}  
Our non-preemptive algorithm, which we refer to as {\AIII}, has three main steps:

{\bf Step 1: Solve (LP2).} We first solve the linear program (LP2) to obtain the optimal solution of $\{{z}_{kj}^{il}\}$ denoted by $\{\tilde{z}_{kj}^{il}\}$.

{\bf Step 2: Apply Slow-Motion.}
Before constructing the actual schedule of tasks, the algorithm applies the Slow-Motion technique (see Section~\ref{sec:secondpremeptive}). 
We pause here to clarify the connection between $\tilde{z}_{kj}^{il}$ and those obtained after applying Slow-Motion which we denote by $\bar{z}_{kj}^{il}$, below. 

Recall that $\tilde{z}_{kj}^{il}$ is the fraction of task $(k,j)$ that is scheduled in interval $l$ of machine $i$ in the optimal solution to (LP2), and $\Delta_l$ is the length of the $l$-th interval. Also, recall that $\tilde{Z}_{kj}^{i}=\sum_{l=0}^{L} \tilde{z}_{kj}^{il}$ is the total task fraction to be scheduled on machine $i$ corresponding to task $(k,j)$.
Similarly, we define $\bar{\Delta}_l$ and $\bar{d}_l$ to be the length and the end point of the $l$-th interval after applying the Slow-Motion using a stretch parameter $\lambda \in (0,1]$, respectively. Therefore, 
\begin{equation}
\label{barintlength}
\bar{\Delta}_l=\frac{\Delta_l}{\lambda}, \ \ \bar{d}_l=\frac{d_l}{\lambda} .
\end{equation} 
Further, we define $\bar{z}_{kj}^{il}$ to be the fraction of task $(k,j)$ to be scheduled during the $l$-th interval on machine $i$ after applying Slow-Motion. Then it holds that,
\begin{equation}
\label{zbardef}
\bar{z}_{kj}^{il}=
\begin{cases}
\frac{\tilde{z}_{kj}^{il}}{\lambda}, & \text{if}\ \sum_{l\prime=0}^{l} \frac{\tilde{z}_{kj}^{il^\prime}}{\lambda} < \tilde{Z}_{kj}^{i}\\
\max \left\{0,\Big(\tilde{Z}_{kj}^{i}- \sum_{l\prime=0}^{l-1}\frac{\tilde{z}_{kj}^{il^\prime}}{\lambda} \Big)\right\},
& \text{otherwise.}
\end{cases}
\end{equation}
To see \dref{zbardef}, note that in Slow-Motion, both variables and intervals are stretched by factor $1/\lambda$, and after stretching, the machine is left idle if it has already processed its total task fraction completely. Hence, as long as $\tilde{Z}_{kj}^{i}$ fraction of task $(k,j)$ is not completely processed by the end of the $l$-th interval in the stretched solution, it is processed for $\tilde{z}_{kj}^{il}p_{kj}^i/\lambda$ amount of time in the $l$-th interval of length $\bar{\Delta}_l=\Delta_l/\lambda$. Hence $\bar{z}_{kj}^{il}=\tilde{z}_{kj}^{il}/\lambda$. Now suppose  $l^\star$ is the first interval for which $\sum_{l\prime=0}^{l^\star} \tilde{z}_{kj}^{il^\prime}/\lambda \geq \tilde{Z}_{kj}^{i}$. Then, the remaining processing time of task $(k,j)$ to be scheduled in the $l^\star$-th interval of machine $i$ in the stretched schedule 
is $p_{kj}^i(\tilde{Z}_{kj}^{i}- \sum_{l\prime=0}^{l^\star-1}\bar{z}_{kj}^{il^\prime})=
p_{kj}^i(\tilde{Z}_{kj}^{i}- \sum_{l\prime=0}^{l^\star-1}\tilde{z}_{kj}^{il^\prime}/\lambda) >0
$. Therefore, the second part of \dref{zbardef} holds for $l^\star$, and for intervals $l >l^\star$, $\bar{z}_{kj}^{il}$ will be zero, since $\tilde{Z}_{kj}^{i}- \sum_{l\prime=0}^{l-1}\bar{z}_{kj}^{il^\prime}/\lambda \leq 0$. Observe that $\sum_{i \in \calM_{kj}}\sum_{l=0}^L \bar{z}_{kj}^{il}=1$.

{\bf Step 3: Construct a non-preemptive schedule.} Note that according to variables $\bar{z}_{kj}^{il}$, a task possibly is set to get processed in different intervals and machines. The last step of {\AIII} is the procedure of constructing a non-preemptive schedule using these variables. This procedure involves 2 substeps: (1) \textit{mapping of tasks to machine-intervals}, and (2) \textit{non-preemptive scheduling of tasks mapped to each machine-interval using a greedy scheme}. We now describe each of these substeps in detail.

\textbf{Substep 3.1: Mapping of tasks to machine-intervals.} For each task $(k,j)$, the algorithm uses a mapping procedure to find \textit{a machine and an interval in which it can schedule the task entirely in a non-preemptive fashion}. 
The mapping procedure is based on constructing a weighted bipartite graph $\calG=(U \cup V, E)$, followed by an \textit{integral matching} of nodes in $U$ to nodes in $V$ on edges with non-zero weights, as described below:
\begin{itemize}[leftmargin=*]
\item [(i)] \textit{Construction of Graph $\calG=(U \cup V, E)$:} For each task $(k,j)$, $j \in \calJ$ , $k \in \calK_j$, we consider a node in $U$. Therefore, there are $\sum_{j \in \calJ}|\calK_j|$ nodes in $U$. Further, $V=\cup_{i \in \calM} V_i$, where $V_i$ is the set of nodes that we add for machine $i$. To construct graph $\calG$, we start from the first machine, say machine $i$, and sort tasks in non-increasing order of their volume $v_{kj}^i=a_{kj}p_{kj}^i$ in machine $i$. 
Let $N_i$ denote the number of tasks on machine $i$ with nonzero volumes. Without loss of generality, suppose
\be
\label{order3}
v_{k_1j_1}^i \geq v_{k_2j_2}^i \geq \dots v_{k_{N_i}j_{N_i}}^i >0.
\ee
For each interval $l$, we consider $\lceil \bar{z}^{il} \rceil=\lceil \sum_{j \in J} \sum_{k \in \calK_j} \bar{z}_{kj}^{il} \rceil$ (recall the definition of $\bar{z}_{kj}^{il}$ in \dref{zbardef}) consecutive nodes in $V_i$ which we call \textit{copies of interval $l$}. 

Starting from the first task in the ordering~\dref{order3}, we draw edges from its corresponding node in $U$ to the interval copies in $V_i$ in the following manner. Assume we reach at task $(k,j)$ in the process of adding edges. For each interval $l$, if $\bar{z}_{kj}^{il} >0$, first set $R=\bar{z}_{kj}^{il}$. Consider the first copy of interval $l$ for which the total weight of its current edges is strictly less than $1$ and set $W$ to be its total weight. We draw an edge from the node of task $(k,j)$ in $U$ to this copy node in $V_i$, and assign a weight equal to $\min\{R,1-W\}$ to this edge. Then we update $R \leftarrow R-\min\{R,1-W\}$, consider the next copy of interval $l$, and apply the same procedure, until $R=0$ (or equivalently, the sum of edge weights from node $(k,j)$ to copies of interval $l$ becomes equal to $\bar{z}_{kj}^{il}$). We use $w_{kj}^{ilc}$ to denote the weight of edge that connects task $(k,j)$ to copy $c$ of interval $l$ of machine $i$, and if there is no such edge, $w_{kj}^{ilc}=0$.
We then move to the next machine and apply the similar procedure, and so on. See Figure~\ref{exm} for an illustrative example.
\begin{figure}[t]
	\centering
	\includegraphics[scale=0.25]{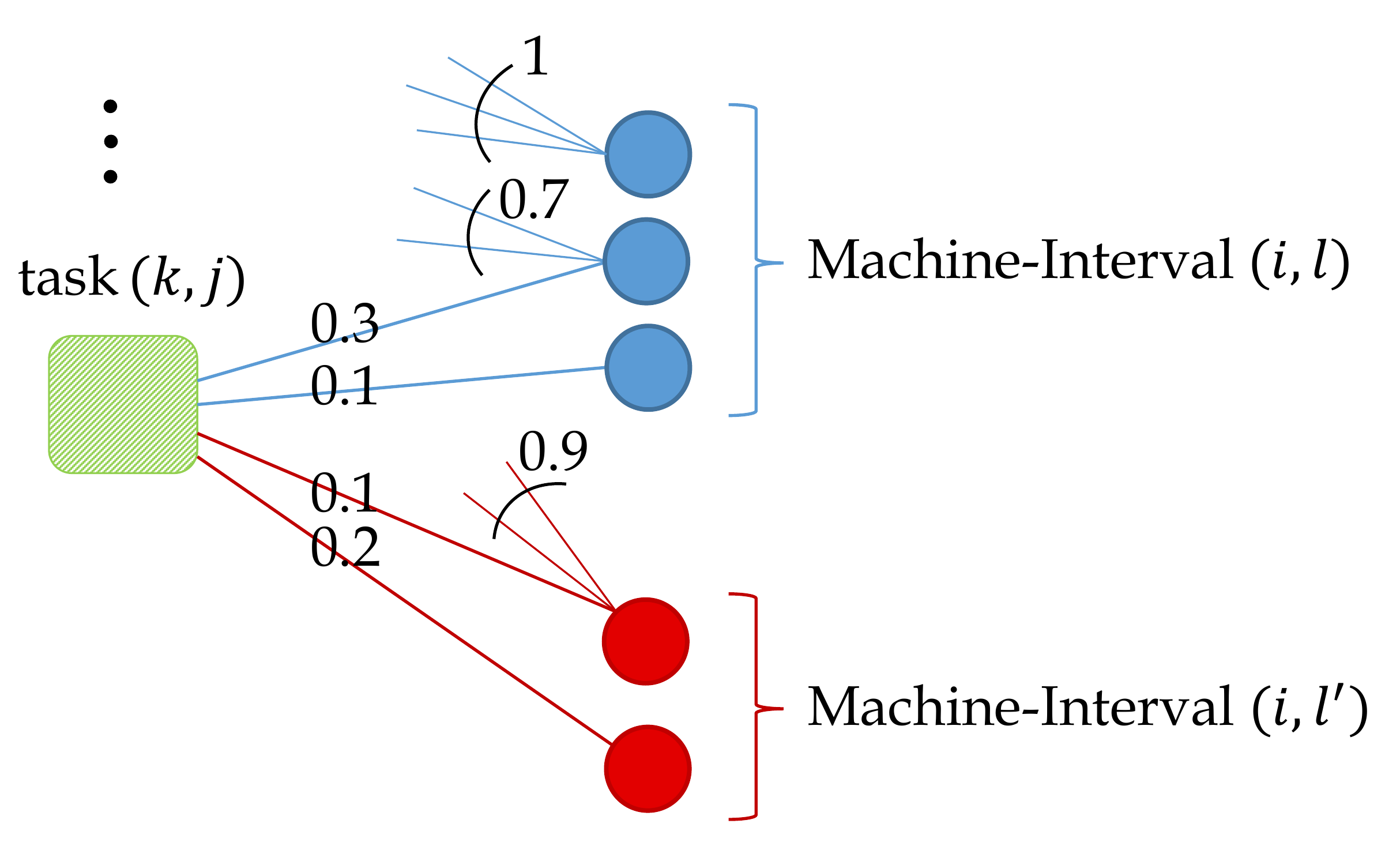}%
	\caption{An illustrative example for construction of graph $\calG$ in Substeb 3.1. Task $(k,j)$ requires $\bar{z}_{kj}^{il}=0.4$ and $\bar{z}_{kj}^{il^\prime}=0.3$. When we reach at task $(k,j)$, the total weight of the first copy of interval $l$ is $1$ and that of its second copy is $0.7$. Also, the total weight of the first copy of interval $l^\prime$ is $0.9$. Hence, the procedure adds $2$ edges to copies of interval $l$ with weights $0.3$ and $0.1$, and $2$ edges to copies of interval $l^\prime$ with weights $0.1$ and $0.2$. }
	\label{exm}%
	\vspace{-12pt}
\end{figure}

Note that in $\calG$, the weight of any node $u \in U$ (the sum of weights of its edges) is equal to $1$ (since $\sum_{l=0}^L \bar{z}_{kj}^{il}=1$, for any task $(k,j)$), while the weight of any node $v \in V$ is at most $1$.  

\item[(ii)] \textit{Integral Matching:} Finally, we find an integral matching on the non-zero edges of $\calG$, such that each non-zero task is matched to some interval copy. As we will show shortly in Section~\ref{sec:performance nonpreemptive}, we can always find an integral matching of size $\sum_{j \in \calJ}|\calK_j|$, 
the total number of tasks, in $\calG$, in polynomial time, in which each task is matched to a copy of some interval.
\end{itemize}
A pseudocode for the mapping procedure can be found in Appendix~\ref{pseudocode3}.

\textbf{Substep 3.2: Greedy packing of tasks in machine-intervals.}
We utilize a greedy packing to schedule all the tasks that are mapped to a machine-interval \textit{non-preemptively}. More precisely, on each machine, the greedy algorithm starts from the first interval and considers an arbitrary ordered list of its corresponding tasks. Starting from the first task, the algorithm schedules it, and moves to the second
task. If the machine has sufficient capacity, it schedules the task, otherwise it checks the next task and so on. Once it is done with all the tasks of the first interval, it considers the second interval, applies the similar procedure, and so on. We may also shift back future tasks' schedules as far as the machine capacity allows.

Note that this greedy algorithm is simpler than the one described in Section~\ref{preemptivemigration}, since it does not need to consider requirement (ii) of Section~\ref{ProbState} as here each task only appears in one feasible machine.

As we prove in the next section, we can bound the total volume of tasks mapped to interval $l$ on machine $i$ in the mapping phase by $m_i \bar{\Delta}_l$. Furthermore, by Constraint~\dref{extraconst} and the fact that the integral matching in Substep 3.1 was constructed on non-zero edges, the processing time of any task mapped to an interval is not greater than the interval's end point, which is twice the interval length. Hence, we can bound the completion time of each job and find the approximation ratio that our algorithm provides.

A pseudocode for the {\AIII} algorithm can be found in Appendix~\ref{pseudocode3}.

\subsection{Performance Guarantee} \label{sec:performance nonpreemptive}
In this section, we analyze the performance of our non-preemptive algorithm \AIII. The main result of this section is as follows:
\begin{theorem}
	\label{theorem2}
	The scheduling algorithm \AIII, in Section~\ref{sec:nonpremeptive}, is a $24$-approximation algorithm for the problem of parallel-task jobs scheduling with packing and placement constraints, when preemption and migration is not allowed.
\end{theorem}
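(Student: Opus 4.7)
The plan is to follow the template of Theorem~\ref{prop1} with three new ingredients specific to the non-preemptive, no-migration setting. First, a lower bound $\sum_j w_j \tilde C_j \le \text{OPT}$ from (LP2), which is an almost verbatim adaptation of Lemma~\ref{lp3bound} once we observe that any integer non-preemptive schedule automatically satisfies Constraint~\dref{extraconst}: a task cannot complete within an interval shorter than its processing time on its assigned machine, so the extra constraint is a valid relaxation. Second, existence of an integral matching of $U$ into $V$ on the non-zero edges of the bipartite graph $\calG$ constructed in Substep~3.1. This follows from Hall's marriage theorem, since every node in $U$ has total incident edge weight exactly $1$ while every node in $V$ has total incident edge weight at most $1$, forcing $|N(U')| \ge |U'|$ for any $U'\subseteq U$. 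The matching can then be found by any polynomial-time bipartite matching algorithm.

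The main technical step, and the one I expect to be the principal obstacle, is bounding the total volume $V_i^l = \sum_{(k,j)\in\calT_i^l} a_{kj}p_{kj}^i$ of tasks integrally matched to interval $l$ of machine $i$, where $\calT_i^l$ denotes this set. The key lemma to establish is
\begin{equation*}
V_i^l \;\le\; \bar V_i^l + v_{\max}^{il}, \qquad \bar V_i^l := \sum_{k,j} \bar z_{kj}^{il}\, a_{kj}\, p_{kj}^i, \qquad v_{\max}^{il} := \max\bigl\{v_{kj}^i : \bar z_{kj}^{il}>0\bigr\}.
\end{equation*}
To prove it, exploit the non-increasing volume ordering~\dref{order3} in which edges are drawn: view the fragments of lengths $\bar z_{kj}^{il}$ as a non-increasing, piecewise-constant volume density $v(x)$ on a line of length $\bar z^{il}$, cut into unit-length copies. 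Because $v$ is non-increasing, the task matched by Hall to the $c$-th copy has volume at most $v(c-1)$; a telescoping comparison with $\int_0^{\bar z^{il}} v(x)\,dx = \bar V_i^l$ then yields the lemma, the $v_{\max}^{il}$ term arising from the first copy. Combining with Constraint~\dref{extraconst} (which gives $v_{\max}^{il} \le m_i d_l$) and with the stretched form of Constraint~\dref{intcap} (which gives $\sum_{l'\le l}\bar V_i^{l'} \le m_i \bar d_l$) produces the cumulative bound $\sum_{l'\le l} V_i^{l'} \le m_i d_l\bigl(1/\lambda + 2\bigr)$.

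With the cumulative volume bound in hand, a standard list-scheduling (Graham-type) argument for strip packing controls the makespan: since every task matched to interval $l'$ has $p_{kj}^i \le d_{l'}$ by Constraint~\dref{extraconst}, the greedy procedure of Substep~3.2 completes all tasks matched through interval $l$ on machine $i$ by time $O\!\bigl(\sum_{l'\le l} V_i^{l'}/m_i + \sum_{l'\le l} d_{l'}\bigr) = O(d_l/\lambda)$. Letting $l_j^\star$ denote the largest index of an interval holding a task of job $j$ after matching, one can show, as in the preemptive analysis, that the (LP2)/Slow-Motion construction forces $\tilde C_j = \Omega(d_{l_j^\star})$, so the completion time under {\AIII} satisfies $C_j = O(\tilde C_j/\lambda)$. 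Taking expectation over $\lambda$ drawn from $f(\lambda) = 2\lambda$, as in Lemma~\ref{smbound}, turns $\EE[1/\lambda] = 2$ into a final factor of $2$. Tracking the constants from the geometric doubling at $\epsilon=1$, the Graham strip-packing bound, the $v_{\max}^{il}$ slack in the volume lemma, and the Slow-Motion expectation yields the claimed approximation ratio of $24$, and de-randomization proceeds exactly as in Appendix~\ref{derand}.
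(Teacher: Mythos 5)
Your proposal follows essentially the same route as the paper's proof: the same LP lower bound via validity of Constraint~\dref{extraconst} for non-migratory schedules, the same integral-matching step (the paper invokes the fractional-to-integral matching theorem for bipartite graphs where you invoke Hall's condition, which is equivalent here), the same per-machine-interval volume lemma proved via the non-increasing volume ordering, the same greedy packing bound, and the same Slow-Motion expectation argument. One caution on the constant: assembling the makespan from your \emph{cumulative} bound $\sum_{l'\le l}V_i^{l'}\le m_i d_l(1/\lambda+2)$ together with $\sum_{l'\le l}d_{l'}$ double-counts the $v_{\max}^{il'}$ contributions and a naive tally yields roughly $40$ rather than $24$; to get $24$ you must, as the paper does, keep the per-interval bound $\tau_{il'}\le 2\bigl(\bar d_{l'}+\sum_{k,j}v_{kj}^i\bar z_{kj}^{il'}/m_i\bigr)$ and sum its two pieces separately, using $\sum_{l'\le l}\bar d_{l'}\le 2\bar d_l$ and Constraint~\dref{intcap}.
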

Since the constraints of (LP2) also hold for the preemptive case when migration is not allowed, the optimal solution of this case is also lower bounded by the optimal solution to the LP. Therefore, the algorithm' solution is also a bounded solution for the case that preemption is allowed (while still migration is not allowed).
\begin{corollary}
	The scheduling algorithm \AIII, in Section~\ref{sec:nonpremeptive}, is a $24$-approximation algorithm for the problem of parallel-task jobs scheduling with packing and placement constraints, when preemption is allowed and migration is not.
\end{corollary}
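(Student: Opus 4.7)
The plan is to prove Theorem~\ref{theorem2} and then derive the Corollary immediately: Constraint~\dref{extraconst} remains a valid LP2 constraint for the preemptive-no-migration case, because any feasible schedule (preemptive or not) must complete each task entirely on some machine $i \in \calM_{kj}$, so setting $z_{kj}^{il^*}=1$ for the smallest $l^*$ with $d_{l^*}\geq t_{kj}^*$ produces a feasible LP2 solution whose objective lower-bounds the schedule's cost. Hence LP2 lower-bounds both optimal values, and the inequality $\expect{\sum_j w_j C_j} \leq 24 \sum_j w_j \tilde{C}_j$ established en route to Theorem~\ref{theorem2} yields the Corollary. For the Theorem, I aim to show this inequality with $\tilde{C}_j$ the LP2 optimum, and then invoke an LP lower bound analogous to Lemma~\ref{lp3bound}. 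The factor $24=2\cdot 2\cdot 2\cdot 3$ will decompose into factor $2$ from $d_l=2d_{l-1}$ (since $\epsilon=1$), factor $2$ from $\expect{1/\lambda}=2$ under $f(\lambda)=2\lambda$, factor $2$ from a volume expansion in the matching-rounding step, and factor $3$ from the greedy packing makespan; derandomization then follows Appendix~\ref{derand}.

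First I would verify that $\calG$ admits an integral matching saturating every task-node by Hall's theorem: each $u \in U$ has incident edge weight $\sum_{i,l,c} w_{kj}^{ilc} = \sum_{i,l} \bar{z}_{kj}^{il} = 1$, while each $v \in V$ has weight at most $1$, so $|N(S)| \geq |S|$ for every $S \subseteq U$. Next, I would bound the cumulative volume of matched tasks on intervals $0,\ldots,l$ of machine $i$ by $2 m_i \bar{d}_l$. The key structural observation is that Substep~3.1 processes tasks in non-increasing $v_{kj}^i$ (see~\dref{order3}) and fills interval copies sequentially to edge weight $1$, so the matched task on any copy $c\geq 2$ has volume no larger than the smallest volume contributing to copy $c-1$, which in turn is bounded by the fractional volume of copy $c-1$. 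A telescoping summation together with Constraint~\dref{intcap} and $\bar{z}_{kj}^{il} \leq \tilde{z}_{kj}^{il}/\lambda$ yields the stated bound, with the contribution of the first copy of each interval absorbed into the multiplicative factor $2$ using $v_{kj}^i = a_{kj} p_{kj}^i \leq m_i \bar{d}_l$ (a consequence of Constraint~\dref{extraconst}).

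The third step, and the main obstacle of the proof, is to bound the non-preemptive greedy makespan $\bar{\tau}_l$ on machine $i$ by $3 \bar{d}_l$, by adapting the Hall-theorem argument of Lemma~\ref{tripledinterval}. The preemptive reassignment step used there does not transfer directly, since a non-preemptive task running at time $s$ cannot be relocated to an earlier time $t$. The required replacement is the following: if a task running at time $s$ had not yet been started at time $t<s$, then the greedy rule together with non-preemption forces $h_i(t) + a_{kj} > m_i$, hence $h_i(s)+h_i(t) \geq m_i$ since the task contributes $a_{kj}$ to $h_i(s)$. This lets the Hall case analysis proceed on the partition $V=(0,\bar{\tau}_l-\bar{d}_l]$, $U=(\bar{\tau}_l-\bar{d}_l,\bar{\tau}_l]$, with the cumulative-volume identity analogous to~\dref{workload} invoking the bound $V_l^i \leq 2 m_i \bar{d}_l$ from the previous paragraph and the length bound $p_{kj}^i \leq \bar{d}_l$ from Constraint~\dref{extraconst}, yielding $\bar{\tau}_l \leq 3\bar{d}_l$.

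Finally, let $l_{kj}^\lambda$ denote the interval to which task $(k,j)$ is matched after Slow-Motion and $l_j^\lambda = \max_{k \in \calK_j} l_{kj}^\lambda$. By~\dref{zbardef}, the event $\bar{z}_{kj}^{i,l_{kj}^\lambda}>0$ implies $\sum_{l'<l_j^\lambda}\sum_i \tilde{z}_{kj}^{il'}<\lambda$ for some $k$, hence $\min_k\sum_{l'<l_j^\lambda}\sum_i \tilde{z}_{kj}^{il'}<\lambda$; combined with Constraint~\dref{xdef} this relates $l_j^\lambda$ to the LP variables $x_{jl}$. Since $C_j \leq 3\bar{d}_{l_j^\lambda} = 3 d_{l_j^\lambda}/\lambda$ from the previous step, a direct computation paralleling Lemma~\ref{intcjalpha} and Lemma~\ref{smbound} under the density $f(\lambda)=2\lambda$ yields $\expect{C_j} \leq 24 \tilde{C}_j$. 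Summing with weights $w_j$, invoking the LP lower bound, and derandomizing completes the proof of Theorem~\ref{theorem2} and hence of the Corollary.
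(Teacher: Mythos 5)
Your derivation of the corollary from Theorem~\ref{theorem2} is exactly the paper's: Constraint~\dref{extraconst} remains valid for any no-migration schedule, preemptive or not, since a task must receive its full processing time $p_{kj}^i$ on a single machine $i\in\calM_{kj}$ before it can complete, so (LP2) lower-bounds the preemptive-no-migration optimum and the bound $\ev{\sum_j w_j C_j}\leq 24\sum_j w_j\tilde{C}_j$ carries over verbatim. Your Hall's-theorem justification of the integral matching in $\calG$ is also a legitimate, essentially equivalent substitute for the paper's appeal to the fractional-to-integral matching theorem (Lemma~\ref{intmatching}).

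The gap is in your third step, the makespan bound for the greedy packing, on which the whole factor of $24$ rests. First, {\AIII}'s greedy is \emph{interval-sequential}: tasks mapped to interval $l'$ are not considered until every task of intervals $l''<l'$ has finished, so a task running at time $s$ may have been unstartable at an earlier time $t$ simply because $t$ lies in the window devoted to an earlier interval, not because $h_i(t)+a_{kj}>m_i$; the edge-creation rule of Lemma~\ref{tripledinterval} therefore fails across interval boundaries and the cumulative Hall argument does not go through. Second, even granting it, Case~(ii) of that argument needs the cumulative volume to be at most $m_i\bar{d}_l$ in order to force the unmatched slots to be empty, whereas your volume is $2m_i\bar{d}_l$ --- and the natural summation of Lemma~\ref{machineintervalvol} actually gives $3m_i\bar{d}_l$, not $2m_i\bar{d}_l$, since the per-interval ``first copy'' corrections alone sum to $2m_i\bar{d}_l$ over the geometric sequence of interval lengths. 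Third, your constants do not close: $C_j\leq 3\bar{d}_{l_j}$ combined with $\bar{d}_{l_j}\leq 2\hat{C}_j(\lambda)/\lambda$ and $\ev{\hat{C}_j(\lambda)/\lambda}=2\tilde{C}_j$ would give a $12$-approximation, not the claimed $24$, which signals that the bookkeeping was not actually carried out. The paper avoids all of this by bounding each machine-interval separately via the volume argument of Lemma~\ref{doubledinterval}, which yields $\tau_{il'}\leq 2\bigl(\bar{d}_{l'}+\sum_{j,k}v_{kj}^i\bar{z}_{kj}^{il'}/m_i\bigr)$, then summing the geometric series to get $4\bar{d}_l$ and applying~\dref{intcap} to the second terms to get the remaining $2\bar{d}_l$, i.e.\ $C_{kj}\leq 6\bar{d}_l$ (Lemma~\ref{ctbound}). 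You would need to either reproduce that per-interval accounting or genuinely repair the cumulative argument before Theorem~\ref{theorem2}, and hence the corollary, is established.
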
 
The rest of this section is devoted to the proof of Theorem~\ref{theorem2}. With a minor abuse of notation, we use $\tilde{C}_{kj}$ and $\tilde{C}_j$ to denote the completion time of task $(k,j)$ and job $j$, respectively, in the optimal solution to (LP2). Also, let $C_{kj}^\star$ and $C_j^\star$ denote the completion time of task $(k,j)$ and job $j$, respectively, in the optimal non-preemptive schedule. We can bound the optimal value of (LP2) as stated below. The proof is provided in Appendix~\ref{Proof3-1}.
\begin{lemma}
	\label{lp4bound}
	$\sum_{j=1}^N w_j \tilde{C}_j \leq \sum_{j=1}^N w_j C^\star_j = OPT$.
\end{lemma}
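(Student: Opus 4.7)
The plan is to show that (LP2) is a valid relaxation of the non-preemptive scheduling problem: starting from an optimal non-preemptive schedule with cost $\textit{OPT}=\sum_j w_j C_j^\star$, I will construct a feasible assignment to the LP variables whose LP objective value is at most $\textit{OPT}$. Since $\sum_j w_j \tilde{C}_j$ is the \emph{minimum} LP objective, this immediately yields $\sum_j w_j \tilde{C}_j \leq \textit{OPT}$. The argument mirrors the one used for Lemma~\ref{lp3bound} but must additionally verify the extra constraint~\dref{extraconst}, which is essentially automatic for non-preemptive schedules.

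Concretely, for each task $(k,j)$, let $i_{kj}\in\calM_{kj}$ be the unique machine on which it is processed in the optimal non-preemptive schedule, and let $l_{kj}$ be the unique index with $d_{l_{kj}-1} < C_{kj}^\star \leq d_{l_{kj}}$. Set $z_{kj}^{il}=1$ when $(i,l)=(i_{kj},l_{kj})$ and $0$ otherwise. For each job $j$, let $l_j$ be the unique index with $d_{l_j-1} < C_j^\star \leq d_{l_j}$, set $x_{jl_j}=1$ and all other $x_{jl}=0$, and set $C_j = \sum_l d_{l-1} x_{jl} = d_{l_j-1} < C_j^\star$.

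The verification of the LP constraints is then routine. Constraint~\dref{taskdemand} holds because each task is scheduled exactly once; constraint~\dref{jobdemand} similarly. For~\dref{extraconst}, non-preemptive processing forces $p_{kj}^{i_{kj}}\leq C_{kj}^\star \leq d_{l_{kj}}$, so the only possibly nonzero $z$-variable is supported on an index $l$ with $p_{kj}^i\leq d_l$. Constraint~\dref{frac} holds because the only nonzero term on the left-hand side (when $l \geq l_{kj}$) is $p_{kj}^{i_{kj}} \leq C_{kj}^\star \leq d_{l_{kj}} \leq d_l$. The key constraint~\dref{intcap} follows from machine capacity: the left-hand side equals the total volume $\sum v_{kj}^i$ of tasks that complete on machine $i$ by time $d_l$ in the optimal schedule, and these tasks are fully processed within $[0,d_l]$, so their total volume is bounded by $\int_0^{d_l} h_i(t)\,dt \leq m_i d_l$. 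Finally, \dref{xdef} holds because if job $j$ completes by $d_l$ (i.e.\ $\sum_{l'\leq l} x_{jl'} = 1$), then every task $(k,j)$ completes by $d_l$ on its assigned machine (i.e.\ $\sum_{l'\leq l}\sum_{i} z_{kj}^{il'}=1$).

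Combining these, the constructed variables form a feasible (even integral) solution to (LP2) with objective $\sum_j w_j d_{l_j-1} \leq \sum_j w_j C_j^\star = \textit{OPT}$. Since $\sum_j w_j \tilde{C}_j$ is the minimum of the LP objective over all feasible solutions, we conclude $\sum_j w_j \tilde{C}_j \leq \textit{OPT}$. The only subtle step is the capacity-based verification of~\dref{intcap}; every other constraint is a direct bookkeeping consequence of choosing the interval in which each task/job completes, so I do not expect any real obstacle beyond that.
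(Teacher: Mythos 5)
Your proposal is correct and follows essentially the same route as the paper: it constructs an integral feasible solution to (LP2) from the optimal non-preemptive schedule by setting $z_{kj}^{il}=1$ for the machine and interval in which each task completes and $x_{jl}=1$ for the interval in which each job completes, then notes the LP objective of this solution is at most $\textit{OPT}$. Your verification of constraints~\dref{extraconst} and~\dref{intcap} simply spells out details the paper leaves implicit by deferring to the proof of Lemma~\ref{lp3bound}.
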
 
\begin{definition}\label{def:counterpart}
Given $0 < \alpha \leq 1$, define $\hat{C}_j(\alpha)$ to be the starting point of the earliest interval $l$ for which $\alpha \leq \tilde{x}_{jl}$ in solution to (LP2).
\end{definition} 
Note that $\hat{C}_j(\alpha)$ is slightly different from Definition~\ref{calpha}, as we do not construct an actual schedule yet. We then have the following corollary which is a counterpart of Lemma~\ref{intcjalpha}. See Appendix~\ref{Proof3-4} for the proof.
\begin{corollary}
	\label{coro}
	$\int_{\alpha=0}^{1} \hat{C}_j(\alpha)d\alpha =\tilde{C}_j$
\end{corollary}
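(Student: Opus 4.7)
The plan is short: I would unpack the definition of $\hat{C}_j(\alpha)$ and directly reduce the integral to $\sum_{l=0}^L d_{l-1}\tilde{x}_{jl}$, which by the (LP2) constraint~\dref{cdef} equals $\tilde{C}_j$. The key is that the only ingredients we need are (i) the intended reading of Definition~\ref{def:counterpart} as a ``generalized inverse'' of the cumulative distribution of $\{\tilde{x}_{jl}\}$ and (ii) the (LP2) identity $\sum_{l=0}^L \tilde{x}_{jl}=1$ from~\dref{jobdemand}.

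Concretely, for each job $j$ I would define the cumulative masses $S_{-1}=0$ and $S_l=\sum_{l'=0}^{l}\tilde{x}_{jl'}$ for $l=0,\dots,L$. Using~\dref{jobdemand} we have $S_L=1$, so the half-open intervals $(S_{l-1},S_l]$ partition $(0,1]$. From Definition~\ref{def:counterpart} (read as: $\hat{C}_j(\alpha)$ is the starting point $d_{l-1}$ of the earliest interval $l$ whose cumulative mass $S_l$ is at least $\alpha$), one gets the piecewise-constant description
\begin{equation*}
\hat{C}_j(\alpha)=d_{l-1} \quad\text{for } \alpha\in (S_{l-1},S_l],\ l=0,\dots,L.
\end{equation*}

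With that description the integral splits as a sum of constant pieces:
\begin{equation*}
\int_{0}^{1}\hat{C}_j(\alpha)\,d\alpha \;=\; \sum_{l=0}^{L} d_{l-1}\,(S_l-S_{l-1}) \;=\; \sum_{l=0}^{L} d_{l-1}\,\tilde{x}_{jl}.
\end{equation*}
By constraint~\dref{cdef} of (LP2), the right-hand side is precisely $\tilde{C}_j$, which completes the argument.

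There is no real obstacle here; this corollary is essentially definitional, and the only care needed is to match the piecewise-constant levels of $\hat{C}_j(\cdot)$ to the coefficients $d_{l-1}$ appearing in the LP's definition of $\tilde{C}_j$. The role of the corollary is purely to serve as the (LP2)-side counterpart of Lemma~\ref{intcjalpha}, which is why no stretching/Slow-Motion factor $3(1+\epsilon)$ appears—the difference is that $\hat{C}_j(\alpha)$ is read off the LP solution rather than from an actual constructed schedule.
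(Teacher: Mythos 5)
Your proof is correct and follows essentially the same route as the paper's: both identify $\hat{C}_j(\alpha)$ as the piecewise-constant function taking value $d_{l-1}$ on the cumulative-mass interval $(\sum_{l'<l}\tilde{x}_{jl'},\sum_{l'\leq l}\tilde{x}_{jl'}]$, reduce the integral to $\sum_{l=0}^L d_{l-1}\tilde{x}_{jl}$, and conclude via constraint~\dref{cdef}. Your explicit note that Definition~\ref{def:counterpart} must be read cumulatively (as a generalized inverse) is a worthwhile clarification that the paper leaves implicit.
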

Consider the mapping procedure where we construct bipartite graph $\calG$ and match each task to a copy of some machine-interval. Below, we state a lemma which ensures that indeed we can find an integral (i.e. $0$ or $1$) matching in $\calG$. The proof can be found in Appendix~\ref{Proof3-2}.
\begin{lemma}
	\label{intmatching}
	Consider graph $\calG$ constructed in the mapping procedure. There exists an integral matching on the nonzero edges of $\calG$ in which each task is matched to some interval copy. Further, this matching can be found in polynomial time.
\end{lemma}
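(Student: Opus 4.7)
The plan is to reduce the claim to a direct application of Hall's marriage theorem on the bipartite graph $\calG = (U \cup V, E)$ restricted to its nonzero edges. The two structural properties of $\calG$ guaranteed by the construction in Substep 3.1 are: (i) every task node $u \in U$ corresponding to $(k,j)$ has total incident edge weight exactly $1$, because $\sum_{i \in \calM_{kj}}\sum_{l=0}^{L} \bar{z}_{kj}^{il} = 1$ (as noted below \dref{zbardef}) and the procedure distributes $\bar{z}_{kj}^{il}$ as edge weights without loss; and (ii) every interval-copy node $v \in V_i$ has total incident edge weight at most $1$, since the construction rule advances to the next copy of interval $l$ as soon as the cumulative weight on the current copy reaches $1$. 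Together, these say that the edge weights form a fractional matching in which all of $U$ is saturated while no node of $V$ is over-used.

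Next, I would verify Hall's condition on the graph of nonzero edges. Fix any $S \subseteq U$ and let $N(S) \subseteq V$ denote its neighborhood. By property (i), the total weight of nonzero edges incident to $S$ equals $|S|$. Every such edge has its other endpoint in $N(S)$, so the total weight on edges incident to $N(S)$ is at least $|S|$. By property (ii), each node of $N(S)$ carries incident weight at most $1$, so $|N(S)| \geq |S|$. Hence Hall's condition holds, and by Hall's marriage theorem an integral matching saturating $U$ exists on the nonzero edges of $\calG$. Such a matching can be constructed in polynomial time by any standard bipartite matching algorithm, e.g., Hopcroft--Karp.

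The main (and essentially only) obstacle is a careful bookkeeping check that Substep 3.1 actually realizes property (ii) and never runs out of copies while distributing the weight of a task. The total demand to be placed in copies of interval $l$ of machine $i$ is $\bar{z}^{il} = \sum_{k,j} \bar{z}_{kj}^{il}$, while the number of copies provisioned is $\lceil \bar{z}^{il} \rceil$, each of capacity $1$; hence the greedy filling always succeeds and property (ii) is maintained throughout. Once this bookkeeping is confirmed, no further argument is required: the problem collapses cleanly to a classical bipartite matching question and the lemma follows.
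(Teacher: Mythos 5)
Your proposal is correct and matches the paper's proof in its essential content: both arguments establish that the edge weights $w_{kj}^{ilc}$ constitute a fractional matching of value $|U|$ (every task node carries total weight exactly $1$ by $\sum_{i}\sum_{l}\bar{z}_{kj}^{il}=1$, every interval-copy node at most $1$ by the greedy filling rule) and conclude that an integral matching saturating $U$ exists on the nonzero edges and can be found in polynomial time. The only difference is that the paper invokes a black-box result on fractional matchings in bipartite graphs (Theorem 2.1.3 of Scheinerman and Ullman) for the integrality step, whereas you re-derive that step directly from Hall's theorem via the counting argument $|S| = \sum_{u\in S}(\text{weight of } u) \le \sum_{v \in N(S)}(\text{weight of } v) \le |N(S)|$; your derivation is correct and simply makes the argument self-contained.
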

Let $V_{il}$ denote the total volume of the tasks mapped to all the copies of interval $l$ of machine $i$. The following lemma bounds $V_{il}$ whose proof is provided in Appendix~\ref{Proof3-3}.
\begin{lemma}
	\label{machineintervalvol}
	For any machine-interval $(i,l)$, we have 
	\be \label{eq:machineintervalvol}
	V_{il} \leq \bar{d}_l m_i + \sum_{j \in \calJ} \sum_{k \in \calK_j} v_{kj}^i \bar{z}_{kj}^{il}.
	\ee
\end{lemma}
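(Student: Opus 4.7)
The plan is to bound $V_{il}$ by exploiting the non-increasing volume ordering used in the construction of $\calG$. Recall $V_{il}=\sum_c v_{\sigma(c)}^i$, where $\sigma(c)$ is the task matched by the integral matching of Lemma~\ref{intmatching} to the $c$-th copy of interval $l$ on machine $i$. Let $f(c)$ denote the first task (in the ordering~\dref{order3}) with a nonzero edge to copy $c$; since every task with an edge to copy $c$ appears no earlier than $f(c)$ in that ordering, its volume on machine $i$ is at most $v_{f(c)}^i$, and therefore $v_{\sigma(c)}^i \leq v_{f(c)}^i$. It thus suffices to show $\sum_{c=1}^{\lceil \bar{z}^{il}\rceil} v_{f(c)}^i \leq m_i \bar{d}_l + \sum_{j\in\calJ}\sum_{k\in\calK_j} v_{kj}^i\,\bar{z}_{kj}^{il}$.

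The term for $c=1$ alone absorbs $m_i\bar{d}_l$. The task $f(1)$ is the first task $(k,j)$ with $\bar{z}_{kj}^{il}>0$, which by~\dref{zbardef} forces $\tilde{z}_{kj}^{il}>0$, so Constraint~\dref{extraconst} of (LP2) yields $p_{kj}^i\leq d_l$. Combined with $a_{kj}\leq m_i$ (which must hold whenever $i\in\calM_{kj}$, since otherwise the packing constraint would be violated by $(k,j)$ alone), this gives $v_{f(1)}^i \leq m_i d_l \leq m_i\bar{d}_l$, using $\bar{d}_l = d_l/\lambda\geq d_l$.

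The main work is then to charge $\sum_{c\geq 2}v_{f(c)}^i$ to the fractional volume $\sum_{(k,j)} v_{kj}^i\bar{z}_{kj}^{il}$. Two properties of the construction in Substep~3.1 are crucial: (i) the procedure fills copies in order and only advances from copy $c-1$ to copy $c$ once the total edge-weight on copy $c-1$ reaches exactly $1$, so every non-final copy carries total incident edge-weight equal to $1$; and (ii) the tasks incident to any fixed copy form a consecutive block of the ordering~\dref{order3}. Let $l(c)$ denote the last task in that ordering with a nonzero edge to copy $c$. Since $f(c)$ is either equal to $l(c-1)$ (if the contribution of $l(c-1)$ spills across the boundary) or appears strictly later than $l(c-1)$, one has $v_{f(c)}^i\leq v_{l(c-1)}^i$ for every $c\geq 2$. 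Because every task with $w_{kj}^{i,l,c-1}>0$ has volume at least $v_{l(c-1)}^i$ on machine $i$, and the total edge-weight on copy $c-1$ equals $1$ whenever $c-1 < \lceil \bar{z}^{il}\rceil$,
\[ v_{l(c-1)}^i \;=\; v_{l(c-1)}^i\sum_{(k,j)} w_{kj}^{i,l,c-1} \;\leq\; \sum_{(k,j)} v_{kj}^i\,w_{kj}^{i,l,c-1}. \]
Summing over $c\in\{2,\ldots,\lceil \bar{z}^{il}\rceil\}$ and using $\sum_c w_{kj}^{ilc}=\bar{z}_{kj}^{il}$ telescopes the right-hand side into $\sum_{(k,j)} v_{kj}^i\,\bar{z}_{kj}^{il}$, which combined with the copy-$1$ bound yields~\dref{eq:machineintervalvol}. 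The main subtlety is the bookkeeping at copy boundaries, in particular verifying that every non-final copy really has total edge-weight exactly $1$, so that the ``first-contributor'' volumes can be legitimately charged against genuine fractional LP volume placed on the preceding copy.
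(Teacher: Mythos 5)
Your proof is correct and follows essentially the same route as the paper's: bound the volume of the task matched to copy $1$ by $m_i\bar d_l$ via Constraint~\dref{extraconst}, and charge the task matched to each copy $c\geq 2$ to the smallest-volume (last-in-ordering) task incident to copy $c-1$, whose volume is dominated by the weight-$1$ convex combination $\sum_{(k,j)}v_{kj}^i w_{kj}^{i,l,c-1}$, then telescope using $\sum_c w_{kj}^{ilc}=\bar z_{kj}^{il}$. The only cosmetic difference is that you route the comparison through the first task $f(c)$ incident to copy $c$, whereas the paper compares the matched task directly to the minimum-volume task on the preceding copy; both rest on the same ordering and boundary-bookkeeping observations.
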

Note that the second term in the right side of \dref{eq:machineintervalvol} can be bounded by $\bar{d}_l m_i$ which results in the inequality $V_{il} \leq 2\bar{d}_l m_i$. However, the provided bound is tighter and allows us to prove a better bound for the algorithm.
We next show that, using the greedy packing algorithm, we can schedule all the tasks of an interval $l$ in a bounded time.
 
In the case of packing single tasks in a single machine, the greedy algorithm is known to provide a $2$-approximation solution for minimizing  makespan~\cite{garey1975bounds}. 
The situation is slightly different in our setting as we require to bound the completion time of the last task  as a function of the total volume of tasks, when the maximum duration of all tasks in each interval is bounded. We state the following lemma and its proof in Appendix~\ref{Proof2-2} for completeness. 
\begin{lemma}
	\label{doubledinterval}
	Consider a machine with capacity $1$ and a set of tasks $J=\{1,2,\dots,n\}$. Suppose each task $j$ has size $a_j \leq 1$, processing time $p_j \leq 1$, and $\sum_{j \in J} a_j p_j \leq v$. Then, we can schedule all the tasks within the interval $(0,2\max\{1,v\}]$ using the greedy algorithm.
\end{lemma}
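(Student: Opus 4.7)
The plan is to bound the greedy schedule's makespan $T$ by $2\max\{1,v\}$ via a strip-packing-style argument. I will let $j^\star$ denote the task that finishes last in the greedy schedule, with start time $s^\star$, duration $p^\star\le 1$, and size $a^\star\le 1$, so that $T=s^\star+p^\star$. The volume identity $\int_0^T h(t)\,dt=v$ (where $h(t)$ is the sum of sizes of tasks running at time $t$), together with the fact that $j^\star$ contributes height $a^\star$ throughout $[s^\star,T]$, gives $\int_0^{s^\star} h(t)\,dt \le v - a^\star p^\star$.

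The key structural claim I will establish is a density property stemming from the greedy's choice of $s^\star$ as the earliest feasible start time for $j^\star$: for every $s\in[0,s^\star]$, the length-$p^\star$ window $[s,s+p^\star]$ must contain at least one instant $\tau$ with $h(\tau)>1-a^\star$, for otherwise that window would have been free enough to host $j^\star$, contradicting the minimality of $s^\star$ (note that the ``shift-back'' feature of the greedy guarantees the same property in the final schedule, since heights only grow when later tasks are packed in). Combined with the piecewise-constant structure of $h$ (which changes only at task start/end events), this density property forces the set $G:=\{t\in[0,s^\star]:h(t)>1-a^\star\}$ to have Lebesgue measure at least a constant fraction of $s^\star$ minus a boundary term of order $p^\star$.

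Substituting this measure bound into $\int_0^{s^\star} h(t)\,dt \ge (1-a^\star)|G|$ and combining with the volume bound from the first paragraph yields an upper bound on $s^\star$ in terms of $v$, $a^\star$, and $p^\star$. I then plan a case analysis on $a^\star$ (distinguishing $a^\star\le 1/2$, where $1-a^\star \ge 1/2$ gives an essentially $2v$ bound, from $a^\star>1/2$, where $j^\star$ is ``big'' and one separately uses the observation that big tasks cannot overlap), together with a separate treatment of the small-volume regime $v<1$, where the trivial bound $T\le 2$ follows because $p^\star\le 1$ and all tasks are placed within a single $p^\star$-window when $v<1$. Assembling the two regimes gives $T\le 2\max\{1,v\}$.

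The hard part will be the density-to-measure step — showing that the per-window guarantee, combined with $h$'s step structure, implies high utilization on a constant fraction of $[0,s^\star]$. My approach will be to tile $[0,s^\star]$ with disjoint length-$p^\star$ windows, exhibit a ``high'' subinterval within each window by using that a point of $h(\tau)>1-a^\star$ corresponds to an actual running task whose contribution to $h$ persists until the next task-boundary event, and absorb the boundary losses of at most $p^\star\le 1$ into the additive slack separating $2v$ from $2\max\{1,v\}$.
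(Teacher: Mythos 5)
Your route is genuinely different from the paper's, and as outlined it does not close. The paper's proof is a short pairing argument: writing $V=\max\{1,v\}$ and assuming the makespan exceeds $2V$, it pairs each time $t\in(0,V]$ with $t+V$; any task running at $t+V$ has length at most $1\le V$, hence was still unscheduled at (the completion event preceding) $t$ and was blocked there, so $h(t)+h(t+V)>1$. Integrating over $t\in(0,V]$ makes the total volume exceed $V\ge v$, a contradiction. Your argument instead exploits only the blocking of the single last-finishing task $j^\star$, and this loses too much in two places. First, the density-to-measure step fails as formulated: the per-window property (``every window of length $p^\star$ contains an instant with $h>1-a^\star$'') is consistent with $h$ exceeding $1-a^\star$ only on isolated spikes of arbitrarily small total measure, so it cannot force $|G|$ to be a constant fraction of $s^\star$. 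What is actually true --- because the greedy re-examines every unscheduled task at every completion event and $h$ is constant between events --- is the pointwise bound $h(t)>1-a^\star$ for \emph{all} $t\in(0,s^\star)$; that should be proved directly, not via windows. Moreover, any multiplicative constant $c<1$ or additive loss of order $p^\star$ is fatal to your final assembly: for $v\ge1$ one has $2\max\{1,v\}=2v$, so there is no ``additive slack'' left to absorb a $+p^\star$ term, contrary to your plan.

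Second, and more fundamentally, the case $a^\star>1/2$ is a real gap. The bound $h(t)>1-a^\star$ degenerates as $a^\star\to1$, and the tasks blocking $j^\star$ need not themselves be big, so ``big tasks cannot overlap'' gives no control: the inequalities $\int_0^{s^\star}h(t)\,dt\le v-a^\star p^\star$ and $h>1-a^\star$ only yield $s^\star\le(v-a^\star p^\star)/(1-a^\star)$, hence $T\le\bigl(v+p^\star(1-2a^\star)\bigr)/(1-a^\star)$, which for, say, $a^\star=0.9$, $p^\star=0.1$, $v=1$ evaluates to $9.2$ rather than the required $2$. Controlling this case requires using the blocking of \emph{all} tasks, i.e., comparing the height at time $t$ with the height a full $V$ units later --- which is exactly the paper's pairing. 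I would replace the ``last task only'' density argument with that pairing.
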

Now consider a machine-interval $(i,l)$. Note that Lemma~\ref{machineintervalvol} bounds the total volume of tasks while Constraint~\dref{extraconst} ensures that duration of each task is less than $d_l$. Thus, by applying Lemma~\ref{doubledinterval} on the normalized instance, in which size and length of tasks are normalized by $m_i$ and $d_l$, respectively, we guarantee that we can schedule all the task within a time interval of length $2d_l+2\sum_{j \in \calJ} \sum_{k \in \calK_j} v_{kj}^i \bar{z}_{kj}^{il}/m_i$. Moreover, the factor $2$ is tight as stated in the following lemma. The proof can be found in Appendix~\ref{Proof2-3}.
\begin{lemma}
	\label{tight}
We need an interval of length at least $2\max(1,v)$ to be able to schedule any list of tasks as in Lemma~\ref{doubledinterval} using any algorithm.
\end{lemma}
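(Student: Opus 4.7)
The plan is to exhibit an adversarial family of instances for which \emph{any} scheduling algorithm (not just greedy) must produce makespan arbitrarily close to $2\max(1,v)$, thereby confirming that the factor $2$ in Lemma~\ref{doubledinterval} cannot be improved. The construction idea is to make every task just barely too large to share the machine with any other task, so that the single-machine capacity forces a strictly sequential schedule.

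Concretely, fix a small $\epsilon>0$ and a target volume $v\geq 1$. I would take $n=\lceil 2v/(1+2\epsilon)\rceil$ identical tasks with size $a_j=1/2+\epsilon$ and processing time $p_j=2v/\bigl(n(1+2\epsilon)\bigr)$. One checks directly that $a_j\leq 1$, $p_j\leq 1$, and $\sum_{j=1}^n a_j p_j = v$, so this instance satisfies the hypotheses of Lemma~\ref{doubledinterval}.

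The key observation is that any two of these tasks together have size $1+2\epsilon$, which strictly exceeds the machine capacity of $1$. Therefore no two tasks can be processed in parallel, and every feasible schedule--greedy or otherwise--must run them one after another. The resulting makespan is exactly $n\,p_j=2v/(1+2\epsilon)$, which tends to $2v=2\max(1,v)$ as $\epsilon\to 0$. Taking $v$ slightly above $1$ (with $n=2$) also covers the boundary: the same instance forces makespan arbitrarily close to $2=2\max(1,v)$, which shows that even for $v$ at the transition the factor $2$ cannot be shaved.

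No substantial obstacle arises; the argument is essentially a one-line counting over pairs. The only care needed is to state the conclusion as a limiting claim--``for every $\delta>0$ there is an instance satisfying the hypotheses of Lemma~\ref{doubledinterval} whose makespan under any algorithm is at least $2\max(1,v)-\delta$''--since $\epsilon$ must remain strictly positive to keep each pair of tasks in conflict.
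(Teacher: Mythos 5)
Your proof is correct and rests on the same key idea as the paper's: tasks of size $1/2+\epsilon$ (resp.\ $1/2+\eta$ in the paper) pairwise exceed the capacity, so every algorithm must serialize them and the makespan equals the sum of processing times, which is pushed toward $2\max(1,v)$ while the total volume stays within $v$. The only difference is the parameterization — you use $n$ identical tasks with total volume exactly $v$ in the regime $v\geq 1$, whereas the paper uses geometrically decreasing processing times $1,1/2,\dots,1/2^{n-1}$ with volume just under $1$ in the regime $\max(1,v)=1$ — and both correctly present the conclusion as a limiting ($2\max(1,v)-\zeta$ for every $\zeta>0$) statement.
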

Hence, Lemmas~\ref{doubledinterval} and \ref{tight} imply that applying the greedy algorithm to schedule the tasks of each machine-interval, provides a tight bound with respect to the total volume of tasks in that machine-interval.
Let $C_{kj}$ denote the completion time of task $(k,j)$ under \AIII. Then we have the following lemma.
\begin{lemma}
	\label{ctbound}
	Suppose that task $(k,j)$ is mapped to the $l$-th interval of machine $i$ at the end of Substep 3.1. Then, $C_{kj}\leq 6 \bar{d}_l$.
\end{lemma}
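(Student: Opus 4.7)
The plan is to bound $C_{kj}$ by the makespan that the greedy packing in Substep~3.2 needs to complete the \emph{entire} collection of tasks mapped to machine-intervals $(i,0),\ldots,(i,l)$. Since $(k,j)$ belongs to this collection, $C_{kj}$ is automatically bounded by this makespan. I would then invoke Lemma~\ref{doubledinterval} on the combined collection after a suitable normalization.

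First I would control the aggregate volume $\sum_{l'=0}^{l} V_{il'}$. Summing Lemma~\ref{machineintervalvol} over $l'=0,\ldots,l$,
\begin{equation*}
\sum_{l'=0}^{l} V_{il'} \;\leq\; m_i \sum_{l'=0}^{l} \bar{d}_{l'} \;+\; \sum_{(k',j')} v_{k'j'}^{i} \sum_{l'=0}^{l} \bar{z}_{k'j'}^{il'}.
\end{equation*}
For the first sum, since in (LP2) one replaces $\epsilon$ by $1$ so $d_l=2^l$ and $\bar{d}_{l'}=d_{l'}/\lambda$, the geometric identity $\sum_{l'=0}^{l}2^{l'} \leq 2^{l+1}$ gives $\sum_{l'=0}^{l}\bar{d}_{l'}\leq 2\bar{d}_l$. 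For the second sum, the Slow-Motion definition \dref{zbardef}, together with $\lambda\leq 1$ and the definition of $l^\star$, yields $\bar{z}_{k'j'}^{il'}\leq \tilde{z}_{k'j'}^{il'}/\lambda$; then Constraint~\dref{intcap} at interval $l$ gives $\sum_{l'=0}^{l}\sum_{(k',j')} p_{k'j'}^{i}a_{k'j'}\tilde{z}_{k'j'}^{il'}\leq m_i d_l$, so the second sum is at most $m_i\bar{d}_l$. Combining, $\sum_{l'=0}^{l} V_{il'}\leq 3\,m_i \bar{d}_l$.

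Next, because the matching in Substep~3.1 uses only edges of strictly positive weight and Constraint~\dref{extraconst} forces $\bar{z}_{k'j'}^{il'}=0$ whenever $p_{k'j'}^{i}>d_{l'}$, every task mapped to machine-interval $(i,l')$ with $l'\leq l$ has size at most $m_i$ and processing time at most $d_{l'}\leq d_l\leq \bar{d}_l$. Normalizing sizes by $m_i$ and times by $\bar{d}_l$ puts the combined instance into the setting of Lemma~\ref{doubledinterval} with all normalized sizes and lengths in $[0,1]$ and normalized total volume at most $3$. The lemma then guarantees the greedy packing completes every task within actual time $2\max\{1,3\}\bar{d}_l=6\bar{d}_l$, and in particular $C_{kj}\leq 6\bar{d}_l$.

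The main obstacle is ensuring that the interval-by-interval packing in Substep~3.2, together with the ``shift back future tasks'' clause, is genuinely a greedy algorithm on the \emph{combined} list of all tasks from intervals $0,\ldots,l$ in the sense required by Lemma~\ref{doubledinterval}; otherwise a naive per-interval accounting gives only $\sum_{l'=0}^{l}2\max\{\bar{d}_{l'},V_{il'}/m_i\}\leq 10\bar{d}_l$, losing the desired factor. I would therefore argue that processing intervals in increasing order while allowing every task to be shifted to the earliest time slot compatible with the capacity constraint is precisely the greedy rule analyzed in Lemma~\ref{doubledinterval}, which recovers the factor~$6$.
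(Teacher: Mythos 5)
Your volume bookkeeping is exactly the paper's: summing Lemma~\ref{machineintervalvol} over $l'=0,\dots,l$, bounding $\sum_{l'=0}^{l}\bar{d}_{l'}\leq 2\bar{d}_l$ by the geometric structure, and using $\bar{z}_{k'j'}^{il'}\leq\tilde{z}_{k'j'}^{il'}/\lambda$ together with Constraint~\dref{intcap} to bound the correction terms by $m_i\bar{d}_l$, for a total of $3m_i\bar{d}_l$. The gap is in how you convert this into a makespan bound. Lemma~\ref{doubledinterval} analyzes a single greedy pass over one list of tasks; Substep~3.2, as described and as reflected in the paper's own analysis (which introduces a separate block length $\tau_{il'}$ per machine-interval), schedules the intervals as consecutive blocks, only turning to the tasks of $(i,l'+1)$ after those of $(i,l')$ are done, and the ``shift back'' clause is optional. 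The resulting concatenated schedule need not coincide with (or be dominated by) the greedy schedule of the combined list, and the pairing argument inside the proof of Lemma~\ref{doubledinterval} --- that $h(t)+h(t+V)>1$ since otherwise greedy would have moved a task earlier --- breaks across block boundaries. So invoking Lemma~\ref{doubledinterval} once on the union with normalized volume $3$ requires a claim about the algorithm that you assert but do not prove.

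The detour is also unnecessary, because your reason for rejecting the per-interval accounting rests on an arithmetic overestimate. Applying Lemma~\ref{doubledinterval} to machine-interval $(i,l')$ alone, the normalized volume satisfies $v_{l'}\leq 1+\sum_{(k',j')}v_{k'j'}^{i}\bar{z}_{k'j'}^{il'}/(m_i\bar{d}_{l'})$ by Lemma~\ref{machineintervalvol}, so $2\max\{1,v_{l'}\}\,\bar{d}_{l'}\leq 2\bar{d}_{l'}+2\sum_{(k',j')}v_{k'j'}^{i}\bar{z}_{k'j'}^{il'}/m_i$ --- not $4\bar{d}_{l'}+2\sum v\bar{z}/m_i$; bounding the maximum by the sum of its arguments double-counts the $\bar{d}_{l'}$ term. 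Summing these block lengths over $l'=0,\dots,l$ gives $4\bar{d}_l+2\bar{d}_l=6\bar{d}_l$ with no assumption about shifting across intervals. This per-interval argument is precisely the paper's proof of the lemma.
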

\begin{proof}
	Let $T_{il}$ denote the completion time of the last task of machine-interval $(i,l)$, and   $\tau_{il^\prime}$ be the length of the time interval that {\AIII} uses to schedule tasks of machine-interval $(i,l)$. Then,
	\begin{equation}
	\begin{aligned}
	C_{kj} &\leq T_{il}=\sum_{l^\prime=0}^l \tau_{il^\prime} \stackrel{(a)} \leq 2 \times \sum_{l^\prime=0}^l \big (\bar{d}_{l^\prime}+ \sum_{j^\prime \in \calJ} \sum_{k^\prime \in \calK_{j^\prime}} v_{k^\prime j^\prime}^i \bar{z}_{k^\prime j^\prime}^{il^\prime}/m_i\big )\\
	& \stackrel{(b)}\leq 4 \bar{d}_{l}+2\sum_{l^\prime=0}^l \sum_{j^\prime \in \calJ} \sum_{k^\prime \in \calK_{j^\prime}} v_{k^\prime j^\prime}^i \bar{z}_{k^\prime j^\prime}^{il^\prime}/m_i\stackrel{(c)}\leq 6\bar{d}_{l}.
	\end{aligned}
	\end{equation}
	Inequality (a) is due to Lemma~\ref{machineintervalvol} and Lemma~\ref{doubledinterval}, while Inequality (b) is because $d_{l^\prime-1}=d_{l^\prime}/2$. Further, Inequality (c) is by Constraint~\dref{intcap}.
\end{proof}
\begin{proof}[Proof of Theorem~\ref{theorem2}]
	Let $l$ denote the end point of the interval in which task $(k,j)$ has the last non-zero fraction according to $\bar{z}_{kj}^{il}$. Then,
	\begin{equation}
	\label{helpineq}
	\bar{d}_l=2^l/\lambda \stackrel{(\star)}\leq 2\hat{C}_j(\lambda)/\lambda.
	\end{equation}
	  First note that $\epsilon$ is replaced by $1$ in Equation~\dref{partition}. Further, Inequality $(\star)$ follows from the definition of $\hat{C}_j(\lambda)$ (Definition~\ref{def:counterpart}), and the fact that $d_l$'s are multiplied by $1/\lambda$. Therefore, $\hat{C}_j(\lambda)/\lambda$ is the start point of the interval in which job $j$ is completed, and, accordingly, $2\hat{C}_j(\lambda)/\lambda$ is the end point of that interval. Thus, $2^l/\lambda $, the end point of the interval in which task $(k,j)$ is completed, has to be at most $2\hat{C}_j(\lambda)/\lambda$, the end point of the interval in which job $j$ is completed. 
	 
	 Let $C_{kj}$ and $C_j$ be the completion time of task $(k,j)$ and job $j$ under \AIII. Recall that in the mapping procedure, we only map a task to some interval $l^\prime$ in which part of the task is assigned to that interval after Slow-Motion applied (in other words, $\bar{z}_{kj}^{il^\prime} > 0$). Thus, task $(k,j)$ that has its last non-zero fraction in interval $l$ (by our assumption) is mapped to some interval $l^\prime \leq l$, because $\bar{z}_{kj}^{il''}=0$ for intervals $l'' > l$. Suppose task $(k_j,j)$ is the last task of job $j$ and finishes in interval $l_j$ in our non-preemptive schedule. Then, by Lemma~\ref{ctbound} and Equation~\dref{barintlength}, we have $C_j=C_{i_jj} \leq 6 \bar{d}_l= \frac{6}{\lambda} 2^{l_j}$.
Recall that $\tilde{C}_j$ denotes the completion time of job $j$ in an optimal solution of (LP2). Hence, 
	\begin{equation*}
	\begin{aligned}
	\expect{\sum_{j \in \calJ} w_j C_j} &\leq \expect{ \sum_{j \in \calJ} w_j\frac{6}{\lambda} 2^{l_j}} \stackrel{(a)} \leq 12 \times \expect{ \sum_{j \in \calJ} w_j \hat{C}_j(\lambda)/\lambda}\\
	& \stackrel{(b)}= 12 \times \sum_{j \in \calJ} w_j \int_{\lambda=0}^{1} \frac{\hat{C}_j(\lambda)}{\lambda} 2\lambda d\lambda \stackrel{(c)} \leq 24 \times \sum_{j \in \calJ} w_j\tilde{C}_j,
	\end{aligned}
	\end{equation*}
	where in the above, (a) is by the second part of \dref{helpineq} for $l=l_j$, (b) is by definition of expectation with respect to $\lambda$, with pdf $f(\lambda)=2\lambda$, and
	 (c) is by Corollary~\ref{coro}. Using the above inequality and Lemma~\ref{lp4bound},
	\begin{equation}
	\label{final}
	\begin{aligned}
		\expect{\sum_{j \in \calJ} w_j C_j} \leq 24 \times \sum_{j \in \calJ} w_j C_j^\star=24 \times \text{OPT}.
	\end{aligned}
	\end{equation}
	
	By applying de-randomization procedure (see Appendix~\ref{derand}), we can find $\lambda=\lambda^\star$ in polynomial time for which the total weighted completion time is less that its expected value in \dref{final}. This completes the proof of Theorem~\ref{theorem2}.
\end{proof}
\section{Special Case: Preemption and Single-Machine Placement set}
\label{specialcase}
In previous sections, we studied the parallel-task job scheduling problem for both cases when migration of tasks (among machines in its placement set) is allowed or not, and provided $(6+\epsilon)$ and $24$ approximation algorithms, respectively. In this section, we consider a special case when only one machine is in the placement set of each task (e.g., it is the only machine that has the required data for processing the task), and preemption is allowed.
\begin{corollary}
	Consider the parallel-task job scheduling problem when there is a specific machine to process each task and preemption is allowed. For any $\epsilon >0$, the sum of the weighted completion times of jobs under \AII, in Section~\ref{sec:secondpremeptive}, is at most $(4+\epsilon) \times \textit{OPT}$.
\end{corollary}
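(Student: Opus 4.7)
The plan is to run exactly the same algorithm {\AII} with the same linear program (LP1), and save a factor of $3/2$ by tightening Lemma~\ref{tripledinterval} in the single-machine-placement setting. When every task has $|\calM_{kj}|=1$, each LP1 variable $z_{kj}^{il}$ is non-zero for exactly one machine $i$, so every task fraction in the greedy list belongs to a single machine. In particular, the clause ``some fraction of task $(k,j)$ is not already scheduled on some other machine'' in Step~2 of {\AII} never triggers, and the cross-machine preemption-on-completion rule only reorders work within each machine. Thus the schedule produced by {\AII} decomposes into independent preemptive greedy-list packings, one per machine.

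The next step is to bound the finish time $\tau_l$ of all task fractions of intervals $\leq l$. Fix a machine $i$. By Constraint~\dref{intcap}, the total volume of task fractions of intervals $\leq l$ on $i$ is at most $m_i d_l$, and by Constraint~\dref{frac} combined with $z_{kj}^{il'}\leq 1$, each task fraction has duration at most $d_l$. Normalizing sizes by $m_i$ and durations by $d_l$, these are exactly the hypotheses of Lemma~\ref{doubledinterval} with $v\leq 1$, so a greedy single-machine list schedule of these fractions fits inside $(0,2d_l]$. Since priority preemption only moves capacity from higher-indexed intervals to lower-indexed ones, restricting {\AII}'s schedule to intervals $\leq l$ is no longer than the corresponding non-preemptive greedy list schedule of these fractions alone; hence $\tau_l \leq 2 d_l$, the improved analog of Lemma~\ref{tripledinterval} with constant $2$ in place of $3$.

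Propagating this through the remainder of Section~\ref{pg}, the proof of Lemma~\ref{intcjalpha} goes through verbatim with $3$ replaced by $2$, giving
\begin{equation*}
\int_{\alpha=0}^{1} C_j(\alpha)\,d\alpha \;\leq\; 2(1+\epsilon)\,\tilde{C}_j.
\end{equation*}
Feeding this into the Slow-Motion calculation of Lemma~\ref{smbound} (where the density $f(\lambda)=2\lambda$ introduces the second factor of $2$) yields $\EE[\bar{C}_j^\lambda]\leq 4(1+\epsilon)\tilde{C}_j$. Combining with Lemma~\ref{lp3bound} and the de-randomization in Appendix~\ref{derand} gives a deterministic bound $\sum_j w_j C_j \leq 4(1+\epsilon)\,\text{OPT}$, which after rescaling $\epsilon$ becomes $(4+\epsilon)\,\text{OPT}$.

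The main obstacle I expect is the monotonicity claim used in Step~2: namely, that the priority-preemption in {\AII} does not lengthen the completion time of the last low-indexed task relative to a plain non-preemptive greedy packing of those tasks alone. Intuitively each preemption only returns capacity to an earlier-interval task that was otherwise blocked, so any completion event for intervals $\leq l$ occurs no later than in a schedule that ignores the higher-indexed tasks entirely; but this needs an explicit exchange argument before Lemma~\ref{doubledinterval} can be legitimately invoked, and it is the one place where the analysis departs from the single-list setting for which Lemma~\ref{doubledinterval} was originally stated.
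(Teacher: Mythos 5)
Your overall plan is exactly the paper's: the paper's own proof is a one-line remark that the factor $3$ in Lemma~\ref{tripledinterval} drops to $2$ when $|\calM_{kj}|=1$, and your propagation of the improved bound through Lemma~\ref{intcjalpha}, the Slow-Motion computation of Lemma~\ref{smbound}, Lemma~\ref{lp3bound}, and the de-randomization is correct and is what the paper intends. The per-machine decomposition observation is also right.

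The gap is in how you establish $\tau_l \leq 2d_l$, and it is larger than the one obstacle you flag. Lemma~\ref{doubledinterval} is a statement about the \emph{non-preemptive, single-pass} greedy used in Substep~3.2 of {\AIII}; its proof explicitly uses non-preemptiveness (``the greedy scheduling can move tasks from time $t+1$ to time $t$ as the greedy scheduling is non-preemptive''). The list scheduling in Step~2 of {\AII} is a different algorithm: it rescans the entire list at every completion event, and this rescan can \emph{evict a currently running fraction of the same interval} when an earlier-listed fraction becomes schedulable (e.g., sizes $0.5, 0.6, 0.5$ on a unit-capacity machine: the first and third run together, and when the first completes the rescan admits the second and must preempt the third). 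So even after you resolve the cross-interval monotonicity issue you identify, you are left with a preemptive schedule of the low-indexed fractions to which Lemma~\ref{doubledinterval} simply does not apply, and its proof does not transfer: in a preemptive schedule, a fraction of duration at most $d_l$ running at time $t+d_l$ need not have been schedulable throughout $[t, t+d_l)$, which is what the non-preemptive argument exploits.

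The correct route to $\tau_l \leq 2d_l$ is to rerun the bipartite-graph/Hall's-theorem argument of Lemma~\ref{tripledinterval} in the single-machine setting, which is where the placement constraint actually enters. With one machine per task, the absence of an edge between $z_s$ and $z_t$ forces $S_{il}(s) \subseteq S_{il}(t)$ on that machine (capacity is the only reason greedy declines to run a fraction earlier), and the nested-sets argument of Lemma~\ref{volumebound} then shows that a violation of Hall's condition would yield a task running at all of $\tilde{U}$ and all of $V\setminus N_{\tilde U}$, hence with total fractional duration over intervals $\leq l$ exceeding $d_l$, contradicting Constraint~\dref{frac}. This eliminates Case~(i) of Lemma~\ref{tripledinterval} --- the case responsible for the third $d_l$ --- and the perfect matching plus the volume bound \dref{intcap} then give $T_l \leq 2D_l$. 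In the multi-machine setting this elimination fails because a fraction can be blocked at time $t$ by a sibling fraction of the same task running on another machine, which is precisely why the general lemma only achieves $3d_l$.
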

\begin{proof}
	The proof is straight forward and similar to proof of Theorem~\ref{prop1}. Specifically, the factor $3$ needed to bound the solution of the greedy policy is reduced to $2$ due to the fact that placement constraint is not needed to be enforced here, since there is only one machine for each task.
\end{proof}
We can show that there is a slightly better approximation algorithm to solve the problem in this special case, that has an approximation ratio $4$. The algorithm 
uses a relaxed LP, based on linear ordering variables (e.g., \cite{gandhi2008improved,mastrolilli2010minimizing,shafiee2017improved}) to find an efficient ordering of jobs. Then it applies a simple list scheduling to pack their tasks in machines subject to capacity constraints.  The details are as follows.  
\subsection{\textbf{Relaxed Linear Program (LP3)}}
Note that each task has to be processed in a specific machine. Each job consists of up to $M$ (number of machines) different tasks. We use $\calM_j$ to denote the set of machines that have tasks for job $j$. Task $i$ of job $j$, denoted as task $(i,j)$, requires a specific amount $a_{ij}$ of machine $i$’s resource ($a_{ij} \leq m_i$) for a specific time duration $p_{ij}>0$. We also define its volume as $v_{ij}=a_{ij}p_{ij}$. The results also hold in the case that a job has multiple tasks on the same machine. 

For each pair of jobs, we define a binary variable which indicates which job is finished before the other one. Specifically, we define $\delta_{j j^\prime} \in \{0,1\}$ such that $\delta_{j j^\prime}=1$ if job $j$ is completed before job $j^\prime$, and $\delta_{j j^\prime}=0$ otherwise. Note that by the synchronization constraint \dref{eq:jobcompletiontime}, the completion of a job is determined by its last task. If both jobs finish at the same time, we set either one of $\delta_{j j^\prime}$ or $\delta_{j^\prime j}$ to $1$ and the other one to $0$, arbitrarily. By relaxing the integral constraint on binary variables, we formulate the following LP:
\begin{subequations}
	\label{RLP1}
	\begin{align}
	\label{LPobj1}
	\min \ \ &\sum_{j \in \calJ} w_j C_j\ \ \quad \mathbf{(LP3)} \\
	\label{volume}
	& m_i C_j \geq  v_{ij}+ \sum_{j^\prime \in \calJ, j^\prime \neq j } v_{ij^\prime} \delta_{j^\prime j}, \ \  j \in \calJ, i \in \calM_j\\
	\label{process}
	& C_j \geq p_{ij}, \ \ j \in \calJ, i \in \calM_j\\
	\label{prec1}
	&\delta_{j j^\prime}+\delta_{j^\prime j}=1, \ \  j \neq j^\prime, \ j,j^\prime \in \calJ\\
	\label{pstv}
	&\delta_{j j^\prime} \geq 0, \ \ j,j^\prime \in \calJ
	\end{align}
\end{subequations}
Recall the definition of job completion time $C_j$ and task completion time $C_{ij}$ in Section~\ref{ProbState}. In (LP3), \dref{volume} follows from the definition of  $\delta_{j j^\prime}$, and the fact that the tasks which need to be served on machine $i$ are processed by a single machine of capacity $m_i$. 
It states that the total volume of tasks that can be processed during the time period $(0,C_j]$ by machine $i$ is \textit{at most} $m_iC_j$. This total volume is given by the right-hand-side of~\dref{volume} which basically sums the volumes of the tasks on machine $i$ that finish before job $j$ finishes its corresponding tasks at time $C_j$, plus the volume of task $(i,j)$ itself. 
Constraint~\dref{process} is due to the fact that $C_j \geq C_{ij}$ and each task cannot be completed before its processing time $p_{ij}$. \dref{prec1} indicates that for each two jobs, one precedes the other. Further, we relax the binary ordering variables to be fractional in \dref{pstv}.   

Note that the optimal solution to (LP3) might be an infeasible schedule as (LP3) replaces the tasks by sizes of their volumes and it might be impossible to pack the tasks in a way that matches the obtained completion times from (LP3).
\begin{remark}
	(LP3) can be easily modified to allow each job to have multiple tasks on the same machine. We omit the details to focus on the main ideas. 
\end{remark}  
\subsection{\textbf{Scheduling Algorithm: \boldmath{\AI}}}~\\
\label{sec:firstpremeptive}
The {\AI} algorithm has two steps: 

{\bf Step 1: Solve (LP3) to find an ordering of jobs.} Let $\tilde{C}_j$ denote the optimal solution to (LP3) for completion time of job $j \in \calJ$. We order jobs based on their $\tilde{C}_j$ values in a nondecreasing order. Without loss of generality, we re-index the jobs such that
\begin{equation}
\label{order}
\tilde{C}_1 \leq \tilde{C}_2 \leq ... \leq \tilde{C}_N.
\end{equation}
Ties are broken arbitrarily. 

{\bf Step 2: List scheduling based on the obtained ordering.}
For each machine $i$, the algorithm maintains a list of tasks such that for every two tasks $(i,j)$ and $(i,j^\prime)$ with $j < j^\prime$ (according to ordering~\dref{order}), task $(i,j)$ appears before task $(i, j^\prime)$ in the list. On machine $i$, the algorithm scans the list starting from the first task. It schedules a task $(i,j)$ from the list if the machine has sufficient remaining resource to accommodate it. Upon completion of a task, the algorithm preempts the schedule, removes the completed task from the list and updates the remaining processing time of the tasks in the list, and starts scheduling the tasks in the updated list. Observe that this list scheduling is slightly different from the greedy scheme used in \AII.

A pseudocode for the algorithm can be found in Appendix~\ref{pseudocode2}. 
\subsection{\textbf{Performance Guarantee}}
The main result regarding the performance of {\AI} is summarized in the following theorem. The proof of the theorem, and any supporting lemmas, is presented in Appendix~\ref{Proof1}.

\begin{theorem}
	\label{theorem1}
	The scheduling algorithm {\AI} (Section~\ref{sec:firstpremeptive}) is a $4$-approximation algorithm for the problem of parallel-task jobs scheduling with packing and single-machine placement constraints.
\end{theorem}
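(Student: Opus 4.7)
The plan is to prove the theorem in two stages: first, validate (LP3) as a lower bound on OPT; second, show $C_j \leq 4\tilde{C}_j$ for every job under \AI. For the lower bound, I would take any optimal preemptive schedule with completion times $\{C_j^\star\}$ and construct a feasible (LP3) solution by setting $\delta^\star_{j'j}=1$ iff $C_{j'}^\star \leq C_j^\star$ (ties broken by job index), and $\delta^\star_{j'j}=0$ otherwise. Constraints~\dref{prec1} and \dref{pstv} are immediate; constraint~\dref{volume} follows because the cumulative volume processed on machine $i$ up to $C_j^\star$ cannot exceed $m_i C_j^\star$ and contains $v_{ij}$ plus $\sum_{j'\neq j} v_{ij'}\delta^\star_{j'j}$; and \dref{process} follows from $C_j^\star \geq C_{ij}^\star \geq p_{ij}$. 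Thus $\sum_j w_j \tilde{C}_j \leq \sum_j w_j C_j^\star = \text{OPT}$.

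For the per-task bound, fix $(i,j)$ and use the key structural property of the list-scheduling step: whenever $(i,j)$ is not running at some time $t \leq C_{ij}$, the greedy re-pack triggered by the most recent task completion must have considered $(i,j)$ in its turn (per the re-indexed list order) and skipped it, i.e., $h_{<j}(t) + a_{ij} > m_i$, where $h_{<j}(t)$ denotes the load from already-packed tasks with index strictly less than $j$. Since no task completes between successive re-packings, the inequality persists throughout the idle set $S = \{t \leq C_{ij}: (i,j)\text{ not running}\}$. Integrating yields $(m_i - a_{ij})(C_{ij} - p_{ij}) < \int_S h_{<j}(t)\,dt \leq V^i_{<j} := \sum_{j'<j} v_{ij'}$, and hence $C_{ij} < V^i_{<j}/(m_i - a_{ij}) + p_{ij}$.

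Translating the right-hand side into $O(\tilde{C}_j)$ relies on two LP estimates. First, \dref{process} gives $p_{ij} \leq \tilde{C}_j$. Second, summing \dref{volume} over $j' = 1, \ldots, j$, using $\tilde{C}_{j'} \leq \tilde{C}_j$ from the re-indexing, and exploiting $\delta_{ab} + \delta_{ba} = 1$ on the resulting double sum over unordered pairs $\{a,b\} \subseteq \{1,\ldots,j\}$ yields $V^i_{\leq j} \leq 2 m_i \tilde{C}_j$. For the small-task case $a_{ij} \leq m_i/2$, the denominator satisfies $m_i - a_{ij} \geq m_i/2$, so the two estimates combine directly into $C_{ij} \leq 4 \tilde{C}_j$. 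For the large-task case $a_{ij} > m_i/2$, I would instead exploit the fact that no two tasks of size exceeding $m_i/2$ can simultaneously occupy machine $i$; the big tasks on $i$ are therefore essentially serialized, and a direct length accounting (bounding the total time any big task with index $\leq j$ blocks $(i,j)$ by a multiple of $v_{ij}/m_i + p_{ij}$) again yields $C_{ij} \leq 4 \tilde{C}_j$. Taking $C_j = \max_{i \in \calM_j} C_{ij}$ and summing with weights gives $\sum_j w_j C_j \leq 4 \sum_j w_j \tilde{C}_j \leq 4\cdot\text{OPT}$.

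The main obstacle is the large-task case, where the denominator $m_i - a_{ij}$ can be arbitrarily small and so the naive volume-based bound on $C_{ij}$ blows up; handling it requires the serialization observation above rather than a direct integration over the idle set. A secondary technical point is the clean derivation of $V^i_{\leq j} \leq 2 m_i \tilde{C}_j$ from (LP3): one must carefully separate the contributions of $\delta_{j'j}$ for $j' \leq j$ from those for $j' > j$ in the double sum, and use the pairwise identity $\delta_{ab}+\delta_{ba}=1$ to extract the factor of $2$ without losing more.
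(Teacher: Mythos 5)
Your first stage (feasibility of an optimal schedule for (LP3), hence $\sum_j w_j\tilde C_j\le\text{OPT}$) and your LP estimate $\sum_{k\le j}v_{ik}\le 2m_i\tilde C_j$ coincide with the paper's Lemma~\ref{bound} and Lemma~\ref{ineq1}. The divergence, and the gap, is in the second stage. Even in your ``small-task'' case the numbers as stated do not close: plugging $V^i_{<j}\le 2m_i\tilde C_j$ and $m_i-a_{ij}\ge m_i/2$ into $C_{ij}\le p_{ij}+V^i_{<j}/(m_i-a_{ij})$ gives $5\tilde C_j$, not $4\tilde C_j$; you can rescue $4$ here, but only by writing $V^i_{<j}=V^i_{\le j}-a_{ij}p_{ij}$ and checking that $p(1-2x)/(1-x)+2V/(m_i(1-x))\le\tilde C_j(3-2x)/(1-x)\le 4\tilde C_j$ for $x=a_{ij}/m_i\le 1/2$. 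The genuine gap is the large-task case. Your serialization observation addresses the wrong obstruction: when $a_{ij}>m_i/2$ the task $(i,j)$ is blocked whenever $h_{<j}(t)>m_i-a_{ij}$, and the blockers need not be big tasks at all --- for $a_{ij}$ close to $m_i$ a single tiny task suffices. The natural repair (split the blocked set into times with $h_{<j}(t)>m_i/2$, bounded by the volume argument by about $4\tilde C_j$, and times with $h_{<j}(t)\le m_i/2$, where a subset-chain argument shows one common task spans them all, bounded by $\tilde C_j$) again tops out at $5\tilde C_j$. So the per-task route, as proposed, proves a $5$-approximation at best and leaves the constant $4$ unestablished exactly where you flag the difficulty.

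The paper avoids task-centric blocked-time accounting entirely. Its Lemma~\ref{volumebound} is a machine-utilization statement: for \emph{any} window of length $2\tilde C_j$ ending before $T_{ij}$, the work done on the first $j$ tasks exceeds $m_i\tilde C_j$. This is proved by building a bipartite graph on the two halves of the window, putting an edge between slots $s<t$ exactly when some task runs at $t$ but not at $s$ (which forces $h_{ij}(s)+h_{ij}(t)>m_i$ by greediness), and invoking Hall's theorem: a violated Hall condition would produce a task running for more than $\tilde C_j$ slots, contradicting Constraint~\dref{process}. A perfect matching then pairs every slot with a partner so their heights sum to more than $m_i$, giving the claimed work bound uniformly in the task sizes. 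Applying this twice to $(0,2\tilde C_j]$ and $(2\tilde C_j,4\tilde C_j]$ and comparing with Lemma~\ref{ineq1} yields the contradiction that proves $T_{ij}\le 4\tilde C_j$. To complete your proof you would need either this matching argument or some other device that recovers the factor $4$ when $m_i-a_{ij}$ is small; the serialization of big tasks alone does not supply it.
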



\section{Complexity of Algorithms}
The complexity of our algorithms is mainly dominated by solving their corresponding LPs, which can be solved in polynomial time using efficient linear programming solvers. The rest of the operations have low complexity and can be parallelized on the machines. We have provided a detailed discussion of the complexity in Appendix \ref{complexity}.
\section{Evaluation Results}
\label{simulation}
In this section, we evaluate the performance of our algorithms using a real traffic trace, and compare to prior algorithms. 

\textbf{Data Set.} The data set is from a large Google cluster~\cite{clusterdata:Wilkes2011}. 
The original trace is 
over a month long period. To keep things simpler, we extract multi-task jobs of production scheduling class that were completed without any interruptions.In our experiments, we filter jobs and consider those with at most $200$ tasks, which constitute about $99 \%$ of all the jobs in the production class. Also, in order to have reasonable traffic density on each machine (since otherwise the problem is trivial), we consider a cluster with $200$ machines and randomly map machines of the original set to machines of this set. The final data set used for our simulations contains $7521$ jobs with an average of $10$ tasks per job. We also extracted memory requirement of each task and its corresponding processing time from the data set. In the data set, each job has a priority that represents its sensitivity to latency. There are $9$ different values of job priorities.

 In the original data set, each task is assigned to \emph{one} machine, and the scheduler needs to decide when to start its scheduling. This is similar to our model for our preemptive algorithm {\AI} in Section~\ref{specialcase}. To incorporate placement constraints, we modify the data set as follows. For each task, we randomly choose $3$ machines and assume that processing time of the task on these machines is equal to the processing time given in the data set. We allow the task to be scheduled on other machines; however, its processing time will be penalized by a factor $\alpha>1$. This is consistent with the data locality models in previous work (e.g.~\cite{wang2016maptask,{grandl2015multi}}).
 
We evaluate the performance of algorithms in both offline and online settings.  For the offline setting, we consider the first $1000$ jobs in the data set and assume all of these jobs are in the system at time $0$. For the online setting, all the $7521$ jobs arrive according to the arrival times information in the data set. Further, we consider $3$ different cases for weight assignments: 1) All jobs have equal weights, 2) Jobs are assigned random weights between $0$ and $1$, and $3$) Jobs' weights are determined based on the job priority and class information in the data set. 

\textbf{Algorithms.} We consider three prior algorithms, PSRS~\cite{schwiegelshohn2004preemptive}, Tetris~\cite{grandl2015multi}, and JSQ-MW~\cite{wang2016maptask} to compare with our algorithms {\AIII} and {\AI}. 
We briefly overview the algorithms below.

\textbf{1. \boldmath{\PSRS}~\cite{schwiegelshohn2004preemptive}}:
\textit{Preemptive Smith Ratio Scheduling} is a preemptive algorithm for the parallel task scheduling problem (see Section~\ref{relatedwork}) on a single machine. Modified Smith ratio of task $(i,j)$ is defined as $\frac{w_j}{a_{ij}p_{ij}}=\frac{w_j}{v_{ij}}$. Moreover, a constant $\calV=0.836$ is used in the algorithm. It also defines $T(a,t)$ to be the first time after $t$ at which at least $a$ amount of the machine's capacity is available, given the schedule at time $t$. 
On machine $i$, the algorithm first orders tasks based on the modified Smith ratio (largest ratio first). It then removes the first task $(i,j)$ in the list and as long as the task needs at most $50\%$ of the machine capacity $m_i$, it schedules the task in a non-preemptive fashion at the first time that available capacity of the machine is equal to or greater than the task's size, namely at $T(a_{ij},t)$ where $t$ is the current time and $a_{ij}$ is the size of task $(i,j)$. However, if task $(i,j)$ requires more than half of the machine's capacity, the algorithm determines the difference 
$T(a_{ij},t)-T(m_i/2,t)$. If this time difference is less than the ratio $p_{ij}/\calV$, it schedules task $(i,j)$
in the same way as those tasks with smaller size; that is, $(i,j)$ starts at $T(a_{ij},t)$ and runs to completion. Otherwise at time $T(m_i/2,t)+p_{ij}/\calV$, it preempts all the tasks that do not finish before that time, and starts task $(i,j)$. After task $(i,j)$ is completed, those preempted tasks are resumed. 
For the online setting, upon arrival of each task, the algorithm preempts the schedule, updates the list, and schedule the tasks in a similar fashion.

\textbf{2. \boldmath{$\mathsf{Tetris}$}~\cite{grandl2015multi}}: $\mathsf{Tetris}$ is a heuristic that schedules tasks on each machine according to an ordering based on their scores (Section~\ref{relatedwork}). Tetris was originally designed for the case that all jobs have identical weights; therefore, we generalize it by incorporating weights in tasks' scores.  For each task $(i,j)$ at time $t$, its score is defined as $s_{ij}=w_j(a_{ij}+\frac{\epsilon}{\sum_i a_{ij}p^t_{ij}})$, where $\epsilon=\frac{\sum_i\sum_j w_ja_{ij}}{\sum_j w_j (\sum_i a_{ij}p^t_{ij})^{-1}},$
and $p^t_{ij}$ is the task's remaining processing time at time $t$. 
Note that the first term in the score depends on the task' size (it favors a larger task if it fits in the machine's remaining capacity), while the second term prefers a task whose job's remaining volume (based on the sum of its remaining tasks) is smaller. On each machine, $\mathsf{Tetris}$ orders tasks based on their scores and greedily schedules tasks according to the list as far as the machine capacity allows. We consider two versions of $\mathsf{Tetris}$, preemptive ({\Tetp}) and non-preemptive ({\Tetnp}). In {\Tetp}, upon completion of a task (or arrival of a job, in the online setting), it preempts the schedule, update the list, calculate scores based on updated values, and schedule the tasks in a similar fashion. In {\Tetnp}, the algorithm does not preempt the tasks that are running; however, calculates scores for the remaining tasks based on updated values. 

To take the placement constraint into account, $\mathsf{Tetris}$ imposes a remote penalty to the computed score to penalize use of remote resources. This remote penalty is suggested to be $\approx 10\% $ in~\cite{grandl2015multi}. In simulations, we also simulated $\mathsf{Tetris}$ by penalizing scores by the factor $\alpha$, and found out that the performance is slightly better. Hence, we only report performance of $\mathsf{Tetris}$ with this remote penalty.

\textbf{3. \boldmath{$\mathsf{JSQ}\text{-}\mathsf{MW}$}~\cite{wang2016maptask}}: Join-the-Shortest-Queue routing with Max Weight scheduling (JSQ-MW) is a non-preemptive algorithm in presence of data locality (Section~\ref{relatedwork}). It assigns an arriving task to the shortest queue among those corresponding to the $3$ local servers with its input data and the remote queue. When a server is available, it either process a task from its local queue or from the remote  queue, where the decision is made based on a MaxWeight. 
%
	We further combine JSQ-MW with the greedy packing scheme so it can pack and schedule tasks non-preemptively in each server.

\textbf{4. \boldmath{\AIII} and \boldmath{\AI}}: These are our non-preemptive and preemptive algorithms as described in Section~\ref{nomigration} and Section~\ref{specialcase}. The complexity of our algorithms is mainly dominated by solving their corresponding LPs.  While (LP3) has reasonable size and can be solved quickly (see Section~\ref{complexity} for the details), (LP2) requires more memory for large instances. In this case, to expedite computation, besides the $3$ randomly chosen local machines that can schedule a task, we consider $10$ other machines ($5\%$ of the machines, instead of all the machines) that can process the task in an $\alpha$ times larger processing time. We choose these $10$ machines randomly as well. 
	Note that this may degrade the performance of our algorithm, nevertheless, as will see, they still significantly outperform the past algorithms.

A natural extension of our algorithms to online setting is as follows. 
We choose a parameter $\tau$ that is tunable. We divide time into time intervals of length $\tau$. For the preemptive case, at the beginning of each interval, we preempt the schedule, update the processing times, and run the offline algorithm on a set of jobs, consisting of jobs that are not scheduled yet completely and those that arrived in the previous interval. In the non-preemptive case, tasks on the boundary of intervals are processed non-preemptively, i.e., we let the running tasks (according to the previously computed schedule) finish, then apply the non-preemptive offline algorithm on the updated list of jobs as in the preemptive online case, and proceed with the new schedule. 
Note that a larger value of $\tau$ reduces the complexity of the online algorithm; but it also decreases the overall performance. We use an adaptive choice of $\tau$ to improve the performance of our online algorithm, starting from smaller value of $\tau$. In our simulations, we choose the length of the $i$-th interval, $\tau_i$, as $\tau_i=\tau_0/({1+\gamma \times \exp(-\beta i)}), \ i=1,2,\cdots$,
for some constants $\gamma$ and $\beta$. We choose $\tau_0=3 \times 10^2$ seconds, which is $5$ times greater than the average inter-arrival time of jobs, and $\gamma=50$ and $\beta=3$.

\begin{figure*}[t]
	\centering
	\begin{subfigure}{0.3\textwidth}
		\includegraphics[width=\textwidth]{CT.pdf}%
		\caption{Performance of \boldmath{\AI}, \boldmath{\Tetp}, and \boldmath{\PSRS} for different weights.}
		\vspace{10pt}
		\label{CT}%
	\end{subfigure}\hfill
	\begin{subfigure}{0.3\textwidth}
		\includegraphics[width=\textwidth]{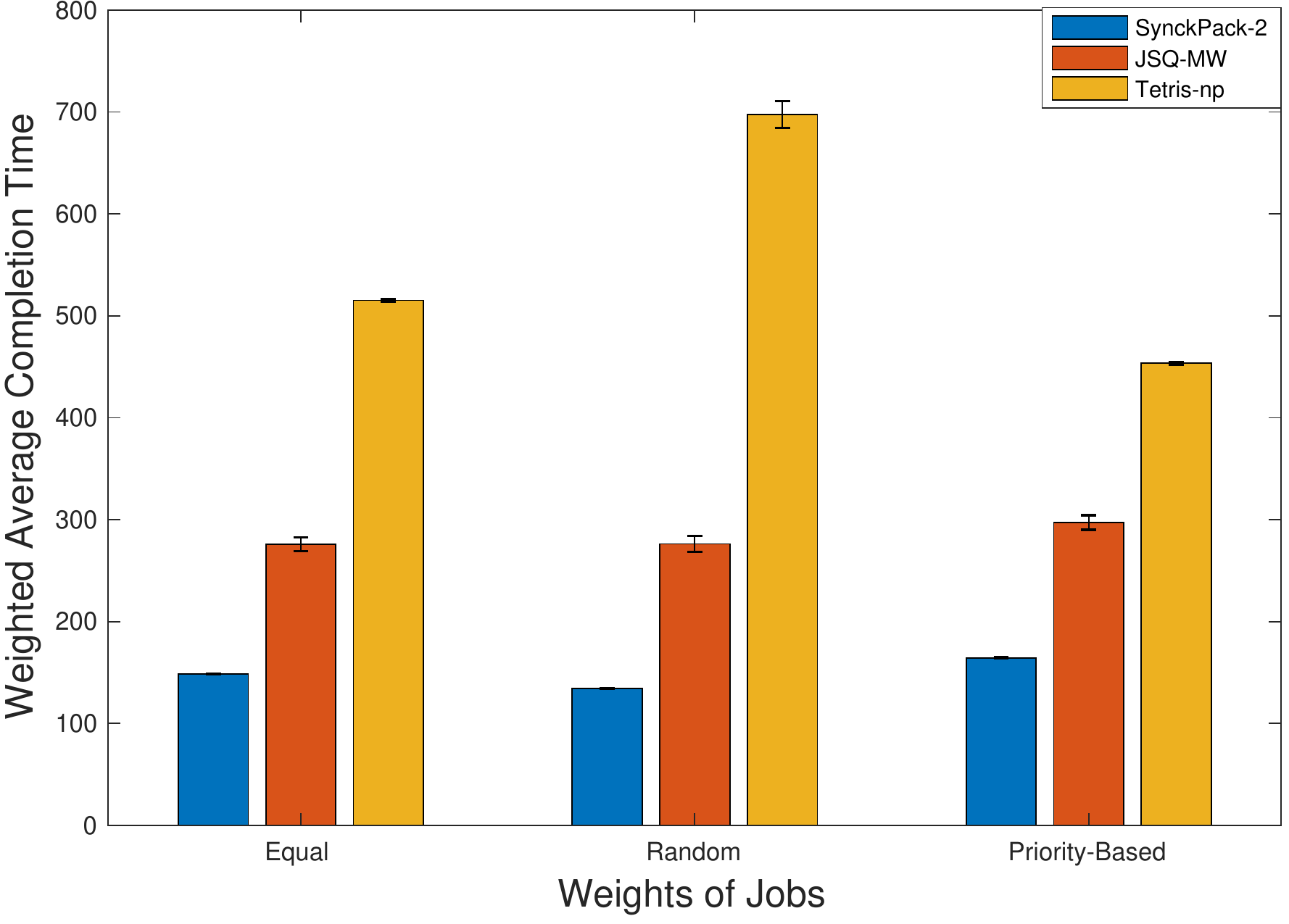}%
		\caption{Performance of \boldmath{\AIII}, \boldmath{\Tetnp}, and \boldmath{\JSQMW} for different weights and remote penalty $\alpha=2$.}
		\label{CT4alpha2}%
	\end{subfigure}\hfill
	\begin{subfigure}{0.3\textwidth}
		\includegraphics[width=\textwidth]{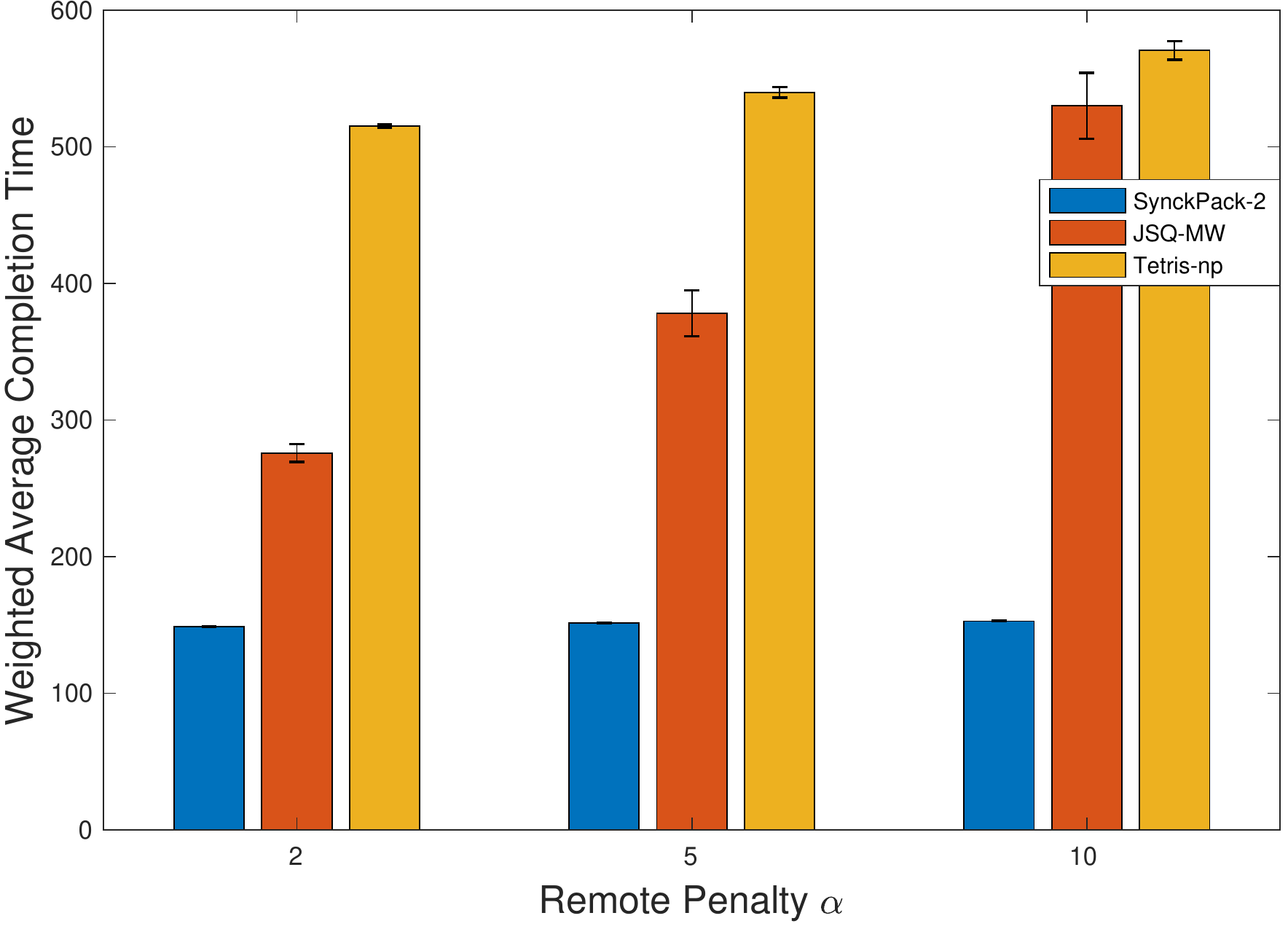}%
		\caption{Performance of \boldmath{\AIII}, \boldmath{\Tetnp}, and \boldmath{\JSQMW} for different remote penalties and equal weights.}
		\label{CTequal}%
	\end{subfigure}
	\vspace{-0.1 in}
	\caption{Performance of algorithms in the offline setting.}
\end{figure*}
\begin{figure*}[t]
		\vspace{-0.05 in}
	\centering
	\begin{subfigure}{0.3\textwidth}
		\includegraphics[width=\textwidth]{delay.pdf}%
		\caption{Performance of \boldmath{\AI}, \boldmath{\Tetp}, and \boldmath{\PSRS} for different weights.}
		\label{delay}%
	\end{subfigure}\hfill
	\begin{subfigure}{0.3\textwidth}
		\includegraphics[width=\textwidth]{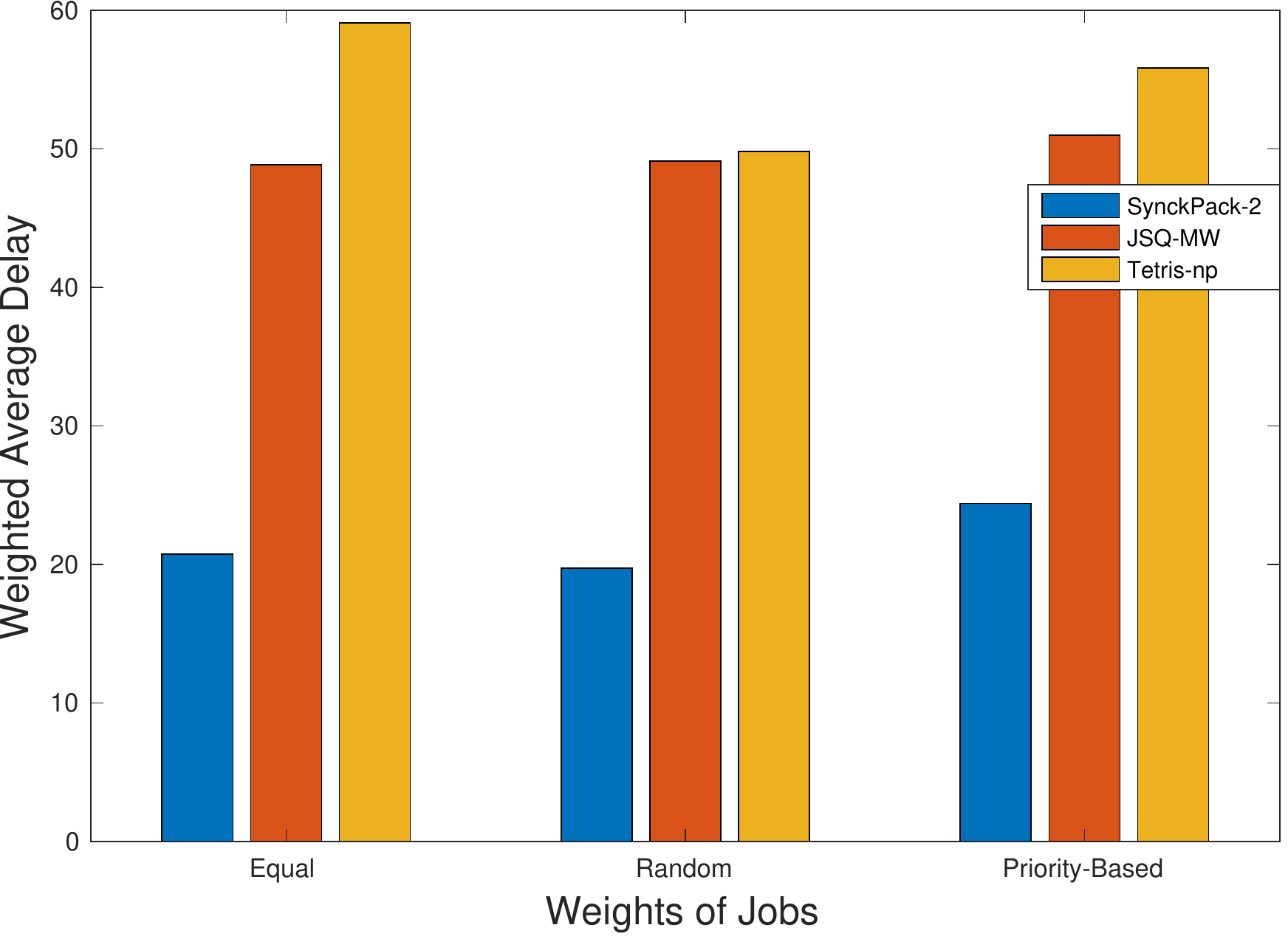}%
		\caption{Performance of \boldmath{\AIII}, \boldmath{\Tetnp}, and \boldmath{\JSQMW} for different weights.}
		\label{delaynp}%
	\end{subfigure}\hfill
	\begin{subfigure}{0.3\textwidth}
		\includegraphics[width=\textwidth]{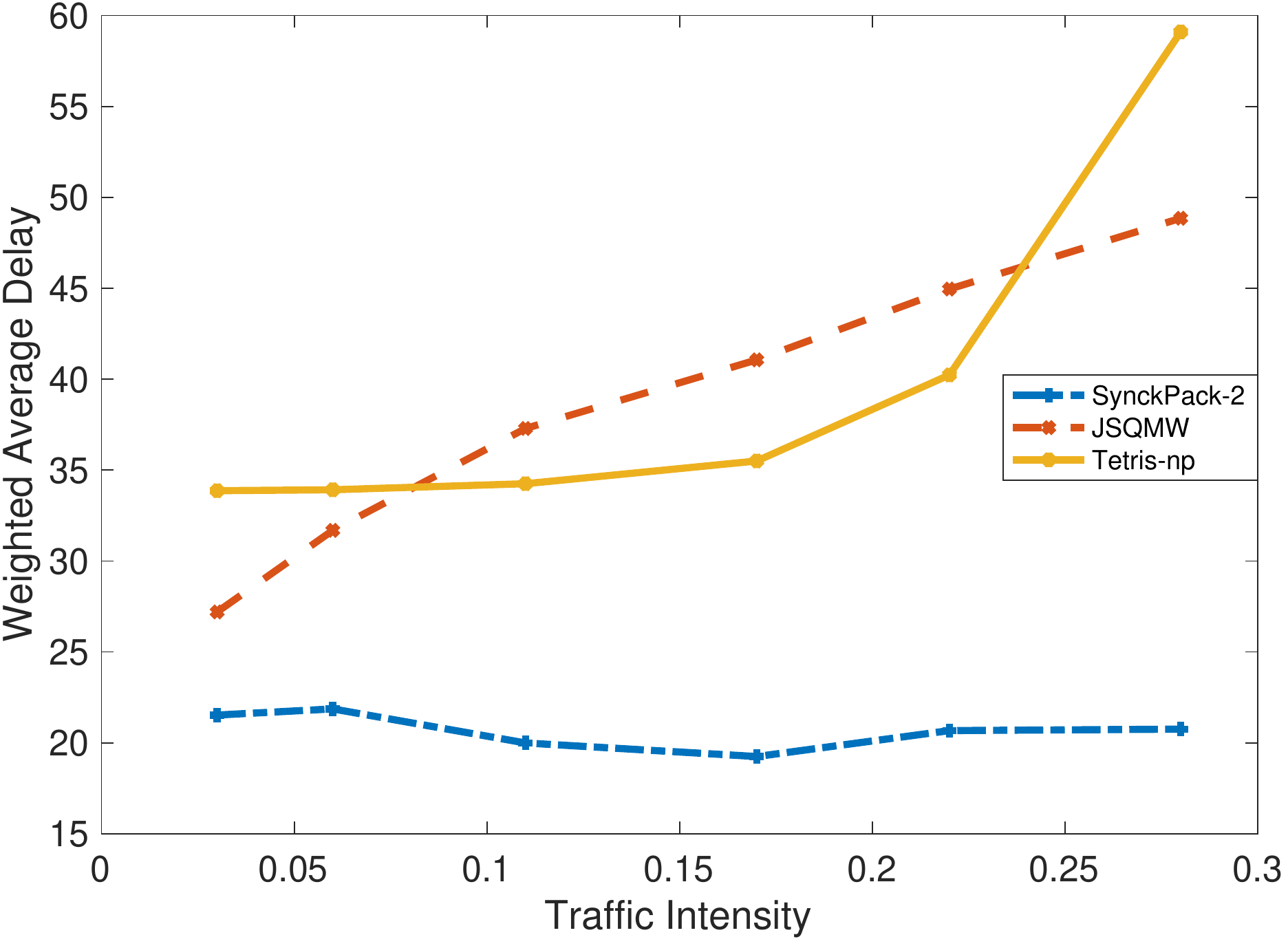}%
		\caption{Performance of \boldmath{\AIII}, \boldmath{\Tetnp}, and \boldmath{\JSQMW} for different traffic intensity.}
		\label{traffint}%
	\end{subfigure}
	\vspace{-0.1 in}
	\caption{Performance of algorithms in the online setting.}

\end{figure*}

\subsection{Results in Offline Setting}
We use \AI, \Tetp, and {\PSRS} to schedule tasks of the original data set preemptively, and use \AIII, \Tetnp, and {\JSQMW} to schedule tasks of the modified data set (with placement constraints) non-preemptively. We then compare the weighted average completion time of jobs, 
$\sum_j w_jC_j/ \sum_j w_j$, 
under these algorithms for the three weight cases, i.e. equal, random, and priority-based weights. 
Note that weighted average completion time is equivalent to the total weighted completion time (up to the normalization $\sum_j w_j$).

We first report the ratio between the total weighted completion time obtained from {\AIII} (for $\alpha=2$) and {\AI} and their corresponding optimal value of their relaxed LPs~\dref{RLP3} and~\dref{RLP1} (which is a lower bound on the optimal total weighted competition time) to verify Theorem~\ref{theorem2} and~\ref{theorem1}.
Table~\ref{ratio} shows this performance ratio for the $3$ cases of job weights. All ratios are within our theoretical approximation ratio of $24$ and $4$. In fact, the approximation ratios are much smaller.
\begin{table}[h]
	\caption{Performance ratio between {\boldmath{\AI}} and (LP3), and between {\boldmath{\AIII}} and (LP2)}
	\centering
	\begin{tabular}{c c c c} 
		\hline\hline 
		Jobs' Weights & Equal & Random & Priority-Based \\ [0.4ex] 
		\hline 
		Ratio for \AIII & $2.87$ & $2.90$ & $2.98$\\
		\hline
		Ratio for \AI & $1.34$ & $1.35$ & $1.31$\\
		\hline 
	\end{tabular}
	\label{ratio}
\end{table}

Figure~\ref{CT} shows the performance of {\AI}, {\Tetp}, and {\PSRS} in the offline setting. As we see, {\AI} outperforms the other two algorithms in all the cases and performance gain varies from $33\%$ to $132\%$. 
Further, Figure~\ref{CT4alpha2} depicts performance of {\AIII}, {\Tetnp}, and {\JSQMW} for different weights, when $\alpha=2$. The performance gain of {\AIII} varies from $81\%$ to $420\%$. Figure~\ref{CTequal} shows the effect of remote penalty $\alpha$ in the performance of {\AIII}, {\Tetnp}, and {\JSQMW}.  As we see, {\AIII} outperforms the other algorithms by $85\%$ to $273\%$	
	
\subsection{Results in Online Setting}
In the online setting, jobs arrive dynamically over time, according to the arrival time information in the data set, and we are interested in the weighted average \emph{delay} of jobs. The delay of a job is measured from the time that it arrives to the system until its completion.

Figure~\ref{delay} shows the performance results, in terms of the weighted average \emph{delay} of jobs, under \AI, \Tetp, and {\PSRS}. Performances of {\Tetp} is worse than our algorithm by $11\%$ to $27\%$, while {\PSRS} presents the poorest performance and has $36\%$ to $65\%$ larger weighted average delay compared to \AI. 	
Moreover, performance of {\AIII}, {\Tetnp}, and {\JSQMW} for different weights is depicted in Figure~\ref{delaynp}. As we see, {\AIII} outperforms the other two algorithms in all the cases and performance gain varies from $109\%$ to $189\%$. 

Further, by multiplying arrival times by constant values we can change the traffic intensity and study its effect on algorithms' performance. Figure~\ref{traffint} shows the results for equal job weights. As we can see, {\AIII} outperforms the other algorithms and the performance gain increases as traffic intensity grows.


%

\section{Conclusions}
\label{conclusion}
We studied the problem of scheduling jobs, each job with multiple resource constrained tasks, in a cluster of machines. Our motivation for this model came from modern parallel computing applications where tasks can be packed in machines, the input data required to process each task is stored in a subset of machines, and the tasks' collective completion time determines the job's completion.  
We proposed the first constant-approximation algorithms for minimizing the total weighted completion time of such jobs, namely, a $(6+\epsilon)$-approximation preemptive algorithm when task migration is allowed, and a $24$-approximation non-preemptive (without migrations) algorithm.  Further, for the special case when there is a specific machine for each task (single-machine data locality), we improved the approximation ratio for the preemptive case to $4$. The model and analysis in our setting of tasks with packing,  synchronization, and placement constraints is new. Note that the approximation results are upper bounds on the algorithms' performance. Our simulation results for our algorithms showed that the approximation ratios are in fact very close to $1$ in practice.


As we showed, applying our simple greedy packing, to schedule tasks mapped to each interval in {\AIII}, provides a tight bound on the total volume of tasks and its relation to the associated linear program. Therefore, we cannot improve the final result by replacing this step with more intelligent bin packing algorithms like BestFit~\cite{coffman1980performance}. Although, in practice, applying such bin packing schemes can give a better performance. Improving the performance bound of $24$ requires a more careful and possibly different analysis. We leave further improvement of the result as a future work.
Extension of our model to capture multi-dimensional task resource requirements ,and analysis of the online algorithms for our problem, are also interesting and challenging topics for future work. 

\newpage
\bibliographystyle{ACM-Reference-Format}
\bibliography{Bibl}


\begin{thebibliography}{45}


\ifx \showCODEN    \undefined \def \showCODEN     #1{\unskip}     \fi
\ifx \showDOI      \undefined \def \showDOI       #1{#1}\fi
\ifx \showISBNx    \undefined \def \showISBNx     #1{\unskip}     \fi
\ifx \showISBNxiii \undefined \def \showISBNxiii  #1{\unskip}     \fi
\ifx \showISSN     \undefined \def \showISSN      #1{\unskip}     \fi
\ifx \showLCCN     \undefined \def \showLCCN      #1{\unskip}     \fi
\ifx \shownote     \undefined \def \shownote      #1{#1}          \fi
\ifx \showarticletitle \undefined \def \showarticletitle #1{#1}   \fi
\ifx \showURL      \undefined \def \showURL       {\relax}        \fi
\providecommand\bibfield[2]{#2}
\providecommand\bibinfo[2]{#2}
\providecommand\natexlab[1]{#1}
\providecommand\showeprint[2][]{arXiv:#2}

\bibitem[\protect\citeauthoryear{Ahmadi, Bagchi, and Roemer}{Ahmadi
  et~al\mbox{.}}{2005}]%
        {ahmadi2005coordinated}
\bibfield{author}{\bibinfo{person}{Reza Ahmadi}, \bibinfo{person}{Uttarayan
  Bagchi}, {and} \bibinfo{person}{Thomas~A Roemer}.}
  \bibinfo{year}{2005}\natexlab{}.
\newblock \showarticletitle{Coordinated scheduling of customer orders for quick
  response}.
\newblock \bibinfo{journal}{\emph{Naval Research Logistics (NRL)}}
  \bibinfo{volume}{52}, \bibinfo{number}{6} (\bibinfo{year}{2005}),
  \bibinfo{pages}{493--512}.
\newblock


\bibitem[\protect\citeauthoryear{Ananthanarayanan, Ghodsi, Shenker, and
  Stoica}{Ananthanarayanan et~al\mbox{.}}{2011}]%
        {ananthanarayanan2011disk}
\bibfield{author}{\bibinfo{person}{Ganesh Ananthanarayanan},
  \bibinfo{person}{Ali Ghodsi}, \bibinfo{person}{Scott Shenker}, {and}
  \bibinfo{person}{Ion Stoica}.} \bibinfo{year}{2011}\natexlab{}.
\newblock \showarticletitle{Disk-locality in datacenter computing considered
  irrelevant.}. In \bibinfo{booktitle}{\emph{HotOS}},
  Vol.~\bibinfo{volume}{13}. \bibinfo{pages}{12--12}.
\newblock


\bibitem[\protect\citeauthoryear{Ananthanarayanan, Kandula, Greenberg, Stoica,
  Lu, Saha, and Harris}{Ananthanarayanan et~al\mbox{.}}{2010}]%
        {ananthanarayanan2010reining}
\bibfield{author}{\bibinfo{person}{Ganesh Ananthanarayanan},
  \bibinfo{person}{Srikanth Kandula}, \bibinfo{person}{Albert~G Greenberg},
  \bibinfo{person}{Ion Stoica}, \bibinfo{person}{Yi Lu}, \bibinfo{person}{Bikas
  Saha}, {and} \bibinfo{person}{Edward Harris}.}
  \bibinfo{year}{2010}\natexlab{}.
\newblock \showarticletitle{Reining in the Outliers in Map-Reduce Clusters
  using Mantri.}. In \bibinfo{booktitle}{\emph{Osdi}},
  Vol.~\bibinfo{volume}{10}. \bibinfo{pages}{24}.
\newblock


\bibitem[\protect\citeauthoryear{Bansal and Khot}{Bansal and Khot}{2010}]%
        {bansal2010inapproximability}
\bibfield{author}{\bibinfo{person}{Nikhil Bansal} {and}
  \bibinfo{person}{Subhash Khot}.} \bibinfo{year}{2010}\natexlab{}.
\newblock \showarticletitle{Inapproximability of hypergraph vertex cover and
  applications to scheduling problems}. In
  \bibinfo{booktitle}{\emph{International Colloquium on Automata, Languages,
  and Programming}}. Springer, \bibinfo{pages}{250--261}.
\newblock


\bibitem[\protect\citeauthoryear{Blazewicz, Lenstra, and Kan}{Blazewicz
  et~al\mbox{.}}{1983}]%
        {blazewicz1983scheduling}
\bibfield{author}{\bibinfo{person}{Jacek Blazewicz}, \bibinfo{person}{Jan~Karel
  Lenstra}, {and} \bibinfo{person}{AHG~Rinnooy Kan}.}
  \bibinfo{year}{1983}\natexlab{}.
\newblock \showarticletitle{Scheduling subject to resource constraints:
  classification and complexity}.
\newblock \bibinfo{journal}{\emph{Discrete applied mathematics}}
  \bibinfo{volume}{5}, \bibinfo{number}{1} (\bibinfo{year}{1983}),
  \bibinfo{pages}{11--24}.
\newblock


\bibitem[\protect\citeauthoryear{Cheatham, Fahmy, Stefanescu, and
  Valiant}{Cheatham et~al\mbox{.}}{1996}]%
        {cheatham1996bulk}
\bibfield{author}{\bibinfo{person}{Thomas Cheatham}, \bibinfo{person}{Amr
  Fahmy}, \bibinfo{person}{Dan Stefanescu}, {and} \bibinfo{person}{Leslie
  Valiant}.} \bibinfo{year}{1996}\natexlab{}.
\newblock \showarticletitle{Bulk synchronous parallel computing--a paradigm for
  transportable software}.
\newblock In \bibinfo{booktitle}{\emph{Tools and Environments for Parallel and
  Distributed Systems}}. \bibinfo{publisher}{Springer},
  \bibinfo{pages}{61--76}.
\newblock


\bibitem[\protect\citeauthoryear{Chen and Hall}{Chen and Hall}{2007}]%
        {chen2007supply}
\bibfield{author}{\bibinfo{person}{Zhi-Long Chen} {and}
  \bibinfo{person}{Nicholas~G Hall}.} \bibinfo{year}{2007}\natexlab{}.
\newblock \showarticletitle{Supply chain scheduling: Conflict and cooperation
  in assembly systems}.
\newblock \bibinfo{journal}{\emph{Operations Research}} \bibinfo{volume}{55},
  \bibinfo{number}{6} (\bibinfo{year}{2007}), \bibinfo{pages}{1072--1089}.
\newblock


\bibitem[\protect\citeauthoryear{Coffman, Garey, Johnson, and Tarjan}{Coffman
  et~al\mbox{.}}{1980}]%
        {coffman1980performance}
\bibfield{author}{\bibinfo{person}{Edward~G Coffman, Jr},
  \bibinfo{person}{Michael~R Garey}, \bibinfo{person}{David~S Johnson}, {and}
  \bibinfo{person}{Robert~Endre Tarjan}.} \bibinfo{year}{1980}\natexlab{}.
\newblock \showarticletitle{Performance bounds for level-oriented
  two-dimensional packing algorithms}.
\newblock \bibinfo{journal}{\emph{SIAM J. Comput.}} \bibinfo{volume}{9},
  \bibinfo{number}{4} (\bibinfo{year}{1980}), \bibinfo{pages}{808--826}.
\newblock


\bibitem[\protect\citeauthoryear{Dean and Ghemawat}{Dean and Ghemawat}{2008}]%
        {dean2008mapreduce}
\bibfield{author}{\bibinfo{person}{Jeffrey Dean} {and} \bibinfo{person}{Sanjay
  Ghemawat}.} \bibinfo{year}{2008}\natexlab{}.
\newblock \showarticletitle{MapReduce: simplified data processing on large
  clusters}.
\newblock \bibinfo{journal}{\emph{Commun. ACM}} \bibinfo{volume}{51},
  \bibinfo{number}{1} (\bibinfo{year}{2008}), \bibinfo{pages}{107--113}.
\newblock


\bibitem[\protect\citeauthoryear{Foundation}{Foundation}{2018a}]%
        {Hadoop}
\bibfield{author}{\bibinfo{person}{Apache~Software Foundation}.}
  \bibinfo{year}{2018}\natexlab{a}.
\newblock \bibinfo{title}{Apache Hadoop}.
\newblock \bibinfo{howpublished}{\url{http://hadoop.apache.org}}.
\newblock


\bibitem[\protect\citeauthoryear{Foundation}{Foundation}{2018b}]%
        {HadoopYarn}
\bibfield{author}{\bibinfo{person}{Apache~Software Foundation}.}
  \bibinfo{year}{2018}\natexlab{b}.
\newblock \bibinfo{title}{Apache Hadoop Yarn}.
\newblock
  \bibinfo{howpublished}{\url{http://hadoop.apache.org/docs/current/hadoop-yarn/hadoop-yarn-site/YARN.html}}.
\newblock


\bibitem[\protect\citeauthoryear{Foundation}{Foundation}{2018c}]%
        {ApacheSpark}
\bibfield{author}{\bibinfo{person}{Apache~Software Foundation}.}
  \bibinfo{year}{2018}\natexlab{c}.
\newblock \bibinfo{title}{Apache Spark}.
\newblock
  \bibinfo{howpublished}{\url{https://spark.apache.org/docs/latest/index.html}}.
\newblock


\bibitem[\protect\citeauthoryear{Foundation}{Foundation}{2018d}]%
        {HadoopCapacity}
\bibfield{author}{\bibinfo{person}{Apache~Software Foundation}.}
  \bibinfo{year}{2018}\natexlab{d}.
\newblock \bibinfo{title}{Hadoop Capacity Scheduler}.
\newblock \bibinfo{howpublished}{\url{https://hadoop.apache.org/docs/current
  /hadoop-yarn/hadoop-yarn-site/CapacityScheduler.html}}.
\newblock


\bibitem[\protect\citeauthoryear{Foundation}{Foundation}{2018e}]%
        {HadoopFair}
\bibfield{author}{\bibinfo{person}{Apache~Software Foundation}.}
  \bibinfo{year}{2018}\natexlab{e}.
\newblock \bibinfo{title}{Hadoop MapReduce-FairScheduler}.
\newblock
  \bibinfo{howpublished}{\url{http://hadoop.apache.org/docs/r2.4.1/hadoop-yarn/hadoop-yarn-site/FairScheduler.html}}.
\newblock


\bibitem[\protect\citeauthoryear{Gandhi, Halld{\'o}rsson, Kortsarz, and
  Shachnai}{Gandhi et~al\mbox{.}}{2008}]%
        {gandhi2008improved}
\bibfield{author}{\bibinfo{person}{Rajiv Gandhi}, \bibinfo{person}{Magn{\'u}s~M
  Halld{\'o}rsson}, \bibinfo{person}{Guy Kortsarz}, {and}
  \bibinfo{person}{Hadas Shachnai}.} \bibinfo{year}{2008}\natexlab{}.
\newblock \showarticletitle{Improved bounds for scheduling conflicting jobs
  with minsum criteria}.
\newblock \bibinfo{journal}{\emph{ACM Transactions on Algorithms (TALG)}}
  \bibinfo{volume}{4}, \bibinfo{number}{1} (\bibinfo{year}{2008}),
  \bibinfo{pages}{11}.
\newblock


\bibitem[\protect\citeauthoryear{Garey and Graham}{Garey and Graham}{1975}]%
        {garey1975bounds}
\bibfield{author}{\bibinfo{person}{Michael~R Garey} {and}
  \bibinfo{person}{Ronald~L. Graham}.} \bibinfo{year}{1975}\natexlab{}.
\newblock \showarticletitle{Bounds for multiprocessor scheduling with resource
  constraints}.
\newblock \bibinfo{journal}{\emph{SIAM J. Comput.}} \bibinfo{volume}{4},
  \bibinfo{number}{2} (\bibinfo{year}{1975}), \bibinfo{pages}{187--200}.
\newblock


\bibitem[\protect\citeauthoryear{Garg, Kumar, and Pandit}{Garg
  et~al\mbox{.}}{2007}]%
        {garg2007order}
\bibfield{author}{\bibinfo{person}{Naveen Garg}, \bibinfo{person}{Amit Kumar},
  {and} \bibinfo{person}{Vinayaka Pandit}.} \bibinfo{year}{2007}\natexlab{}.
\newblock \showarticletitle{Order scheduling models: hardness and algorithms}.
  In \bibinfo{booktitle}{\emph{International Conference on Foundations of
  Software Technology and Theoretical Computer Science}}. Springer,
  \bibinfo{pages}{96--107}.
\newblock


\bibitem[\protect\citeauthoryear{Grandl, Ananthanarayanan, Kandula, Rao, and
  Akella}{Grandl et~al\mbox{.}}{2015}]%
        {grandl2015multi}
\bibfield{author}{\bibinfo{person}{Robert Grandl}, \bibinfo{person}{Ganesh
  Ananthanarayanan}, \bibinfo{person}{Srikanth Kandula},
  \bibinfo{person}{Sriram Rao}, {and} \bibinfo{person}{Aditya Akella}.}
  \bibinfo{year}{2015}\natexlab{}.
\newblock \showarticletitle{Multi-resource packing for cluster schedulers}.
\newblock \bibinfo{journal}{\emph{ACM SIGCOMM Computer Communication Review}}
  \bibinfo{volume}{44}, \bibinfo{number}{4} (\bibinfo{year}{2015}),
  \bibinfo{pages}{455--466}.
\newblock


\bibitem[\protect\citeauthoryear{Grandl, Kandula, Rao, Akella, and
  Kulkarni}{Grandl et~al\mbox{.}}{2016}]%
        {grandl2016graphene}
\bibfield{author}{\bibinfo{person}{Robert Grandl}, \bibinfo{person}{Srikanth
  Kandula}, \bibinfo{person}{Sriram Rao}, \bibinfo{person}{Aditya Akella},
  {and} \bibinfo{person}{Janardhan Kulkarni}.} \bibinfo{year}{2016}\natexlab{}.
\newblock \showarticletitle{$\{$GRAPHENE$\}$: Packing and dependency-aware
  scheduling for data-parallel clusters}. In \bibinfo{booktitle}{\emph{12th
  $\{$USENIX$\}$ Symposium on Operating Systems Design and Implementation
  ($\{$OSDI$\}$ 16)}}. \bibinfo{pages}{81--97}.
\newblock


\bibitem[\protect\citeauthoryear{Hall}{Hall}{1935}]%
        {hall1935representatives}
\bibfield{author}{\bibinfo{person}{Philip Hall}.}
  \bibinfo{year}{1935}\natexlab{}.
\newblock \showarticletitle{On representatives of subsets}.
\newblock \bibinfo{journal}{\emph{Journal of the London Mathematical Society}}
  \bibinfo{volume}{1}, \bibinfo{number}{1} (\bibinfo{year}{1935}),
  \bibinfo{pages}{26--30}.
\newblock


\bibitem[\protect\citeauthoryear{Jin, Luo, Song, Dong, and Xiong}{Jin
  et~al\mbox{.}}{2011}]%
        {jin2011bar}
\bibfield{author}{\bibinfo{person}{Jiahui Jin}, \bibinfo{person}{Junzhou Luo},
  \bibinfo{person}{Aibo Song}, \bibinfo{person}{Fang Dong}, {and}
  \bibinfo{person}{Runqun Xiong}.} \bibinfo{year}{2011}\natexlab{}.
\newblock \showarticletitle{Bar: An efficient data locality driven task
  scheduling algorithm for cloud computing}. In \bibinfo{booktitle}{\emph{2011
  11th IEEE/ACM International Symposium on Cluster, Cloud and Grid Computing}}.
  IEEE, \bibinfo{pages}{295--304}.
\newblock


\bibitem[\protect\citeauthoryear{Kambatla, Rapolu, Jagannathan, and
  Grama}{Kambatla et~al\mbox{.}}{2010}]%
        {kambatla2010asynchronous}
\bibfield{author}{\bibinfo{person}{Karthik Kambatla}, \bibinfo{person}{Naresh
  Rapolu}, \bibinfo{person}{Suresh Jagannathan}, {and} \bibinfo{person}{Ananth
  Grama}.} \bibinfo{year}{2010}\natexlab{}.
\newblock \showarticletitle{Asynchronous algorithms in MapReduce}. In
  \bibinfo{booktitle}{\emph{2010 IEEE International Conference on Cluster
  Computing (CLUSTER)}}. \bibinfo{pages}{245--254}.
\newblock


\bibitem[\protect\citeauthoryear{Lenstra, Shmoys, and Tardos}{Lenstra
  et~al\mbox{.}}{1990}]%
        {lenstra1990approximation}
\bibfield{author}{\bibinfo{person}{Jan~Karel Lenstra}, \bibinfo{person}{David~B
  Shmoys}, {and} \bibinfo{person}{Eva Tardos}.}
  \bibinfo{year}{1990}\natexlab{}.
\newblock \showarticletitle{Approximation algorithms for scheduling unrelated
  parallel machines}.
\newblock \bibinfo{journal}{\emph{Mathematical programming}}
  \bibinfo{volume}{46}, \bibinfo{number}{1-3} (\bibinfo{year}{1990}),
  \bibinfo{pages}{259--271}.
\newblock


\bibitem[\protect\citeauthoryear{Leung, Li, and Pinedo}{Leung
  et~al\mbox{.}}{2007}]%
        {leung2007scheduling}
\bibfield{author}{\bibinfo{person}{Joseph Y-T Leung}, \bibinfo{person}{Haibing
  Li}, {and} \bibinfo{person}{Michael Pinedo}.}
  \bibinfo{year}{2007}\natexlab{}.
\newblock \showarticletitle{Scheduling orders for multiple product types to
  minimize total weighted completion time}.
\newblock \bibinfo{journal}{\emph{Discrete Applied Mathematics}}
  \bibinfo{volume}{155}, \bibinfo{number}{8} (\bibinfo{year}{2007}),
  \bibinfo{pages}{945--970}.
\newblock


\bibitem[\protect\citeauthoryear{Liu and Shen}{Liu and Shen}{2016}]%
        {liu2016dependency}
\bibfield{author}{\bibinfo{person}{Jinwei Liu} {and} \bibinfo{person}{Haiying
  Shen}.} \bibinfo{year}{2016}\natexlab{}.
\newblock \showarticletitle{Dependency-aware and resource-efficient scheduling
  for heterogeneous jobs in clouds}. In \bibinfo{booktitle}{\emph{2016 IEEE
  International Conference on Cloud Computing Technology and Science
  (CloudCom)}}. \bibinfo{pages}{110--117}.
\newblock


\bibitem[\protect\citeauthoryear{LLC}{LLC}{2018}]%
        {gurobi}
\bibfield{author}{\bibinfo{person}{Gurobi~Optimization LLC}.}
  \bibinfo{year}{2018}\natexlab{}.
\newblock \bibinfo{title}{Gurobi Optimizer}.
\newblock
\newblock
\urldef\tempurl%
\url{http://www.gurobi.com}
\showURL{%
\tempurl}


\bibitem[\protect\citeauthoryear{Mastrolilli, Queyranne, Schulz, Svensson, and
  Uhan}{Mastrolilli et~al\mbox{.}}{2010}]%
        {mastrolilli2010minimizing}
\bibfield{author}{\bibinfo{person}{Monaldo Mastrolilli},
  \bibinfo{person}{Maurice Queyranne}, \bibinfo{person}{Andreas~S Schulz},
  \bibinfo{person}{Ola Svensson}, {and} \bibinfo{person}{Nelson~A Uhan}.}
  \bibinfo{year}{2010}\natexlab{}.
\newblock \showarticletitle{Minimizing the sum of weighted completion times in
  a concurrent open shop}.
\newblock \bibinfo{journal}{\emph{Operations Research Letters}}
  \bibinfo{volume}{38}, \bibinfo{number}{5} (\bibinfo{year}{2010}),
  \bibinfo{pages}{390--395}.
\newblock


\bibitem[\protect\citeauthoryear{Nitu, Kocharyan, Yaya, Tchana, Hagimont, and
  Astsatryan}{Nitu et~al\mbox{.}}{2018}]%
        {nitu2018working}
\bibfield{author}{\bibinfo{person}{Vlad Nitu}, \bibinfo{person}{Aram
  Kocharyan}, \bibinfo{person}{Hannas Yaya}, \bibinfo{person}{Alain Tchana},
  \bibinfo{person}{Daniel Hagimont}, {and} \bibinfo{person}{Hrachya
  Astsatryan}.} \bibinfo{year}{2018}\natexlab{}.
\newblock \showarticletitle{Working Set Size Estimation Techniques in
  Virtualized Environments: One Size Does not Fit All}.
\newblock \bibinfo{journal}{\emph{Proceedings of the ACM on Measurement and
  Analysis of Computing Systems}} \bibinfo{volume}{2}, \bibinfo{number}{1}
  (\bibinfo{year}{2018}), \bibinfo{pages}{19}.
\newblock


\bibitem[\protect\citeauthoryear{Qiu, Stein, and Zhong}{Qiu
  et~al\mbox{.}}{2015}]%
        {qiu2015minimizing}
\bibfield{author}{\bibinfo{person}{Zhen Qiu}, \bibinfo{person}{Cliff Stein},
  {and} \bibinfo{person}{Yuan Zhong}.} \bibinfo{year}{2015}\natexlab{}.
\newblock \showarticletitle{Minimizing the total weighted completion time of
  coflows in datacenter networks}. In \bibinfo{booktitle}{\emph{Proceedings of
  the 27th ACM symposium on Parallelism in Algorithms and Architectures}}. ACM,
  \bibinfo{pages}{294--303}.
\newblock


\bibitem[\protect\citeauthoryear{Queyranne and Sviridenko}{Queyranne and
  Sviridenko}{2002}]%
        {queyranne20022+}
\bibfield{author}{\bibinfo{person}{Maurice Queyranne} {and}
  \bibinfo{person}{Maxim Sviridenko}.} \bibinfo{year}{2002}\natexlab{}.
\newblock \showarticletitle{A (2+ $\varepsilon$)-approximation algorithm for
  the generalized preemptive open shop problem with minsum objective}.
\newblock \bibinfo{journal}{\emph{Journal of Algorithms}} \bibinfo{volume}{45},
  \bibinfo{number}{2} (\bibinfo{year}{2002}), \bibinfo{pages}{202--212}.
\newblock


\bibitem[\protect\citeauthoryear{Rasley, Karanasos, Kandula, Fonseca, Vojnovic,
  and Rao}{Rasley et~al\mbox{.}}{2016}]%
        {rasley2016efficient}
\bibfield{author}{\bibinfo{person}{Jeff Rasley}, \bibinfo{person}{Konstantinos
  Karanasos}, \bibinfo{person}{Srikanth Kandula}, \bibinfo{person}{Rodrigo
  Fonseca}, \bibinfo{person}{Milan Vojnovic}, {and} \bibinfo{person}{Sriram
  Rao}.} \bibinfo{year}{2016}\natexlab{}.
\newblock \showarticletitle{Efficient queue management for cluster scheduling}.
  In \bibinfo{booktitle}{\emph{Proceedings of the 11th European Conference on
  Computer Systems}}. \bibinfo{pages}{36}.
\newblock


\bibitem[\protect\citeauthoryear{Remy}{Remy}{2004}]%
        {remy2004resource}
\bibfield{author}{\bibinfo{person}{Jan Remy}.} \bibinfo{year}{2004}\natexlab{}.
\newblock \showarticletitle{Resource constrained scheduling on multiple
  machines}.
\newblock \bibinfo{journal}{\emph{Inform. Process. Lett.}}
  \bibinfo{volume}{91}, \bibinfo{number}{4} (\bibinfo{year}{2004}),
  \bibinfo{pages}{177--182}.
\newblock


\bibitem[\protect\citeauthoryear{Sachdeva and Saket}{Sachdeva and
  Saket}{2013}]%
        {sachdeva2013optimal}
\bibfield{author}{\bibinfo{person}{Sushant Sachdeva} {and}
  \bibinfo{person}{Rishi Saket}.} \bibinfo{year}{2013}\natexlab{}.
\newblock \showarticletitle{Optimal inapproximability for scheduling problems
  via structural hardness for hypergraph vertex cover}. In
  \bibinfo{booktitle}{\emph{Computational Complexity (CCC), 2013 IEEE
  Conference on}}. IEEE, \bibinfo{pages}{219--229}.
\newblock


\bibitem[\protect\citeauthoryear{Scheinerman and Ullman}{Scheinerman and
  Ullman}{2011}]%
        {scheinerman2011fractional}
\bibfield{author}{\bibinfo{person}{Edward~R Scheinerman} {and}
  \bibinfo{person}{Daniel~H Ullman}.} \bibinfo{year}{2011}\natexlab{}.
\newblock \bibinfo{booktitle}{\emph{Fractional graph theory: a rational
  approach to the theory of graphs}}.
\newblock \bibinfo{publisher}{Courier Corporation}.
\newblock


\bibitem[\protect\citeauthoryear{Schulz and Skutella}{Schulz and
  Skutella}{1997}]%
        {schulz1997random}
\bibfield{author}{\bibinfo{person}{Andreas~S Schulz} {and}
  \bibinfo{person}{Martin Skutella}.} \bibinfo{year}{1997}\natexlab{}.
\newblock \showarticletitle{Random-based scheduling new approximations and LP
  lower bounds}. In \bibinfo{booktitle}{\emph{International Workshop on
  Randomization and Approximation Techniques in Computer Science}}. Springer,
  \bibinfo{pages}{119--133}.
\newblock


\bibitem[\protect\citeauthoryear{Schwarzkopf, Konwinski, Abd-El-Malek, and
  Wilkes}{Schwarzkopf et~al\mbox{.}}{2013}]%
        {schwarzkopf2013omega}
\bibfield{author}{\bibinfo{person}{Malte Schwarzkopf}, \bibinfo{person}{Andy
  Konwinski}, \bibinfo{person}{Michael Abd-El-Malek}, {and}
  \bibinfo{person}{John Wilkes}.} \bibinfo{year}{2013}\natexlab{}.
\newblock \showarticletitle{Omega: flexible, scalable schedulers for large
  compute clusters}. In \bibinfo{booktitle}{\emph{Proceedings of the 8th ACM
  European Conference on Computer Systems}}. ACM, \bibinfo{pages}{351--364}.
\newblock


\bibitem[\protect\citeauthoryear{Schwiegelshohn}{Schwiegelshohn}{2004}]%
        {schwiegelshohn2004preemptive}
\bibfield{author}{\bibinfo{person}{Uwe Schwiegelshohn}.}
  \bibinfo{year}{2004}\natexlab{}.
\newblock \showarticletitle{Preemptive weighted completion time scheduling of
  parallel jobs}.
\newblock \bibinfo{journal}{\emph{SIAM J. Comput.}} \bibinfo{volume}{33},
  \bibinfo{number}{6} (\bibinfo{year}{2004}), \bibinfo{pages}{1280--1308}.
\newblock


\bibitem[\protect\citeauthoryear{Shafiee and Ghaderi}{Shafiee and
  Ghaderi}{2018}]%
        {shafiee2017improved}
\bibfield{author}{\bibinfo{person}{Mehrnoosh Shafiee} {and}
  \bibinfo{person}{Javad Ghaderi}.} \bibinfo{year}{2018}\natexlab{}.
\newblock \showarticletitle{An improved bound for minimizing the total weighted
  completion time of coflows in datacenters}.
\newblock \bibinfo{journal}{\emph{IEEE/ACM Transactions on Networking}}
  \bibinfo{volume}{26}, \bibinfo{number}{4} (\bibinfo{year}{2018}),
  \bibinfo{pages}{1674--1687}.
\newblock


\bibitem[\protect\citeauthoryear{Verma, Pedrosa, Korupolu, Oppenheimer, Tune,
  and Wilkes}{Verma et~al\mbox{.}}{2015}]%
        {verma2015large}
\bibfield{author}{\bibinfo{person}{Abhishek Verma}, \bibinfo{person}{Luis
  Pedrosa}, \bibinfo{person}{Madhukar Korupolu}, \bibinfo{person}{David
  Oppenheimer}, \bibinfo{person}{Eric Tune}, {and} \bibinfo{person}{John
  Wilkes}.} \bibinfo{year}{2015}\natexlab{}.
\newblock \showarticletitle{Large-scale cluster management at Google with
  Borg}. In \bibinfo{booktitle}{\emph{Proceedings of the 10th European
  Conference on Computer Systems}}. \bibinfo{pages}{18}.
\newblock


\bibitem[\protect\citeauthoryear{Wang, Zhu, Ying, Tan, and Zhang}{Wang
  et~al\mbox{.}}{2016}]%
        {wang2016maptask}
\bibfield{author}{\bibinfo{person}{Weina Wang}, \bibinfo{person}{Kai Zhu},
  \bibinfo{person}{Lei Ying}, \bibinfo{person}{Jian Tan}, {and}
  \bibinfo{person}{Li Zhang}.} \bibinfo{year}{2016}\natexlab{}.
\newblock \showarticletitle{Maptask scheduling in mapreduce with data locality:
  Throughput and heavy-traffic optimality}.
\newblock \bibinfo{journal}{\emph{IEEE/ACM Transactions on Networking (TON)}}
  \bibinfo{volume}{24}, \bibinfo{number}{1} (\bibinfo{year}{2016}),
  \bibinfo{pages}{190--203}.
\newblock


\bibitem[\protect\citeauthoryear{Wilkes}{Wilkes}{2011}]%
        {clusterdata:Wilkes2011}
\bibfield{author}{\bibinfo{person}{John Wilkes}.}
  \bibinfo{year}{2011}\natexlab{}.
\newblock \bibinfo{title}{More {Google} cluster data}.
\newblock \bibinfo{howpublished}{Google research blog}.
\newblock
\newblock
\shownote{Posted at
  \url{http://googleresearch.blogspot.com/2011/11/more-google-cluster-data.html}.}


\bibitem[\protect\citeauthoryear{Yekkehkhany, Hojjati, and
  Hajiesmaili}{Yekkehkhany et~al\mbox{.}}{2018}]%
        {yekkehkhany2018gb}
\bibfield{author}{\bibinfo{person}{Ali Yekkehkhany}, \bibinfo{person}{Avesta
  Hojjati}, {and} \bibinfo{person}{Mohammad~H Hajiesmaili}.}
  \bibinfo{year}{2018}\natexlab{}.
\newblock \showarticletitle{GB-PANDAS:: Throughput and heavy-traffic optimality
  analysis for affinity scheduling}.
\newblock \bibinfo{journal}{\emph{ACM SIGMETRICS Performance Evaluation
  Review}} \bibinfo{volume}{45}, \bibinfo{number}{3} (\bibinfo{year}{2018}),
  \bibinfo{pages}{2--14}.
\newblock


\bibitem[\protect\citeauthoryear{Zaharia, Borthakur, Sen~Sarma, Elmeleegy,
  Shenker, and Stoica}{Zaharia et~al\mbox{.}}{2010}]%
        {zaharia2010delay}
\bibfield{author}{\bibinfo{person}{Matei Zaharia}, \bibinfo{person}{Dhruba
  Borthakur}, \bibinfo{person}{Joydeep Sen~Sarma}, \bibinfo{person}{Khaled
  Elmeleegy}, \bibinfo{person}{Scott Shenker}, {and} \bibinfo{person}{Ion
  Stoica}.} \bibinfo{year}{2010}\natexlab{}.
\newblock \showarticletitle{Delay scheduling: a simple technique for achieving
  locality and fairness in cluster scheduling}. In
  \bibinfo{booktitle}{\emph{Proceedings of the 5th European conference on
  Computer systems}}. ACM, \bibinfo{pages}{265--278}.
\newblock


\bibitem[\protect\citeauthoryear{Zaharia, Konwinski, Joseph, Katz, and
  Stoica}{Zaharia et~al\mbox{.}}{2008}]%
        {zaharia2008improving}
\bibfield{author}{\bibinfo{person}{Matei Zaharia}, \bibinfo{person}{Andy
  Konwinski}, \bibinfo{person}{Anthony~D Joseph}, \bibinfo{person}{Randy~H
  Katz}, {and} \bibinfo{person}{Ion Stoica}.} \bibinfo{year}{2008}\natexlab{}.
\newblock \showarticletitle{Improving MapReduce performance in heterogeneous
  environments}. In \bibinfo{booktitle}{\emph{Osdi}}, Vol.~\bibinfo{volume}{8}.
  \bibinfo{pages}{7}.
\newblock


\bibitem[\protect\citeauthoryear{Zaharia, Xin, Wendell, Das, Armbrust, Dave,
  Meng, Rosen, Venkataraman, Franklin, et~al\mbox{.}}{Zaharia
  et~al\mbox{.}}{2016}]%
        {zaharia2016apache}
\bibfield{author}{\bibinfo{person}{Matei Zaharia}, \bibinfo{person}{Reynold~S
  Xin}, \bibinfo{person}{Patrick Wendell}, \bibinfo{person}{Tathagata Das},
  \bibinfo{person}{Michael Armbrust}, \bibinfo{person}{Ankur Dave},
  \bibinfo{person}{Xiangrui Meng}, \bibinfo{person}{Josh Rosen},
  \bibinfo{person}{Shivaram Venkataraman}, \bibinfo{person}{Michael~J
  Franklin}, {et~al\mbox{.}}} \bibinfo{year}{2016}\natexlab{}.
\newblock \showarticletitle{Apache spark: a unified engine for big data
  processing}.
\newblock \bibinfo{journal}{\emph{Commun. ACM}} \bibinfo{volume}{59},
  \bibinfo{number}{11} (\bibinfo{year}{2016}), \bibinfo{pages}{56--65}.
\newblock


\end{thebibliography}

\section*{Appendix}
\appendix

\section{Complexity of Algorithms} 
\label{complexity}

The linear program (LP1) in \dref{RLP2} has at most $KNML+NL +N$  variables ($K$ is the maximum number of tasks a job has.),
 which is polynomially bounded in the problem's input size. The number of constraints is also polynomially bounded. Hence, it can be solved in polynomial time using efficient linear programming solvers.
The complexity of {\AII} is mainly determined by solving (LP1). The complexity of Slow-Motion step is very low and can be parallelized in different machines, namely, $O(KNL)$ on each machine, and $O(KNLM)$ in total. The complexity of the greedy list scheduling -- upon arrival or departure of a task fraction-- is at most the length of the list (equal to the number of incomplete task fractions which is initially equal to $O(KNLM)$) times the number of machines $M$.

Mapping procedure is the extra step for {\AIII}. The complexity of this step is also polynomially bounded in input size and is $O(K^2N^2ML)$. $O(KN+ML)$ is used for constructing the graph as there are $O(KN)$ nodes on one side (number of all the tasks), $O(KN+ML)$ on the other side (number of all machine-interval copies), and it takes $O(KNML)$ to create edges (each task has at most $2$ edges to copies of each machine-interval.). Further, finding an integral matching from the fractional matching takes $O(K^2N^2ML)$. The greedy algorithm in {\AIII} can be parallelized on the machines and takes $O(KN)$ in total.

Similarly, the complexity of {\AI} is mainly dominated by solving (LP3) to find an appropriate ordering of jobs. The relaxed linear program (LP3) has $O(N^2)$ variables and $O(N^2+MN)$ constraints and can be solved in polynomial time using efficient linear programming solvers. Note that the job ordering is the same on all the machines and they simply list-schedule their tasks respecting this ordering, independently of other machines. The complexity of the list scheduling is less than the one used in {\AIII} and is at most the length of the list, which is equal to the number of incomplete tasks. 
 
Further, we would like to emphasize that in all the algorithms the corresponding linear program (LP) is solved \textit{only once} at the beginning of the algorithm.  

For the simulations, we used Gurobi software~\cite{gurobi} to solve (LP2) and (LP3) in the simulations. On a desktop PC, with $8$ Intel CPU core $i7-4790$ processors @ $3.60$ GHz and $32.00$ GB RAM, the average time it took to solve (LP1) was $145$ seconds under offline setting. For purpose of comparison, the maximum job completion and the weighted average completion time time under our algorithm are $4.3 \times 10^4$ seconds and $8.6 \times 10^3$ seconds, respectively, for the case of priority-based weights. 
For solving (LP3), the average time it took was $435$ seconds under offline setting, while the maximum job completion time and the weighted average completion time under our algorithm are $4.8 \times 10^4$ seconds and $10^4$ seconds, respectively for the case of priority-based weights for $\alpha=2$. 
We note that solving the LPs can be done much faster using the powerful computing resources in today's datacenters.

\section{Proofs Related to \lowercase{\boldmath}{\AII}}
\label{Proof2}
\subsection{Proof of Lemma~\ref{lp3bound}}
\label{Proof2-1}
	Consider an optimal solution to the task scheduling problem with packing and synchronization constraints. Define $\hat{C}^\star_{kj}$ (similarly, $\hat {C}_{j}^\star$) to be the left point of the interval in which task $(k,j)$ (similarly, job $j$) completes in the optimal schedule. Clearly, $\hat{C}_{j}^\star \leq C^\star_j$. We set ${z_{kj}^{{il}^\star}}$ equal to the fraction of task $(k,j)$ that is scheduled in interval $l$ on machine $i$. Also, we set $x^\star_{j,l}$ to be one for the last interval that some task of job $j$ is running in the optimal schedule and to be zero for other intervals. Obviously, $\hat {C}_{j}^\star = \sum_{l=0}^L d_{l-1} x^\star_{jl}$. It is easy to see that the set of values $\hat {C}_{j}^\star$, ${z_{kj}^{il}}^\star$, and $x^\star_{j,l}$ satisfies all the constraints of (LP3). Therefore, $\sum_{j=1}^N w_j \tilde{C}_j \leq \sum_{j=1}^N w_j \hat{C}^\star_j \leq \sum_{j=1}^N w_j C^\star_j$.
	
\subsection{Proof of Lemma~\ref{intcjalpha}}
\label{Proof2-5}
	Recall that $\tau_l$ is the time that all the task fractions $(k,j,i,l^\prime)$, for $l^\prime \leq l$, complete in schedule $\calS$. Let $\alpha_l$ be the fraction of job $j$ that is completed by $\tau_l$.
	
	Note that as we schedule all the task fractions $(k,j,i,l^\prime)$, for $l^\prime \leq l$ and possibly some other task fractions, we have,
	\begin{equation}
	\label{alphadef}
	\alpha_l \geq \sum_{l^\prime=0}^l \tilde{x}_{jl^\prime}.
	\end{equation}
	We define $y_{jl}=\alpha_{l}-\alpha_{l-1}$. Note that $\sum_{l=0}^L y_{jl}=1$. Moreover, $C_j(\alpha) \leq 3 d_{l}$ for $\alpha \in (\alpha_{l-1}, \alpha_{l}]$. The factor $3$ comes from Lemma~\ref{tripledinterval}. Therefore:
	\begin{equation}
	\label{equality}
	\begin{aligned}
	\int_0^1 C_j(\alpha) d\alpha &= \sum_{l=0}^L \int_{\alpha_{l-1}}^{\alpha_{l}} C_j(\alpha) d\alpha \leq \sum_{l=0}^L (\alpha_{l}-\alpha_{l-1}) \times 3 d_{l}\\
	&\stackrel{(a)}=3 (1+\epsilon)\sum_{l=0}^L y_{jl} d_{l-1}\stackrel{(b)}\leq 3 (1+\epsilon)\sum_{l=0}^L \tilde{x}_{jl} d_{l-1}\\
	&\stackrel{(c)}=3 (1+\epsilon) \tilde{C_j},
	\end{aligned}
	\end{equation}
	where (a) follows from definitions. Inequality (b) follows from \dref{alphadef} when $y_{jl}$ and ${x}_{jl}$ is seen as probabilities. Equality (c) comes from \dref{cdef} in (LP1).

\subsection{Proof of Lemma~\ref{smbound}}
\label{Proof2-6}
	It is easy to observe that for every job $j$, $\bar{C}_j^\lambda \leq C_j(\lambda)/\lambda$. The reason is that $C_j(\lambda)$ is the time that $\lambda$ fraction of job $j$ is completed in $\calS$; therefore, in the stretched schedule $\bar{\calS}$ by factor $1/\lambda$, job $j$ is completed by time $C_j(\lambda)/\lambda$. Hence, we have
	\begin{equation*}
	\begin{aligned}
	\expect{\bar{C}_j^\lambda} &\leq \expect{C_j(\lambda)/\lambda} \stackrel{(a)} =\int_0^1 \frac{C_j(\lambda)}{\lambda} \times 2\lambda \times d\lambda\\& \stackrel{(b)} \leq 6(1+\epsilon)\tilde{C_j},
	\end{aligned}
	\end{equation*}
	where Equality (a) is by definition of expectation with respect to $\lambda$, with pdf $f(\lambda)=2\lambda$, and Equality (b) is due to Lemma~\ref{intcjalpha}.
\section{De-randomization} \label{derand} In this section, we discuss how to de-randomize the random choice of $\lambda \in (0, 1]$ in \AII, which was used to construct schedule $\bar{\calS}$ from schedule $\calS$.

Recall that from Definition~\ref{calpha}, $C_j(\lambda)$, $0 < \lambda \leq 1$, is the starting point of the earliest interval in which $\lambda$-fraction of job $j$ has been completed in schedule $\calS$, which means at least $\lambda$-fraction of each of its tasks has been completed. We first aim to show that we can find \begin{equation}
\label{lambdastar}
\lambda^\star=\argmin_{\lambda \in (0,1]} \sum_{j\in \calJ} w_jC_j(\lambda)/\lambda
\end{equation}
in polynomial time. Note that using the greedy packing algorithm, we schedule task fractions preemptively to form schedule $\calS$. 
It is easy to see that $C_j(\lambda)$ is a step function with at most $O(L)$ breakpoints, since $C_j(\lambda)=d_l$ for some $l$ and can get at most $L$ different values. Consequently, $F(\lambda)=\sum_{j\in \calJ} w_jC_j(\lambda)$ is a step function with at most $O(NL)$ breakpoints. Let $B$ denote the set of breakpoints of $F(\lambda)$. Thus, $F(\lambda)/\lambda=\sum_{j\in \calJ} w_jC_j(\lambda)/\lambda$ is a non-increasing function in intervals $(b,b^\prime]$, for $b, b^\prime$ being consecutive points in set $B$. This implies that,
\begin{equation*}
\min_{\lambda \in (0,1]}F(\lambda)/\lambda=\min_{\lambda \in (0,1]} \sum_{j\in \calJ} w_jC_j(\lambda)/\lambda=\min_{\lambda \in B} \sum_{j\in \calJ} w_jC_j(\lambda)/\lambda.
\end{equation*}
We then can conclude that we can find $\lambda^\star$ in polynomial time by checking values of function $F(\lambda)/\lambda$ in at most $O(NL)$ points of set $B$ and pick the one which incurs the minimum value. Given that, we have
\begin{equation}
\label{derandlambda}
\begin{aligned}
\sum_{j \in \calJ} w_j \bar{C}_j^{\lambda^\star} &\leq \sum_{j \in \calJ} (1+\epsilon) w_j C_j(\lambda^\star)/\lambda^\star\\
&\stackrel{(a)}\leq (1+\epsilon) \expect{\sum_{j \in \calJ} w_jC_j(\lambda)/\lambda}\\
&=(1+\epsilon)\sum_{j \in \calJ} w_j \int_{\lambda=0}^{1}\frac{C_j(\lambda)}{\lambda} 2\lambda d\lambda\\
&\stackrel{(b)}=6(1+\epsilon)\sum_{j \in \calJ} w_j\tilde{C_j},
\end{aligned}
\end{equation}
where (a) follows from~\dref{lambdastar}. Equality (b) is due to Lemma~\ref{intcjalpha}. By choosing $\lambda=\lambda^\star$ in \AII, we have a deterministic algorithm with performance guarantee of $(6+\epsilon)\times \text{OPT}$.
, as stated by the following proposition.
\section{Proofs Related to \lowercase{\boldmath} \AIII}
\label{Proof3}
\subsection{Proof of Lemma~\ref{lp4bound}}
\label{Proof3-1}
Consider an optimal solution to the non-preemptive task scheduling problem with packing and synchronization constraints. For each task, we set ${z_{kj}^{il}}^\star=1$ for the machine $i$ and interval $l$ if that task $(k,j)$ is processed on $i$ and finishes before $d_l$, and $0$ otherwise. The rest of argument is similar to the proof of Lemma~\ref{lp3bound}.

\subsection{Proof of corollary~\ref{coro}}
\label{Proof3-4}
	Note that (LP2) includes all the Constraints~\dref{xdef}--\dref{jobdemand} of (LP1). Let $\alpha_l$ be the fraction of job $j$ that is completed by interval $l$. Therefore,
	\begin{equation}
	\label{equal1}
	\alpha_l=\sum_{l^\prime=0}^l \tilde{x}_{jl^\prime}.
	\end{equation}
	Similar to Equations~\dref{equality}, we can write
	\begin{equation*}
	\begin{aligned}
	\int_0^1 \hat{C}_j(\alpha) d\alpha= \sum_{l=0}^L (\alpha_{l}-\alpha_{l-1}) \times d_{l-1}=\sum_{l=0}^L \tilde{x}_{jl} d_{l-1}=\tilde{C_j},
	\end{aligned}
	\end{equation*}	

\subsection{Proof of Lemma~\ref{intmatching}}
\label{Proof3-2}
	We use the following fundamental theorem (Theorem 2.1.3 in \cite{scheinerman2011fractional}): If there exists a fractional matching of some value $\nu$ in a bipartite graph $G$, then there exists an integral matching of the same value $\nu$
	in $G$ on the non-zero edges and can be found in polynomial time.
	
	In our constructed bipartite graph $\calG$, edge weights $w_{kj}^{ilc}$ can be seen as a fractional matching. This is because for any node $u \in U$, the sum of weights of edges that are incident to $u$ is  $1$, while for any node $v \in V$ the sum of weights of edges that are incident to $v$ is at most $1$. Recall that $|\cup_{j \in \calJ} \calK_j|=\sum_{j \in \calJ} \sum_{k \in \calK_j} \sum_{l=0}^L \bar{z}_{kj}^{il}$ is the number of total tasks. Setting $G=\calG$ and $\nu=|\cup_{j \in \calJ} \calK_j|$, an integral matching of nodes in $U$ to nodes in $V$ on non-zero edges can be found in polynomial time by the stated theorem.

\subsection{Proof of Lemma~\ref{machineintervalvol}}
\label{Proof3-3}
We now present the proof of Lemma~\ref{machineintervalvol} which bounds $V_{il}$ (the total volume of tasks matched to all copies of interval $l$ for machine $i$) by the product of the capacity of machine $i$ and the length of interval $l$.
Observe that due to definition of $v_{kj}^i$ and Constraint~\dref{intcap} we have,
\begin{equation}
\label{loadsbar}
\sum_{j \in \calJ} \sum_{k \in \calK_j} {v_{kj}^i}\bar{z}_{kj}^{il} \leq \bar{d}_l m_i,
\end{equation}
The proof idea is similar to~\cite{lenstra1990approximation} that uses a simpler version of the mapping procedure in makespan minimization problem for scheduling tasks with unit resource requirements on unrelated machines with unit capacities, where each task can be scheduled in any machine. Let $V_{il}^c$ denote the volume of the task that is matched to copy $c$ of interval $l$ on machine $i$. Thus, $V_{il}$ is equal to the sum of $V_{il}^c$ for all copies. Recall that we have $\lceil \bar{z}^{il} \rceil=\lceil \sum_{j \in J} \sum_{k \in \calK_j} \bar{z}_{kj}^{il} \rceil$ many copies of interval $l$. Let $V_{il}^{\max}$ denote the largest volume of the task that is mapped to interval $l$. For this task, we know that $\bar{z}_{kj}^{il} > 0$ because the integral matching was found on nonzero edges (line 23 in Algorithm~\ref{subroutine}); hence, $p_{kj}^i \leq d_l = \lambda \bar{d}_l \leq \bar{d}_l$ by Constraint~\dref{extraconst} and $\lambda \in (0,1]$. In addition, let $v_{il}^{\min_c}$ denote the volume of the smallest task that has an edge with non-zero weight to copy $c$ of interval $l$ in graph $\calG$ (or equivalently, has a non-zero edge in the fractional matching.). Observe that, the volume of the task that is matched to copy $c+1$ is at most $v_{il}^{\min_c}$. This is because of the way we construct graph $\calG$ by sorting tasks according to their volumes for each machine (see the ordering in~\dref{order3}) and the way we assign weights to edges. Thus,
	\begin{equation*}
	\begin{aligned}
	V_{il} &= \sum_{c=1}^{\lceil \bar{z}^{il} \rceil} V_{il}^c \leq V_{il}^{\max}+\sum_{c=2}^{\lceil \bar{z}^{il} \rceil} v_{il}^{\min_{c-1}} \\& \stackrel{(a)} \leq \bar{d}_l m_i + \sum_{j \in \calJ} \sum_{k \in \calK_j} \sum_{c=1}^{\lceil \bar{z}^{il} \rceil-1} v_{kj}^i w_{kj}^{ilc}.
	\end{aligned}
	\end{equation*}
	Inequality (a) comes from the fact that $\sum_{j \in J}\sum_{k \in \calK_j}  w_{kj}^{ilc} \leq 1$ and convex combination of some numbers is greater than the minimum number among them (note that the only copy for which we might have $\sum_{j \in J} \sum_{k \in \calK_j} w_{kj}^{ilc} < 1$ is the last copy which is not considered in the left hand side of Inequality (a)). Therefore, as the direct result of the way we constructed graph $\calG$, we have
	\begin{equation*}
	\begin{aligned}			
	V_{il} \leq \bar{d}_l m_i + \sum_{j \in \calJ} \sum_{k \in \calK_j} v_{kj}^i \bar{z}_{kj}^{il} 
	\end{aligned}
	\end{equation*}

\subsection{Proof of Lemma~\ref{doubledinterval}}
\label{Proof2-2}
Lemma~\ref{doubledinterval} ensures that we can accommodate all the task fractions mapped to machine-interval $(i,l)$ within an interval with length twice $d_l+\sum_{j \in \calJ} \sum_{k \in \calK_j} v_{kj}^i \bar{z}_{kj}^{il}/m_i$.

	Similar to Definition~\ref{machineheight}, we define $h(t)$ to be the height of the machine at time $t$. Assume that completion time of the last task, $\tau$, is larger than $2V=2\max(1,v)$, then
	\begin{equation*}
	\sum_{j \in J} a_j p_j=\int_0^\tau h(t) dt > \int_0^{2V} h(t)dt \geq \int_0^{V} (h(t)+h(t+1))dt>1+v,
	\end{equation*}
	where we have used the fact that $h(t)+h(t+1)>1$, because otherwise the greedy scheduling can move tasks from time $t+1$ to time $t$ as the greedy scheduling is non-preemptive and $p_j \leq 1$ for all tasks. Hence we arrived at a contradiction and the statement of Lemma~\ref{doubledinterval} indeed holds.

\subsection{Proof of Lemma~\ref{tight}}
\label{Proof2-3}
Let $\max(1,v)=1$. We show correctness of Lemma~\ref{tight} by constructing an instance for which an interval of size at least $2-\zeta$ is needed to be able to schedule all the tasks for any $\zeta>0$. Given a  $\zeta>0$, consider $n > log_2 (1/\zeta)+1$ tasks with processing times $1,1/2,1/4,\dots,1/2^{(n-1)}$ and size $1/2+\eta$, for some $\eta>0$ which is specified shortly. Note that we cannot place more than one of such tasks at a time on the machine, and therefore we need an interval of length $1+1/2+1/4+\dots+1/2^{(n-1)}=2-1/2^{(n-1)}>2-\zeta$ to schedule all the tasks. The total volume of tasks is equal to $(1/2+\eta)(2-1/2^{(n-1)})$ which is less than $1$, by choosing $\eta \leq 1/(2^{(n+1)}-2)$. Therefore, for any $\zeta>0$, we can construct an example for which an interval of length at least $2-\zeta$ is needed to schedule all the tasks.

\section{Proofs Related to \lowercase{\boldmath}\AI}
\label{Proof1}
This section is devoted to the proof of the Theorem~\ref{theorem1}. We first characterize the solution of the linear program (LP3). 
\begin{lemma}
	\label{ineq1}
	Let $\tilde{C}_j$ be the optimal solution to (LP3) for completion time of job $j$, as in the ordering \dref{order}. For each machine $i$ and each job $j$, $m_i \tilde{C}_j \geq \frac{1}{2} \sum_{k=1}^j v_{ik}$.
\end{lemma}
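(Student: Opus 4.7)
The goal is to bound $\sum_{k=1}^{j} v_{ik}$ in terms of $m_i \tilde{C}_j$ using only the LP constraints \dref{volume} and \dref{prec1}, together with the ordering $\tilde{C}_1 \leq \tilde{C}_2 \leq \cdots \leq \tilde{C}_N$. The natural moves — dropping the sum in \dref{volume} for $j'$ outside $\{1,\dots,j\}$, or summing the inequality over $k \in \{1,\dots,j\}$ without any weighting — give only pairwise estimates such as $2 m_i \tilde{C}_j \geq v_{ia}+v_{ib}$, which do not aggregate into the desired bound. The trick is therefore to multiply each constraint by the right weight so that the cross terms combine cleanly via \dref{prec1}.

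The plan is as follows. First, for each $k \in \{1,\dots,j\}$ write the LP constraint \dref{volume} for machine $i$,
\[
m_i \tilde{C}_k \;\geq\; v_{ik} + \sum_{k' \in \calJ,\, k' \neq k} v_{ik'} \tilde{\delta}_{k'k},
\]
and drop all terms with $k' \notin \{1,\dots,j\}$ (they are non-negative by \dref{pstv}). Multiply the resulting inequality by $v_{ik}$ and sum over $k = 1,\dots,j$, giving
\[
\sum_{k=1}^{j} v_{ik}\, m_i \tilde{C}_k \;\geq\; \sum_{k=1}^{j} v_{ik}^{2} \;+\; \sum_{k=1}^{j}\sum_{\substack{k'=1 \\ k' \neq k}}^{j} v_{ik} v_{ik'} \tilde{\delta}_{k'k}.
\]
The key combinatorial step is to group the double sum by unordered pairs $\{a,b\} \subseteq \{1,\dots,j\}$: the pair contributes $v_{ia}v_{ib}(\tilde{\delta}_{ab}+\tilde{\delta}_{ba}) = v_{ia}v_{ib}$ by \dref{prec1}. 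Hence the right-hand side equals
\[
\sum_{k=1}^{j} v_{ik}^{2} + \sum_{\{a,b\} \subseteq \{1,\dots,j\}} v_{ia} v_{ib} \;=\; \tfrac{1}{2}\sum_{k=1}^{j} v_{ik}^{2} + \tfrac{1}{2}\Bigl(\sum_{k=1}^{j} v_{ik}\Bigr)^{2} \;\geq\; \tfrac{1}{2}\Bigl(\sum_{k=1}^{j} v_{ik}\Bigr)^{2}.
\]

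The remaining step is to upper-bound the left-hand side using the ordering \dref{order}: since $\tilde{C}_k \leq \tilde{C}_j$ for every $k \leq j$, we have $\sum_{k=1}^{j} v_{ik} m_i \tilde{C}_k \leq m_i \tilde{C}_j \sum_{k=1}^{j} v_{ik}$. Combining the two inequalities and dividing by $\sum_{k=1}^{j} v_{ik}$ (the case $\sum_{k=1}^{j} v_{ik}=0$ is trivial) yields $m_i \tilde{C}_j \geq \tfrac{1}{2}\sum_{k=1}^{j} v_{ik}$, as claimed.

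The main obstacle is choosing the right multiplier: an unweighted sum of the constraints produces the cross term $v_{ia}\tilde{\delta}_{ab} + v_{ib}\tilde{\delta}_{ba}$, which cannot be collapsed via \dref{prec1} because the coefficients differ. Multiplying the $k$-th constraint by $v_{ik}$ symmetrizes the pairwise contributions into $v_{ia}v_{ib}(\tilde{\delta}_{ab}+\tilde{\delta}_{ba})$, and it is precisely this symmetrization — rather than any further inequality — that converts a linear aggregation of LP constraints into the quadratic lower bound needed for the factor $\tfrac{1}{2}$.
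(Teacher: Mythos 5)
Your proposal is correct and follows essentially the same route as the paper's proof: multiply the constraint \dref{volume} for job $k$ by $v_{ik}$, sum over $k\leq j$, collapse the cross terms via $\delta_{k'k}+\delta_{kk'}=1$ to obtain $\sum_{k=1}^j v_{ik}\, m_i\tilde{C}_k \geq \tfrac{1}{2}\bigl(\sum_{k=1}^j v_{ik}\bigr)^2$, and finish with $\tilde{C}_k\leq\tilde{C}_j$. The only differences are cosmetic (you make explicit the dropping of jobs outside $\{1,\dots,j\}$ and the trivial case $\sum_{k=1}^j v_{ik}=0$).
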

\begin{proof}
	\label{Proof1-1}
	Using Constraint~\dref{volume}, for any machine $i \in \calM$, we have
	\begin{equation*}
	\begin{aligned}
	v_{ij} m_i \tilde{C_j} \geq  v_{ij}^2+ \sum_{j^\prime \in \calJ, j^\prime \neq j } v_{ij} v_{ij^\prime} \delta_{j^\prime j}.
	\end{aligned}
	\end{equation*}
	Hence, by defining $\delta_{kk}=0$, it follows that
	\begin{equation}
	\begin{aligned}
	\label{help}
	\sum_{k=1}^j v_{ik} m_i \tilde{C_k} \geq \frac{1}{2} \bigg ( 2 \sum_{k=1}^j v_{ik}^2
	+ \sum_{k=1}^j \sum_{k^\prime=1}^j \big (v_{ik} v_{ik^\prime} \delta_{k^\prime k}+v_{ik} v_{ik^\prime} \delta_{k k^\prime} \big ) \bigg )
	\end{aligned}
	\end{equation}
	We simplify the right-hand side of \dref{help}, using Constraint~\dref{prec1}, combined with the following equality
	\begin{equation*}
	\sum_{k=1}^j v_{ik}^2 + \sum_{k=1}^j \sum_{\substack{k^\prime=1 \\ k^\prime \neq k}}^j v_{ik} v_{ik^\prime} =(\sum_{k=1}^j v_{ik})^2,
	\end{equation*}
	and get
	\begin{equation}
	\label{capcons}
	\begin{aligned}
	\sum_{k=1}^j v_{ik} m_i \tilde{C_k} \geq \frac{1}{2} \sum_{k=1}^j (v_{ik})^2 + \frac{1}{2} (\sum_{k=1}^j v_{ik})^2 \geq \frac{1}{2} (\sum_{k=1}^j v_{ik})^2.
	\end{aligned}
	\end{equation}
	Given that $\tilde{C_j} \geq \tilde{C_k}$ for $1 \leq k \leq j$, we get the final result.
\end{proof}
Let $C^\star_j$ be the completion time of job $j$ in an optimal schedule, and $\text{OPT}=\sum_{j=1}^N w_j C^\star_j$ be the optimal value of our job scheduling problem.  
The following lemma states that the optimal value of (LP3), i.e., $\sum_{j=1}^N w_j \tilde{C}_j$, is a lower bound on the optimal value $\text{OPT}$.
\begin{lemma}
	\label{bound}
	$\sum_{j=1}^N w_j \tilde{C}_j \leq \sum_{j=1}^N w_j C^\star_j = \text{OPT}$. 
\end{lemma}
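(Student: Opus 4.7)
The plan is the standard LP-relaxation lower bound argument: take an arbitrary feasible (in particular, optimal) schedule for the original scheduling problem and exhibit from it a feasible point of (LP3) whose objective value equals the total weighted completion time of that schedule. Since (LP3) minimizes, its optimum can only be at most this value, which yields the stated inequality.

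Concretely, starting from an optimal non-preemptive/preemptive schedule with completion times $C_j^\star$, I would define $\delta_{jj'}^\star = 1$ if job $j$ finishes strictly before $j'$ in the optimal schedule, and break ties arbitrarily so that exactly one of $\delta_{jj'}^\star, \delta_{j'j}^\star$ equals $1$ for each pair $j\neq j'$. Set $C_j = C_j^\star$. Then I would check each constraint of (LP3) in turn: \dref{prec1} holds by construction of the tie-breaking; \dref{pstv} is immediate; \dref{process} holds because task $(i,j)$ is entirely processed by time $C_j^\star$ and thus $C_j^\star \ge p_{ij}$ (each task needs at least its full processing time). The only substantive check is \dref{volume}.

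For \dref{volume}, the key observation is a work-conservation/packing argument on machine $i$ during the interval $(0, C_j^\star]$. In that interval, machine $i$ can perform at most $m_i C_j^\star$ units of resource-time work (this is the packing constraint (i) from Section~\ref{ProbState} integrated over time). Every task $(i, j')$ with $\delta_{j'j}^\star = 1$ finishes by time $C_j^\star$, so its full volume $v_{ij'} = a_{ij'} p_{ij'}$ is performed in that window; likewise task $(i,j)$ itself is completed by $C_j^\star$, contributing $v_{ij}$. Summing these volumes gives $v_{ij} + \sum_{j' \neq j} v_{ij'} \delta_{j'j}^\star \le m_i C_j^\star$, which is exactly \dref{volume}.

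Having verified feasibility, the pair $(C_j^\star, \delta_{jj'}^\star)$ is a feasible (even integral) solution to (LP3) with objective value $\sum_j w_j C_j^\star = \text{OPT}$. Since $\tilde{C}_j$ corresponds to an optimal (hence minimum) solution of (LP3), we obtain $\sum_j w_j \tilde{C}_j \le \sum_j w_j C_j^\star = \text{OPT}$. I do not expect any real obstacle here; the only subtlety is making sure the tie-breaking rule in the definition of $\delta_{jj'}^\star$ is consistent with constraint \dref{prec1}, and that the packing argument is applied per machine rather than globally, both of which are straightforward.
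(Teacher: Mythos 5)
Your proposal is correct and follows essentially the same route as the paper's proof: construct a feasible point of (LP3) from an optimal schedule by reading off the precedence variables and completion times, then invoke LP minimality. The only difference is that you spell out the work-conservation argument for Constraint~\dref{volume}, which the paper asserts in one line ("this solution will respect resource constraints on the machines"); your added detail is accurate.
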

\begin{proof}
	\label{Proof1-2}
	Consider an optimal preemptive solution to the task scheduling problem with packing and synchronization constraints. We set the ordering variables such that $\delta_{jj^\prime}=1$ if job $j$ precedes job $j^\prime$ in this solution, and $\delta_{jj^\prime}=0$, otherwise. We note that this set of ordering variables and job completion times satisfies Constraint~\dref{volume} since this solution will respect resource constraints on the machines. It also satisfies Constraint~\dref{process}. Therefore, the optimal solution can be converted to a feasible solution to (LP1). This implies the desired inequality.
\end{proof}

Let $C_{ij}$ and $C_j$ denote the completion time of task $(i,j)$ and the completion time of job $j$ under \AI, respectively. In the next step for the proof of Theorem~\ref{theorem1}, we aim to bound the total volume of the first $j$ jobs (according to ordering~\dref{order}) that are processed during the time interval $(0,4 \tilde{C_j}]$ and subsequently use this result to bound $C_j$. \textit{Note that the list scheduling policy used in {\AI} is similar to the one used in {\AII}}, without the extra consideration for placement of fractions corresponding to the same task on different machines. Thus, The arguments here are similar to the ones in Lemma~\ref{tripledinterval}. Nevertheless, we present them for completeness.

Let $T_{ij}$ denote the first time that all the first $j$ tasks complete under {\AI} on machine $i$. Recall that, as a result of Constraint~\dref{process} and ordering in~\dref{order}, $\tilde{C_j} \geq \tilde{C_k} \geq p_{ik}$ for all $k \leq j$ and all $i \in \calM$. Further, the height of machine $i$ at time $t$ restricted to the first $j$ jobs is denoted by $h_{ij}(t)$ and defined as the height of machine $i$ at time $t$ when only considering the first $j$ jobs according to the ordering~\dref{order}. We have the following lemma.
\begin{lemma}
	\label{volumebound}
	Consider any interval $(\calT_1,\calT_2]$ for which $\calT_2-\calT_1=2\tilde{C_j}$ and suppose $\calT_2 < T_{ij}$ for some machine $i$. Then 
	\begin{equation} \label{eq:sumh}
	\sum_{t=\calT_1+1}^{\calT_2} h_{ij}(t) > m_i \tilde{C_j}
	\end{equation}
\end{lemma}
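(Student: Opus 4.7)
Since $\calT_2 < T_{ij}$, at least one first-$j$ task on machine $i$ is unfinished at time $\calT_2$; let $(i,k^*)$ be an uncompleted first-$j$ task of minimum size and write $a^* = a_{ik^*}$. My plan is to produce a pointwise lower bound on $h_{ij}(t)$ at each $t \in (\calT_1, \calT_2]$ using $(i,k^*)$ as a witness, and then sum, splitting according to whether $a^* \le m_i/2$ or $a^* > m_i/2$.

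The key pointwise fact is as follows. Because $(i,k^*)$ is uncompleted throughout the interval, it remains in the list at every such $t$; hence either (i) $(i,k^*)$ runs at $t$, forcing $h_{ij}(t) \ge a^*$, or (ii) it is pending at $t$, in which case the list-scheduling rule of \AI\ implies that at the most recent re-scan the tasks of index strictly less than $k^*$ that were already packed occupied total size $H$ with $H + a^* > m_i$. Since $k^* \le j$, every index $< k^*$ is itself a first-$j$ index, so $H \le h_{ij}(t)$ and hence $h_{ij}(t) > m_i - a^*$ strictly. Moreover, because the total running time of $(i,k^*)$ is at most $p_{ik^*} \le \tilde{C}_{k^*} \le \tilde{C}_j$ by Constraint~\dref{process} and the ordering~\dref{order}, and because $(i,k^*)$ is still pending at $\calT_2$, the number $r$ of slots inside $(\calT_1, \calT_2]$ on which $(i,k^*)$ runs satisfies $r \le \tilde{C}_j - 1$.

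In Case (a), $a^* \le m_i/2$, summing the two pointwise bounds gives
\begin{equation*}
\sum_{t=\calT_1+1}^{\calT_2} h_{ij}(t) \;>\; r\, a^* + (2\tilde{C}_j - r)(m_i - a^*) \;=\; 2\tilde{C}_j(m_i - a^*) + r(2a^* - m_i),
\end{equation*}
where the strictness is inherited from the strict inequality available in each of the at least $\tilde{C}_j + 1$ not-running slots. Because $2a^* - m_i \le 0$ and $r \le \tilde{C}_j$, the right-hand side is bounded below by $\tilde{C}_j(2a^* - m_i) + 2\tilde{C}_j(m_i - a^*) = m_i \tilde{C}_j$, delivering the desired strict inequality. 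In Case (b), $a^* > m_i/2$, every uncompleted first-$j$ task has size strictly greater than $m_i/2$, so at most one such task can be scheduled on machine $i$ simultaneously; at every $t$ the smallest-index uncompleted first-$j$ task heads the list, fits alone, and is therefore scheduled, so $h_{ij}(t) > m_i/2$. Summing over the $2\tilde{C}_j$ slots then yields $\sum h_{ij}(t) > 2\tilde{C}_j \cdot m_i/2 = m_i \tilde{C}_j$.

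The main obstacle I anticipate is the careful justification of the ``$H \le h_{ij}(t)$'' step in branch (ii): one must argue that the tasks packed before $(i,k^*)$ during the greedy scan are exactly those of index $< k^*$ currently running on machine $i$, and that no task of index $> j$ can displace a first-$j$ task under the list priority. Once this invariant is cleanly established, the rest of the argument reduces to the elementary arithmetic above and to the separate, simpler analysis of Case (b).
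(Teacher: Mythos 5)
Your argument takes a genuinely different route from the paper. The paper pairs time slots: it builds a bipartite graph between the first half $(\calT_1,\calT_1+\tilde{C}_j]$ and the second half of the interval, puts an edge between $s$ and $t$ whenever some first-$j$ task runs at $t$ but not at $s$ (which forces $h_{ij}(s)+h_{ij}(t)>m_i$), and invokes Hall's theorem to extract a perfect matching, summing $h_{ij}(s)+h_{ij}(t)>m_i$ over $\tilde{C}_j$ matched pairs. You instead fix a single witness task $(i,k^\ast)$ and derive a pointwise dichotomy $h_{ij}(t)\geq a^\ast$ or $h_{ij}(t)>m_i-a^\ast$, then average. Your Case (a) ($a^\ast\leq m_i/2$) is a valid and arguably more elementary argument: the invariant $H\leq h_{ij}(t)$ holds because every task preceding $(i,k^\ast)$ in the list has index $<k^\ast\leq j$, and the arithmetic $(r-\tilde{C}_j)(2a^\ast-m_i)\geq 0$ goes through since $r\leq\tilde{C}_j$ and $2a^\ast\leq m_i$.

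Case (b), however, has a genuine gap. The quantity $a^\ast$ is the minimum size over first-$j$ tasks uncompleted \emph{at time $\calT_2$}, but your claim ``every uncompleted first-$j$ task has size strictly greater than $m_i/2$'' is applied at each $t\in(\calT_1,\calT_2]$, where the set of uncompleted first-$j$ tasks is larger and may contain small tasks that finish before $\calT_2$. The smallest-index uncompleted task that ``heads the list'' at such a $t$ can therefore have size well below $m_i/2$, and the asserted pointwise bound $h_{ij}(t)>m_i/2$ can simply be false. Concretely, take $m_i=1$, eleven unit-length tasks of size $0.2$ followed by one unit-length task of size $0.9$: the greedy schedule packs five small tasks at a time for two slots (height $1.0$), then runs the single leftover small task alone at height $0.2$ (it blocks the size-$0.9$ task since $0.2+0.9>1$), then runs the large task. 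In the third slot the height is $0.2<m_i/2$ even though the only task uncompleted at $\calT_2$ has size $0.9$. The lemma still holds there because the early slots are packed to height $1.0$ --- information your single-witness dichotomy discards: with $a^\ast$ close to $m_i$ and $r$ close to $\tilde{C}_j$, your lower bound degrades to roughly $(\tilde{C}_j-1)m_i<m_i\tilde{C}_j$. This is exactly the regime the paper's matching argument is built to handle (a low-height slot $s$ forces every task running at a later slot to already be running at $s$, which is then chained into a contradiction with $p_{ik}\leq\tilde{C}_j$), so Case (b) needs to be replaced by an argument of that pairing type rather than patched.
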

\begin{proof}[Proof of Lemma~\ref{volumebound}]
	\label{Proof1-3}
	Without loss of generality, consider interval $(0,2\tilde{C_j}]$ and assume $T_{ij} > 2\tilde{C_j}$.
	Let $S_{ij}(\tau)$ denote the set of tasks $(i,k)$, $k \leq j$ (according to ordering ~\dref{order}), running at time $\tau$ on machine $i$. We construct a bipartite graph $G=(U \cup V, E)$ as follows. For each time slot $\tau \in \{1,\dots,2\tilde{C_j}\}$ we consider a node $z_\tau$, and define $U= \{z_\tau| 1 \leq \tau \leq \tilde{C_j}\}$, and $V=\{z_\tau| \tilde{C_j}+1 \leq \tau \leq 2\tilde{C_j}\}$. For any $z_{s} \in U $ and $z_{t} \in V$, we add an edge $(z_s,z_t)$ if $S_{ij}(t) \setminus S_{ij}(s) \neq \varnothing$, i.e., there is a task $(i,k)$, $k \leq j$, running at time $t$ that is not running at time $s$. Note that existence of edge $(z_{s}, z_{t})$ implies that $h_{ij}(s)+h_{ij}(t) > m_i$, because otherwise {\AI} would have scheduled the task(s) in $S_{ij}(t) \setminus S_{ij}(s)$ (those that are running at $t$ but not at $s$) at time $s$. 
	
	Next, we show that a perfect matching of nodes in $U$ to nodes in $V$ always exists in $G$. The existence of perfect matching then implies that any time slot $s\in (0,\tilde{C_j}]$ can be matched to a time slot $t \in (\tilde{C_j},2 \tilde{C_j}]$ (one to one matching) and $h_{ij}(s)+h_{ij}(t) > m_i$. To prove that such a perfect matching always exists, we use Hall's Theorem~\cite{hall1935representatives}. For any set of nodes $\tilde{U} \subseteq U$, we define set of its neighbor nodes as $N_{\tilde{U}}=\{z_t \in V | \exists \ z_s \in \tilde{U}: (z_s,z_t) \in E\}$. 
	Hall's Theorem states that a perfect matching exists if and only if for any $\tilde{U} \subseteq U$ we have $|\tilde{U}| \leq |N_{\tilde{U}}|$, where $|\cdot|$ denotes set cardinality (size). 
	To arrive at a contradiction, suppose there is a (non-empty) set of nodes $\tilde{U} \subseteq U$ such that $|\tilde{U}| > |N_{\tilde{U}}|$. This implies that for a node $z_{t_1}$ in $V$ but not in the neighbor set of $\tilde{U}$, i.e., $z_{t_1} \in V \setminus N_{\tilde{U}}$, we should have  
	\begin{equation}
	\label{eq1}
	S_{ij}(t_1) \setminus S_{ij}(s) = \varnothing,
	\end{equation}
	for all $s, z_s \in \tilde{U}$. We now consider two cases:
	
	\textbf{Case (i)}: $|V \setminus N_{\tilde{U}}|=1$, which means $|N_{\tilde{U}}|=\tilde{C}_j-1$. But we had assumed $|\tilde{U}| > |N_{\tilde{U}}|$, thus $|\tilde{U}|=\tilde{C_j}$ and $\tilde{U}=U$. This implies that the tasks that are running at time $t_1$, are also running in the entire interval $(0,\tilde{C_j}]$; therefore, the processing time of each of them is at least $\tilde{C_j}+1$ which contradicts the fact that $\tilde{C_j} \geq p_{ik}$ for all jobs $k \leq j$, by Constraint~\dref{process} and ordering in~\dref{order}.
	
	\textbf{ Case (ii)}: $|V \setminus N_{\tilde{U}}| > 1$. In addition to the previous node $z_{t_1}$, consider another node $z_{t_2} \in V \setminus N_{\tilde{U}}$, and without loss of generality, assume $t_1<t_2$. Similarly to \dref{eq1}, it holds that 
	\begin{equation}
	\label{eq11}
	S_{ij}(t_2) \setminus S_{ij}(s) = \varnothing,
	\end{equation}
	for all $s, z_s \in \tilde{U}$. We claim that $S_{ij}(t_2) \subseteq S_{ij}(t_1)$, otherwise {\AI} would have moved some task $(i,k)$ running at $t_2$ and not at $t_1$ to time $t_1$ without violating machine $i$'s capacity. This is feasible because, in view of \dref{eq1} and \dref{eq11}, $(S_{ij}(t_1) \cup (i,k)) \setminus S_{ij}(s) = \varnothing$ for all $s, z_s \in \tilde{U}$. This implies that {\AI} has scheduled all tasks of the set $S_{ij}(t_1) \cup (i,k)$ simultaneously at some time slot $s\in (0,\tilde{C}_j]$, which in turn implies that adding task $(i,k)$ to time $t_1$ is indeed feasible (the total resource requirement of the tasks won't exceed $m_i$). Repeating the same argument for the sequence of nodes $z_{t_1}$, $z_{t_2}$, $\dots$, $z_{t_{|V \setminus N_{\tilde{U}}|}}$, where $t_1 < t_2 < \dots < t_{|V \setminus N_{\tilde{U}}|}$, we conclude that there exists a task that is running at all the times $t$, $z_t \in V \setminus N_{\tilde{U}}$, and at all the times $s, z_s \in \tilde{U}$. Therefore, its processing time is at least $\tilde{C_j}-|N_{\tilde{U}}|+|\tilde{U}|$ which is greater than $\tilde{C_j}$ by our assumption of $|\tilde{U}| > |N_{\tilde{U}}|$. This is a contradiction with the fact that $ p_{ik} \leq \tilde{C_j}$ for all $k \leq j$ by Constraint~\dref{process} and ordering~\dref{order}.
	
	Hence, we conclude that conditions of Hall's Theorem hold and a perfect matching in the constructed graph exists. As we argued, if $z_{s} \in U $ is matched to $z_{t} \in V$, we have $h_{ij}(s)+h_{ij}(t) > m_i$. Hence it follows that $\sum_{t=1}^{2\tilde{C_j}} h_{ij}(t) > m_i \tilde{C_j}$.
\end{proof}
Now we are ready to complete the proof of Theorem~\ref{theorem1} regarding the performance of \AI.
\begin{proof}[Proof of Theorem~\ref{theorem1}]
	Recall that $C_{ij}$ and $C_j$ denote completion time of task $(i,j)$ and completion time of job $j$ under \AI, respectively. Also, $T_{ij}$ denotes the first time that all the first $j$ tasks are completed under {\AI} on machine $i$. Therefore, $C_{ij} \leq T_{ij}$, by definition. 
	
	Define $i_j$ to be the machine for which $C_j=C_{i_jj}$. If $T_{ij} \leq 4 \tilde{C_j}$ for all machines $i \in \calM$ and all jobs $j \in \calJ$, we can then argue that $\sum_{j=1}^N w_j C_j \leq 4 \times \text{OPT}$, because
	\begin{equation*}
	\begin{aligned}
	\sum_{j=1}^N w_j C_j = \sum_{j=1}^N w_j C_{i_jj} \leq \sum_{j=1}^N w_j T_{i_jj}\stackrel{(a)} \leq 4 \sum_{j=1}^N w_j \tilde{C_j}\stackrel{(b)} \leq 4 \sum_{j=1}^N w_j C^\star_j,
	\end{aligned}
	\end{equation*}
	where Inequality (a) follows from our assumption that $T_{ij} \leq 4 \tilde{C_j}$, and Inequality (b) follows from Lemma~\ref{bound}. 	
	
	Now to arrive at a contradiction, suppose $T_{ij} > 4 \tilde{C_j}$ for some machine $i$ and job $j$. We then have,
	\begin{equation}
	\label{eq2}
	\begin{aligned}
	\sum_{k=1}^j v_{ik} &= \sum_{t=1}^{T_{ij}} h_{ij}(t) \stackrel{(c)} > \sum_{t=1}^{2\tilde{C_j}} h_{ij}(t)+ \sum_{t=1}^{2\tilde{C_j}} h_{ij}(t+2\tilde{C_j})\\
	&\stackrel{(d)} > m_i \tilde{C_j} + m_i \tilde{C_j}=2 m_i \tilde{C_j},
	\end{aligned}
	\end{equation}
	where Inequality (c) is due to the assumption that $T_{ij} > 4 \tilde{C_j}$, and Inequality (d) follows by applying Lemma~\ref{volumebound} twice, once for interval $(0,2\tilde{C_j}]$ and once for interval $(2\tilde{C_j},4\tilde{C_j}]$. But \dref{eq2} contradicts Lemma~\ref{ineq1}. Hence, $\sum_{j=1}^N w_j C_j \leq 4 \times \text{OPT}$.
\end{proof}

\section{Pseudocodes of  ($4$)-Approximation Algorithm}
\label{pseudocode2}
Algorithm~\ref{Alg1} provides a pseudocode for  {\AI}, our preemptive $4$-approximation algorithm, described in Section~\ref{specialcase}. The algorithm is a simple list scheduling based on the ordering obtained from (LP3).
\begin{algorithm}[h]
	\caption{Preemptive Scheduling Algorithm \AI}
	\label{Alg1}
	\begin{algorithmic}
		\State Given a set of machines $\calM = \{1, . . . , M\}$, a set of jobs $\calJ =\{1, . . . , N\}$, and weights $w_j$, $j \in \calJ$:
		\begin{algorithmic}[1]
			\State Solve (LP1) and denote its optimal solution by $\{\tilde{C_j}; j \in \calJ\}$.
			\State Order and re-index jobs such that $
			\tilde{C}_1 \leq \tilde{C}_2 \leq ... \leq \tilde{C}_N.$
			\State On each machine $i \in \calM$, apply list scheduling as described below:
			\While {There is some incomplete task,}
			\State List the incomplete tasks respecting the ordering in line 2. Let $Q$ be the total number of tasks in the list. Denote current time by $t$, and set $h_i(t)=0$ 
			\For {$q=1$ to $Q$}
			\State Denote the $q$-th task in the list by $(i,j_q)$
			\If {$h_i(t)+a_{ij_q} \leq m_i$,}
			\State Schedule task $(i,j_q)$.
			\State Update $h_i(t) \leftarrow h_i(t)+a_{ij_q}$.
			\EndIf
			\EndFor
			\State Process the tasks that were scheduled in line 9 until a task completes.
			\EndWhile
		\end{algorithmic}
	\end{algorithmic}
\end{algorithm}

\section{Pseudocodes of  ($6+\epsilon$)-Approximation Algorithm}
\label{pseudocode1}
A pseudocode for our preemptive ($6+\epsilon$)-approximation algorithm {\AII} described in Section~\ref{preemptivemigration} is given in Algorithm~\ref{Alg3}. Line 1 in Algorithm~\ref{Alg3} corresponds to Step 1 in Section~\ref{preemptivemigration}, lines 2-18 correspond to Step 2, construction of schedule $\calS$, and lines 19-20 describe Slow-Motion and construction of schedule $\bar{\calS}$ in Step 3.

\begin{algorithm}[H]
	\caption{Preemptive Scheduling Algorithm {\AII}}
	\label{Alg3}
	\begin{algorithmic}
		\State Given a set of machines $\calM = \{1, . . . , M\}$, a set of jobs $\calJ =\{1, . . . , N\}$, and weights $w_j$, $j \in \calJ$:
		\begin{algorithmic}[1]
			\State Solve (LP1) and denote its optimal solution by $\{\tilde{z}_{kj}^{il};\ j \in \calJ, \ k \in \calK_j, \ i \in \calM,\ l \in \{0,1,\dots,L\}\}$.
			\State List non-zero task fractions (i.e., tasks  $(k,j)$ with size $a_{ij}$ and non-zero fractional duration $\tilde{z}_{kj}^{il}p_{kj}^i$) such that task fraction $(k,j,i,l)$ appears before task fraction $(k^\prime, j^\prime,i^\prime,l^\prime)$, if $l<l^\prime$. Task fractions within each interval ando corresponding to different machines are ordered arbitrarily.
			\While {There is some unscheduled task fraction,}
			\State List the unscheduled task fractions. Let $Q$ be the total number of task fractions in the list and $l^\star$ be the interval with minimum value that has some unscheduled task fractions..
			\State  Denote current time by $t$, and set $h_i(t)$ to be the height of machine $i$ at $t$.
			\For {$q=1$ to $Q$}
			\State Denote the $q$-th task (whose fraction need to get scheduled) in the list by $(k_q,j_q)$.
			\If {$h_i(t)+a_{k_qj_q} \leq m_i$ and no fraction of task $(k_q,j_q)$ is running in any  other machine,}
			\State Schedule task $(k_q,j_q)$ to run on machine $i$.
			\State Update $h_i(t) \leftarrow h_i(t)+a_{k_qj_q}$.
			\EndIf
			\EndFor
			\State Process the task fractions that were scheduled in line 12 until a task fraction completes.
			\State Preempt scheduling of all the task fractions $(k,j,i,l)$ with $l>l^\star$.
			\State Update $\tilde{z}_{kj}^{il} \leftarrow \tilde{z}_{kj}^{il}-\tau/p_{kj}^i$, where $\tau$ is the amount of time it gets processed.
			\If {$\tilde{z}_{kj}^{il}=0$}
			\State Remove task $(k_q,j_q)$ from the list of machine $i$, and update $Q \leftarrow Q-1$.		
			\EndIf
			\EndWhile
			\State Denote the obtained schedule by $\calS$. Choose $\lambda$ randomly from $(0,1]$ with pdf $f(\lambda) = 2\lambda$.
			\State Construct schedule $\bar{\calS}$ by applying Slow-Motion with parameter $\lambda$ to $\calS$.
			Process jobs according to $\bar{\calS}$.
		\end{algorithmic}
	\end{algorithmic}
\end{algorithm}

\newpage
\section{Pseudocode of $24$-Approximation Algorithm}
\label{pseudocode3}
Algorithm~\ref{Alg4} provides a pseudocode for our non-preemptive algorithm, {\AIII}, described in Section~\ref{nomigration}. Line 1 in Algorithm~\ref{Alg4} corresponds to Step 1 in {\AIII} and lines 2 corresponds to Step 2, namely, construction of preemptive schedule and applying Slow-Motion. Lines 3-11 describes the procedure of constructing a non-preemptive schedule using $\bar{\calS}$ in Step 3.

Algorithm~\ref{subroutine} describes the mapping procedure which is used as a subroutine in Algorithm~\ref{Alg4}.

\begin{algorithm}[h]
	\caption{Non-Preemptive Scheduling Algorithm \AIII}
	\label{Alg4}
	\begin{algorithmic}
		\State Given a set of machines $\calM = \{1, . . . , M\}$, a set of jobs $\calJ =\{1, . . . , N\}$, and weights $w_j$, $j \in \calJ$:
		\begin{algorithmic}[1]
			\State Solve (LP2) and denote its optimal solution by $\{\tilde{z}_{kj}^{il},\ j \in \calJ, \ k \in \calK_j, \ i \in \calM,\ 0 \leq l \leq L\}$.
			\State Apply Slow-Motion by choosing $\lambda$ randomly from $(0,1]$ with pdf $f(\lambda) = 2\lambda$, and define $\bar{z}_{kj}^{il}$, as in \dref{zbardef}.
			\State Run Algorithm~\ref{subroutine} and output list of tasks that are mapped to each machine-interval $(i,l)$, $i \in \calM, \ l \leq L$.
			\For {Each machine $i \in \calM$,}
			\For {Each interval $l$, $0 \leq l \leq L$,}
			\State List the unscheduled task fractions. Let $Q$ be the total number of task fractions in the list.
				\While {There is some unscheduled task fraction,}
				\State Denote current time by $t$, and set $h_i(t)$ to be the height of machine $i$ at $t$.
				\For {$q=1$ to $Q$}
				\State Denote the $q$-th task in the list by $(k_q,j_q)$
				\If {$h_i(t)+a_{k_qj_q} \leq m_i$,}
				\State Schedule task $(k_q,j_q)$.
				\State Update $h_i(t) \leftarrow h_i(t)+a_{k_qj_q}$.
				\State Update $Q \leftarrow Q-1$.
				\EndIf
				\EndFor
				\State Process the task fractions that were scheduled in line 12 until a task fraction is complete.
				\EndWhile
			\EndFor
			\EndFor
		\end{algorithmic}
	\end{algorithmic}
\end{algorithm}

\newpage
\begin{algorithm}[h]
	\caption{Procedure of Mapping Tasks to Intervals}
	\label{subroutine}
	\begin{algorithmic}
		\State Given a set of jobs $\calJ =\{1, . . . , N\}$, with task volumes $v_{kj}^i$ on machine $i$, and values of $\bar{z}_{kj}^{il}$:
		\begin{algorithmic}[1]
			
			\State Construct bipartite graph $\calG_i=(U \cup V, E)$ as follows:
			\State For each task $(k,j)$, $j \in J, \ k \in \calK_j$, add a node in $U$.
			\For {Each machine $i$, $i \in \calM$,}
			\State Order and re-index tasks such that:
			$
			v_{k_1j_1}^i \geq v_{k_2j_2}^i \geq \dots v_{k_{N_i}j_{N_i}}^i >0.
			$
			\For {Each interval $l$, $l \leq L$,}
			\State Consider $\lceil \bar{z}^{il} \rceil=\lceil \sum_{j \in J} \sum_{k \in \calK_j} \bar{z}_{kj}^{il} \rceil$ consecutive nodes in $V_i$, and set $W_l^{ic_l}=0$ for $1 \leq c_l \leq \lceil \bar{z}^{il} \rceil$. Also set $c_l=1$.
			\For {$q=1$ to $N_i$}
			\State $R=\bar{z}_{kj}^{il}$,
			\While {$R \neq 0$}
			\State Add an edge between the node $(k_q,j_q)$ in set $U$ and node $c_l\in V_i$.
			\State Assign weight $w_{kj}^{ilc}=\min\{R,1-W_l^{c_l}\}$.
			\State Update $R \leftarrow R-w_{kj}^{ilc}$.
			\State Update $W_l^{c_l} \leftarrow W_l^{c_l}+w_{kj}^{ilc}$
			\If {$W_l^{c_l}=1$}
			\State $c_l=c_l+1$.
			\EndIf
			\EndWhile
			\EndFor
			\EndFor
			\EndFor
			\State Set $V=\cup_{i \in \calM} V_i$.
			
			\State Find an integral matching in $\calG$ on the nonzero edges with value $|\cup_{j \in \calJ} \calK_j|=\sum_{j \in \calJ} \sum_{k \in \calK_j} \sum_{l=0}^L \bar{z}_{kj}^{il}$.
		\end{algorithmic}
	\end{algorithmic}
\end{algorithm}

\end{document}